    \newtheorem{theorem}{Theorem}
    \newtheorem{lemma}[theorem]{Lemma}
    \newtheorem{proposition}[theorem]{Proposition}
    \newtheorem{definition}[theorem]{Definition}
    \newtheorem{corollary}[theorem]{Corollary}
    \newtheorem{claim}[theorem]{Claim}
    \newenvironment{claimproof}[1][\proofname]{
        \pushQED{\qed}%
        \normalfont %
        \trivlist
        \item\relax{\itshape#1{.}}\hspace\labelsep\ignorespaces
    }{%
        
        \popQED\endtrivlist%
        
    }
\setlist[enumerate]{nosep,topsep=0.1em}
\setlist[enumerate,1]{label=(\roman*), leftmargin=2.2em}
\setlist[itemize]{nosep,topsep=0.3em}
\newcommand{\appendtographicspath}[1]{%
  \g@addto@macro\Ginput@path{#1}%
}
\crefname{theorem}{Theorem}{Theorems}
\Crefname{lemma}{Lemma}{Lemmas}
\Crefname{claim}{Claim}{Claims}
\Crefname{fact}{Fact}{Facts}
\Crefname{remark}{Remark}{Remarks}
\Crefname{observation}{Observation}{Observations}
\Crefname{line}{Line}{Lines}
\Crefname{algocf}{Algorithm}{Algorithms}
\definecolor{darkblue}{rgb}{0,0,0.38}
\definecolor{darkred}{rgb}{0.8,0,0}
\definecolor{darkgreen}{rgb}{0.1,0.35,0}
\newcommand{\OPT}{\ensuremath{\mathrm{OPT}}}
\renewcommand{\epsilon}{\varepsilon}
\def\Ascr{\mathcal{A}}
\def\Cscr{\mathcal{C}}
\def\Iscr{\mathcal{I}}
\def\Jscr{\mathcal{J}}
\def\Lscr{\mathcal{L}}
\def\Wscr{\mathcal{W}}
 \newcommand\apxfac{\ensuremath{1.29}\xspace}
\title{Better-Than-\texorpdfstring{$\frac{4}{3}$}{4/3}-Approximations for Leaf-to-Leaf Tree and Connectivity Augmentation}
\titlerunning{Leaf-to-Leaf Tree and Connectivity Augmentation} 
\author{Anonymous}{ }{}{}{}
\authorrunning{Anonymous} %
\keywords{approximation algorithms, connectivity augmentation, combinatorial optimization}
\title{Better-Than-\texorpdfstring{$\frac{4}{3}$}{4/3}-Approximations for Leaf-to-Leaf Tree and Connectivity Augmentation%
\thanks{This project received funding from Swiss National Science Foundation grant 200021\_184622 and the European Research Council (ERC) under the European Union's Horizon 2020 research and innovation programme (grant agreement No 817750).}
}
\author{
Federica Cecchetto\thanks{
Department of Mathematics, ETH Zurich, Zurich, Switzerland.
Email: \href{mailto:federica.cecchetto@ifor.math.ethz.ch}%
{federica.cecchetto@ifor.math.ethz.ch}.
}
\and
Vera Traub\thanks{
Department of Mathematics, ETH Zurich, Zurich, Switzerland.
Email: \href{mailto:vera.traub@ifor.math.ethz.ch}%
{vera.traub@ifor.math.ethz.ch}.
}
\and
Rico Zenklusen\thanks{
Department of Mathematics, ETH Zurich, Zurich, Switzerland.
Email: \href{mailto:ricoz@ethz.ch}%
{ricoz@ethz.ch}.}
}
\date{}
\begin{document}

\maketitle

\ifbool{icalp}{  
}{
\thispagestyle{empty}
\addtocounter{page}{-1}

\begin{tikzpicture}[overlay, remember picture, shift = {(current page.south east)}]
\begin{scope}[shift={(-1.5,1.8)}]
\def\hd{1.5}
\node at (-2*\hd,0) {\includegraphics[height=0.35cm]{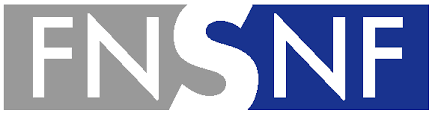}};
\node at (-\hd,0) {\includegraphics[height=0.7cm]{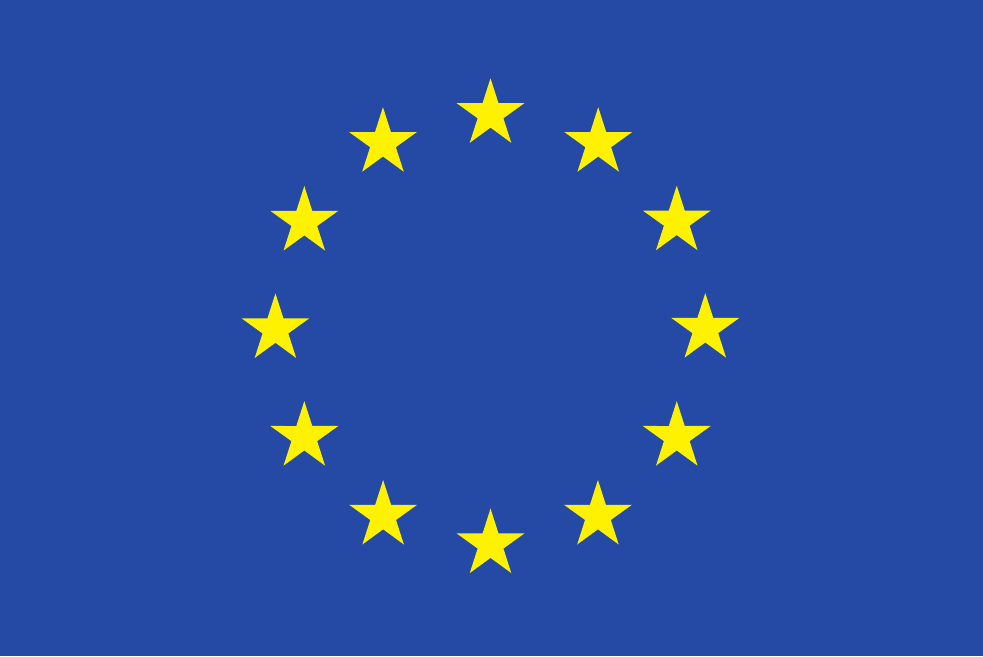}};
\node at (-0.2*\hd,0) {\includegraphics[height=0.8cm]{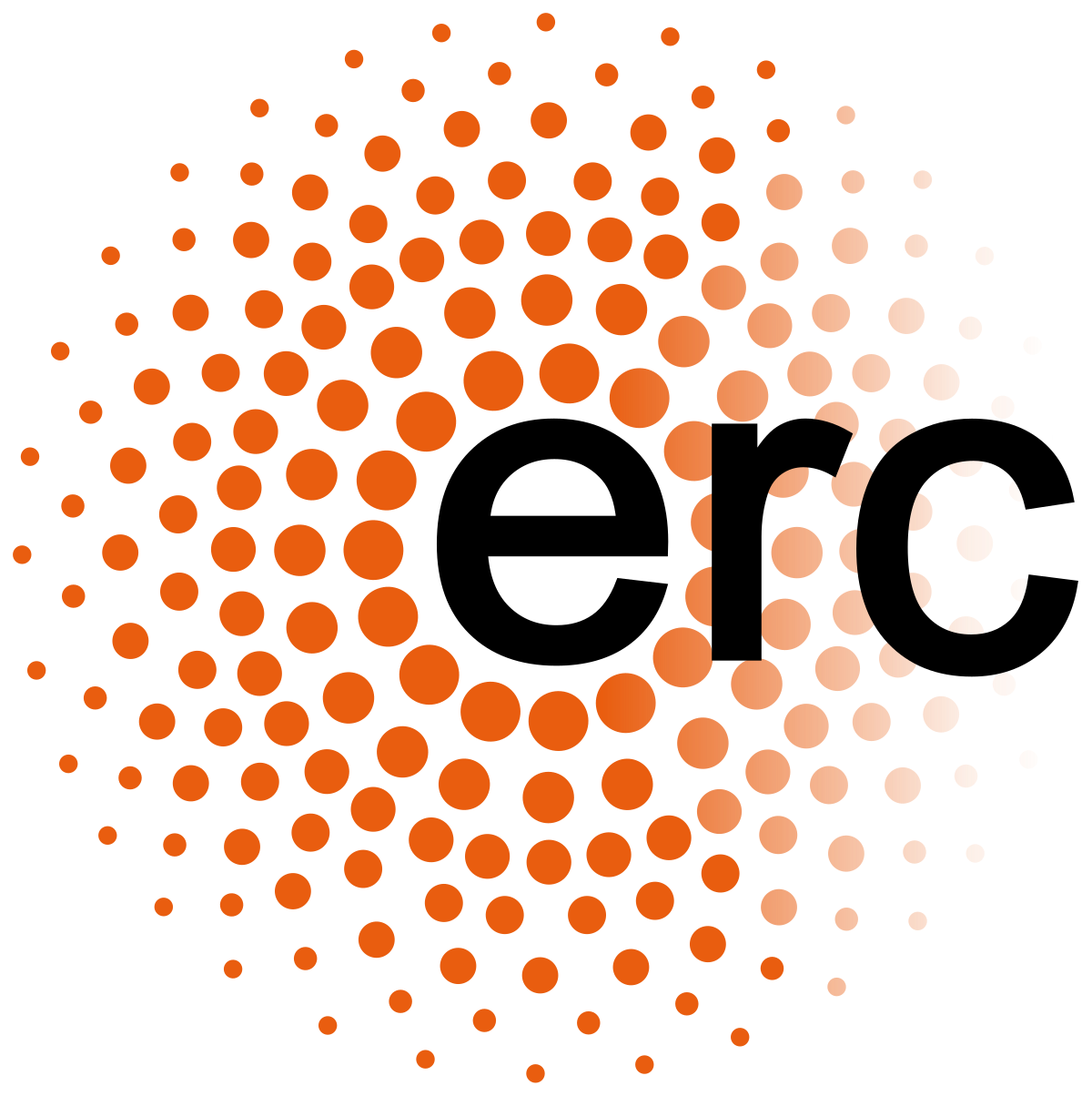}};
\end{scope}
\end{tikzpicture}
}

\begin{abstract}
The Connectivity Augmentation Problem (CAP) together with a well-known special case thereof known as the Tree Augmentation Problem (TAP) are among the most basic Network Design problems.
There has been a surge of interest recently to find approximation algorithms with guarantees below $2$ for both TAP and CAP, culminating in the currently best approximation factor for both problems of $1.393$ through quite sophisticated techniques.

We present a new and arguably simple matching-based method for the well-known special case of leaf-to-leaf instances.
Combining our work with prior techniques, we readily obtain  a $(\sfrac{4}{3}+\epsilon)$-approximation for Leaf-to-Leaf CAP by returning the better of our solution and one of an existing method.
Prior to our work, a $\sfrac{4}{3}$-guarantee was only known for Leaf-to-Leaf TAP instances on trees of height $2$.
Moreover, when combining our technique with a recently introduced stack analysis approach, which is part of the above-mentioned $1.393$-approximation, we can further improve the approximation factor to $1.29$, obtaining for the first time a factor below $\frac{4}{3}$ for a nontrivial class of TAP/CAP instances.
\end{abstract}

\section{Introduction}\label{sec:intro}

The Connectivity Augmentation Problem (CAP) is one of the most elementary Network Design problems.
It asks to increase the edge-connectivity of a graph by one unit in the most economical way by adding edges/links from a given set.
Formally, one is given a graph $G=(V,E)$ and an additional link set $L\subseteq \begin{psmallmatrix} V\\ 2 \end{psmallmatrix}$, and the task is to determine a smallest size set of links $U\subseteq L$ such that the edge-connectivity of $(V,E\cup U)$ is strictly larger than that of $G$.
A famous special case of CAP is the Tree Augmentation Problem (TAP), where the given graph $G$ is a spanning tree.
Already TAP is well-known to be $\APX$-hard (see~\cite{kortsarz_2004_hardness}, which presents an extension of a construction used in~\cite{frederickson_1981_approximation} to prove $\NP$-hardness), even on trees of diameter $5$ with all links going between pairs of leaves, i.e., \emph{leaf-to-leaf instances}.
This motivated the search for strong constant-factor approximations.
There has been extensive work during the last decades on the approximability of both TAP and CAP, and also their weighted counterparts where each link has a cost and instead of minimizing the size of $U$ the goal is to minimize its cost.

For CAP (and therefore also TAP), multiple $2$-approximations have been known for a long time, including through classical techniques like primal-dual algorithms and iterative rounding (see~\cite{frederickson_1981_approximation,khuller_1993_approximation,goemans_1994_improved,jain_2001_factor}).
It was an important stepping stone to reach algorithms with an approximation guarantee below $2$.
During the last decades, a long line of research led to multiple approaches that beat the approximation factor $2$ for TAP and CAP through the introduction of a rich set of techniques%
~\cite{%
adjiashvili_2018_beating,%
cheriyan_2018_approximating_a,%
cheriyan_2018_approximating_b,%
cheriyan_2008_integrality,%
cohen_2013_approximation,%
even_2009_approximation,%
fiorini_2018_approximating,%
frederickson_1981_approximation,%
grandoni_2018_improved,%
khuller_1993_approximation,%
kortsarz_2016_simplified,%
kortsarz_2018_lp-relaxations,%
nagamochi_2003_approximation,%
nutov_2020,%
grandoni_2018_improved,%
cecchetto_2021_bridging%
}.
This led to the current state-of-the art approximation factor of $1.393$~\cite{cecchetto_2021_bridging}, which is currently the best one for both TAP and CAP.
The factor of $1.393$ was obtained through a combination of quite sophisticated techniques, and the obtained factor is neither natural nor is it likely to be the ``right'' answer, i.e., it seems very likely that approximation algorithms with better approximation guarantees should exist.

A natural question we are interested in, is whether there may be a clean algorithm leading to a $\sfrac{4}{3}$-approximation.
This factor appears in other, related Network Design problems, in particular in the (unweighted) $2$-Edge-Connected Spanning Subgraph problem ($2$-ECCS).
In $2$-ECCS, one starts with an empty graph $G$ and the task is to pick a smallest number of links to obtain a $2$-edge-connected graph spanning all vertices.
Hence, instead of starting from a spanning tree to obtain a $2$-edge-connected graph as in TAP, one starts with an empty graph.
Recent advances on $2$-ECCS led to the best-known approximation factor of $\sfrac{4}{3}$~\cite{sebo_2014_shorter,hunkenschroder_2019_approximation}.
In the context of TAP and CAP, we still lack appropriate techniques to reach such factors, and progress along this line has only been achieved for quite restricted special cases.
More precisely, for TAP, a factor of $\sfrac{4}{3}$ is known to be achievable if we are given an optimal solution to the natural LP relaxation, known as the \emph{cut-LP}, that has the additional property of being half-integral~\cite{cheriyan_1992_2-coverings} or, more generally, fulfills that each non-zero entry is at least $\sfrac{1}{2}$~\cite{iglesias_2018_coloring}. 
However, the \emph{cut-LP} is in general not a half-integral LP~\cite{cheriyan_1992_2-coverings} and it may not contain any optimal point where each non-zero is at least $\sfrac{1}{2}$.
Moreover, an approach for TAP was presented in~\cite{maduel_2010_covering} that leads to a $\sfrac{4}{3}$-approximation for Leaf-to-Leaf TAP instances on trees of height at most $2$.
Finally, for CAP, we are unaware of any nontrivial class of instances where such factors, or even factors below the currently best $1.393$-approximation, are known.
Note that, even for TAP, natural special cases for which approximation factors below $\sfrac{4}{3}$ can be achieved are unknown to the best of our knowledge.

The goal of this paper is to make first progress in this regard for the case of Leaf-to-Leaf CAP (and therefore also TAP) instances.
(We formally define Leaf-to-Leaf CAP instances in Section~\ref{sec:preliminaries} and show their relation to Leaf-to-Leaf TAP and also Leaf-to-Leaf Cactus Augmentation, which is a well-known connection on which we heavily rely later on.)
We think of Leaf-to-Leaf CAP instances as an appealing class because of the following reasons.
First, they comprise a large family of nontrivial TAP/CAP instances, which have already been studied both in the context of TAP~\cite{maduel_2010_covering} and CAP~\cite{nutov_2021_approximation}.
Second, the best known hardness results for TAP/CAP are based on leaf-to-leaf instances.

We also note that, for the weighted settings of TAP/CAP, any instance can be reduced in an approximation-preserving way to a weighted leaf-to-leaf one; hence, the difference between leaf-to-leaf and general instances vanishes in the weighted settings.%
\footnote{Consider a general weighted TAP instance $G=(V,E)$ with links $L\subseteq \begin{psmallmatrix} V\\ 2\end{psmallmatrix}$ and weights $w:L\to \mathbb{R}_{\geq 0}$.
To transform it into a leaf-to-leaf instance, one can do the following operation for each vertex $v$:
(i) Add two new vertices $w^1_v, w^2_v$ to the graph and connect each of them only to $v$; hence, $w^1_v$ and $w^2_v$ are leaves;
(ii) For each link $\ell\in L$ that has $v$ as an endpoint, replace the endpoint $v$ by $w^1_v$;
(iii) Add a new link $\{w^1_v,w^2_v\}$ of cost $0$.
One can easily observe that the original instance is equivalent to the obtained leaf-to-leaf one, because, after including the newly added $0$-cost links, one falls back to the original instance.
An analogous construction works to reduce weighted CAP to weighted Leaf-to-Leaf CAP.
}

\subsection{Preliminaries}\label{sec:preliminaries}

Consider a CAP instance $(G=(V,E),L)$ on a graph $G$ that is $k$-edge-connected (but not $(k+1)$-edge-connected).
Hence, the goal is to add a smallest number of links $U\subseteq L$ such that $(V,E\cup U)$ is $(k+1)$-edge-connected.
By Menger's Theorem, a set $U\subseteq L$ leads to a $(k+1)$-connected graph $(V,E\cup U)$ if and only if each minimum cut in $G$ is crossed by at least one link of $U$.
$(G,L)$ is a \emph{Leaf-to-Leaf CAP} instance if, for each link $\{u,v\}\in L$, both $u$ and $v$ are contained in a minimal minimum cut. 
The \emph{minimal minimum cuts} in $G$ are minimum cuts $A\subseteq V$ for which no other minimum cut $B \subseteq V$ satisfies $B\subseteq A$.
Classic uncrossing results imply that the minimal minimum cuts of a graph form a family of disjoint sets.
Note that in case of a tree, these cuts are singleton cuts only containing a single leaf vertex.
Thus, a leaf-to-leaf instance for a tree indeed maps to only having links with both endpoints being leaves.

For CAP, it is often significantly more convenient to first use a well-known reduction to the \emph{Cactus Augmentation Problem} (CacAP), which is the special case of CAP where the underlying graph is a cactus, i.e., it is a connected graph with each edge being contained in a unique cycle.
See Figure~\ref{fig:LLCacAP_example} for an example.
\begin{figure}[h!]
\begin{center}
\begin{tikzpicture}[%
scale=0.55,
lks/.style={line width=1.3pt, blue, densely dashed},
ts/.style={every node/.append style={font=\scriptsize}},
ls/.style={every node/.append style={rectangle, fill=black!10}},
]

\begin{scope}[every node/.style={thick,draw=black,fill=white,circle,minimum size=6, inner sep=2pt}]

\begin{scope}[ls]
\node  (4) at (16.70,-7.96) {};
\node  (6) at (19.98,-8.18) {};
\node  (8) at (23.24,-5.32) {};
\node  (9) at (21.72,-4.38) {};
\node (10) at (22.50,-7.60) {};
\node (11) at (24.70,-7.04) {};
\node (14) at (10.76,-6.44) {};
\node (15) at (13.28,-7.18) {};
\node (16) at (14.80,-7.72) {};
\node (18) at (17.64,-2.90) {};
\node (19) at (23.26,-4.04) {};
\node (20) at (24.28,-3.00) {};
\node (22) at (13.66,-2.54) {};
\node (23) at (11.30,-3.68) {};
\node (24) at (13.52,-4.28) {};
\end{scope}

\node  (1) at (17.06,-4.94) {};
\node  (2) at (17.46,-6.32) {};
\node  (3) at (19.62,-4.74) {};
\node  (5) at (18.56,-7.16) {};
\node  (7) at (20.92,-6.34) {};
\node (12) at (12.36,-5.38) {};
\node (13) at (15.52,-5.84) {};
\node (17) at (20.14,-3.28) {};
\node (21) at (15.78,-3.66) {};
\end{scope}

\begin{scope}[very thick]
\draw[bend left=10] (21) to  (1);
\draw (18) --  (1);
\draw (13) --  (1);
\draw  (1) --  (2);
\draw  (2) --  (3);
\draw  (1) -- (12);
\draw  (1) -- (17);
\draw[bend left=10]  (1) to (21);
\draw  (3) --  (1);
\draw  (9) --  (3);

\draw  (2) to[bend left=10] (4);
\draw  (2) to[bend right=10] (4);

\draw  (3) to[bend left=10] (5);
\draw  (3) to[bend right=10] (5);

\draw  (5) to[bend left=10]  (6);
\draw  (5) to[bend right=10] (6);

\draw  (3) --  (7);
\draw  (7) --  (8);
\draw  (8) --  (9);
\draw  (7) -- (10);
\draw (10) -- (11);
\draw (11) --  (7);
\draw (14) to[bend left=10] (12);
\draw (12) to[bend left=10] (14);
\draw (16) -- (13);
\draw (12) -- (13);
\draw (13) -- (15);
\draw (15) -- (16);
\draw (17) -- (18);

\draw (17) to[bend left=5] (19);
\draw (19) to[bend left=5] (17);

\draw (17) to[bend left=5] (20);
\draw (20) to[bend left=5] (17);
\draw (24) -- (21);
\draw (21) -- (22);
\draw (22) -- (23);
\draw (23) -- (24);
\end{scope}

\begin{scope}[lks]
\draw (14) to (15);
\draw (16) to[bend right=20] (15);
\draw (4) to[bend right=20] (6);
\draw (14) to[bend left] (24);
\draw (22) to[bend left=10] (20);
\draw (23) to[bend left=10] (18);
\draw (10) to[bend right=10] (9);
\draw (10) to[bend right=10] (8);
\draw (19) to[bend right] (20);
\draw (19) to[bend left] (11);
\draw (8) to (19);
\draw (4) to[bend right] (15);
\draw (6) to[bend right] (11);
\draw (18) to[bend right=10] (9);
\draw (24) to[bend left=10] (18);
\draw (4) to[bend left=10] (16);

\end{scope}

\begin{scope}[shift={(27,-4.5)}]%
\def\ll{30mm} %
\def\vs{12mm} %

\node[right] at (0,0) {$G=(V,E)$};
\node[right,blue] at (0,-\vs) {$L\subseteq \binom{V}{2}$};
\end{scope}

\end{tikzpicture}
 \end{center}
\caption{The black solid graph $G=(V,E)$ is a cactus.
Its vertices of degree $2$ are called \emph{leaves} or \emph{terminals} and are depicted as gray squares.
Together with the dashed edges, which represent the links $L$ and only go between leaves, we obtain a Leaf-to-Leaf CacAP instance.
}\label{fig:LLCacAP_example}
\end{figure}
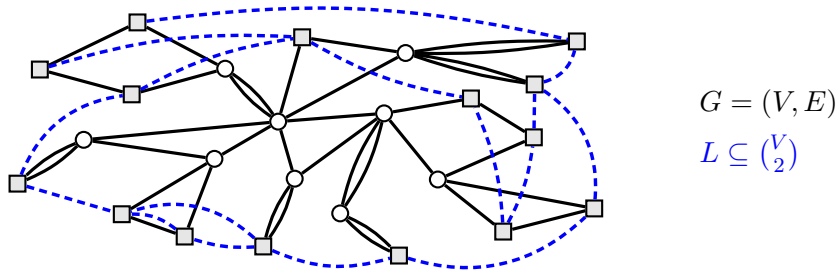
This reduction from CAP to CacAP is approximation preserving and an immediate consequence of the fact that the minimum cuts in a graph can be represented by a cactus (see~\cite{dinitz_1976_structure}).
Vertices of degree $2$ in a cactus are also called \emph{terminals} or \emph{leaves} and the reduction from CAP to CacAP implies that Leaf-to-Leaf CAP instances are transformed into Leaf-to-Leaf CacAP instances.
(See Figure~\ref{fig:LLCacAP_example} for an example of a Leaf-to-Leaf CacAP instance.)
Due to this equivalence, we therefore focus on the more structured Leaf-to-Leaf CacAP instances.

Note that any Leaf-to-Leaf TAP instance $(G=(V,E),L)$ can easily be cast as a Leaf-to-Leaf CacAP instance by adding for each edge $e\in E$ a parallel edge.

\subsection{Our results}\label{sec:ourResults}

Our main result is the following.
\begin{restatable}{theorem}{mainthm}\label{thm:main}
There is a $1.29$-approximation algorithm for Leaf-to-Leaf CAP (and therefore also Leaf-to-Leaf TAP).
\end{restatable}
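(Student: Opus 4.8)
By the approximation-preserving reduction from CAP to CacAP recalled in Section~\ref{sec:preliminaries}, it suffices to give a $1.29$-approximation for Leaf-to-Leaf CacAP. Fix such an instance with cactus $C$, terminal set $R$ of size $n:=|R|$, and link set $L\subseteq\binom{R}{2}$. The starting observation is the lower bound $\OPT\ge n/2$: every terminal is a (singleton) minimal minimum cut of $C$, so any feasible solution must contain a link incident to it, and every link is incident to exactly two terminals. Write $\OPT=\tfrac{n}{2}+t$ with \emph{excess} $t\ge 0$. The plan is to construct two algorithms whose guarantees degrade in opposite regimes of $t$, return the better solution to obtain a $(\tfrac43+\epsilon)$-approximation, and then sharpen the analysis via the stack technique of~\cite{cecchetto_2021_bridging} to reach $1.29$.

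\textbf{The matching-based algorithm.}
First I would compute a link set $M\subseteq L$ that is an edge cover of $R$ (hence $|M|\le n/2$ when a perfect matching on $R$ exists in $L$, and close to $n/2$ in general) and that, among all such edge covers, crosses as many of the remaining cactus cuts as possible. The heart of the argument is a structural claim: after contracting all cuts already crossed by $M$, the still-uncovered cuts form a small, laminar/tree-structured residual CacAP instance whose own optimum --- equivalently the number of links one must add to $M$ to reach feasibility --- is $O(t)$. This would be proved by uncrossing minimum cuts and using that a fixed optimum already crosses every cut with only $t$ links beyond the terminal bound, so the matching can be chosen to ``agree'' with enough of it. Consequently this algorithm returns a solution of size at most $\tfrac{n}{2}+c\,t$ for an absolute constant $c$; against $\OPT=\tfrac{n}{2}+t$ its ratio is close to $1$ when $t$ is small and degrades gracefully as $t$ grows.

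\textbf{The complementary algorithm and the crossover.}
In the opposite regime, where $t$ is a constant fraction of $n$ (equivalently $\OPT$ large relative to $n$), I would invoke an existing below-$2$ approximation for CacAP --- the relative-greedy/local-search machinery underlying the $1.393$-approximation of~\cite{cecchetto_2021_bridging} --- and re-examine its guarantee on leaf-to-leaf instances, where the abundance of terminals provides extra slack, so that it already returns a solution of size at most $\tfrac43\OPT$ once $\OPT\ge\tfrac34 n$. Balancing the two regimes at the crossover $\OPT=\tfrac34 n$ yields the $(\tfrac43+\epsilon)$-approximation by returning the better of the two solutions. To push below $\tfrac43$, I would replace the crude scalar counting in the matching analysis by the \emph{stack analysis} of~\cite{cecchetto_2021_bridging}: instead of charging the extra links to $t$, one maintains a stack that records, cut by cut, how the matching-based solution interacts with a hypothetical optimum; this simultaneously tightens the matching-based bound and shifts the threshold at which the complementary method takes over, and optimizing the resulting trade-off gives the factor $1.29<\tfrac43$.

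\textbf{Main obstacle.}
The technically hardest part is twofold. First, establishing that the matching-based solution's defect is bounded by a constant times the excess $t$: this needs both the existence of a near-perfect edge cover that crosses sufficiently many cuts and a guarantee that the residual instance cannot blow up, which is a delicate uncrossing/structural argument about cactus cuts. Second, adapting the stack analysis --- originally tailored to a greedy/local-search template --- so that it applies to a solution produced by matching; one must re-derive its charging invariants in the new setting. It is this re-derivation, rather than the reduction to CacAP or the terminal lower bound, that I expect to carry the weight of the proof.

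\mainthm*
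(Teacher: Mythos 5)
Your high-level architecture (matching-based algorithm plus a complementary algorithm, then a stack-analysis refinement) superficially matches the paper, but the central structural claim your plan rests on is false, and it is refuted by an example in the paper itself. You parameterize by the excess $t=\OPT-\frac{n}{2}$ and claim that a (near-)minimum edge cover of the terminals, chosen to cross as many cuts as possible, can be completed to a feasible solution with only $O(t)$ extra links. Consider the instance of \cref{fig:badMatching}: there the terminals admit a \emph{unique} perfect matching in $L$, so your $M$ is forced, yet completing it requires $\Omega(n)$ additional links while $t=1$; widening the example drives the ratio of your algorithm towards $2$, not towards $1+O(t/n)$. The paper's way around exactly this obstacle is what your proposal is missing: it forbids \emph{bad links} in the matching, weights cross-links and in-links differently ($1$ versus $\frac{1}{2}$), and completes the matching via an integral directed-cut LP, analyzed through a laminar-family/duality argument. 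The resulting guarantee is not $\frac{n}{2}+O(t)$ but $|F|\le |H|+\frac{1}{2}|H_{\mathrm{in}}|$ for every solution $H$ (\cref{thm:1/2inlinks}), i.e.\ it is parameterized by the number of \emph{in-links} of an optimum relative to a root, a quantity incomparable to your excess $t$ (it can be $\Theta(\OPT)$ even when $t=0$).

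The complementary half of your argument has the same problem: the claim that the $1.393$-machinery already gives $\frac{4}{3}\OPT$ whenever $\OPT\ge\frac{3}{4}n$ is unsubstantiated, and it is not how the complementarity works in the paper. There, one first reduces (in a leaf-to-leaf-preserving way, which requires the new ``leaf-to-leaf+'' machinery of \cref{sec:reduction}) to $O(1)$-wide instances, and then solves each principal subcactus optimally to get $|F_2|\le|H|+|H_{\mathrm{cross}}|$ (\cref{lem:fpt_in_nb_leaves}); the two bounds are complementary in terms of in- versus cross-links, which is what yields $\frac{4}{3}$, and your excess-based crossover at $\OPT=\frac{3}{4}n$ has no analogue of this. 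Finally, the improvement to $1.29$ is not obtained by re-deriving the stack invariants for a matching-produced solution: the paper keeps the matching-based solution $F_1$ as is, adds the matching bound as a \emph{constraint} to an LP over $P^{\mathrm{min}}_{\mathrm{bundle}}(G,L)$ (which needs root-shadow completion and the existence of weakly $L_{\mathrm{cross}}$-minimal optima, \cref{lem:L-cross-minimality}), rounds that LP with the stack analysis to get $F_2$, and optimizes the trade-off between the two guarantees in the cross-link fraction $\alpha$. As written, your proposal would not survive the counterexample above, so the route to even $\frac{4}{3}+\epsilon$, let alone $1.29$, is not established.
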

In the context of leaf-to-leaf instances, this improves on the $1.393$-approximation of~\cite{cecchetto_2021_bridging} (which also works for CAP instances that are not leaf-to-leaf) and also improves on (and is applicable to a much broader set of instances than) the $\frac{4}{3}$-approximation for Leaf-to-Leaf TAP on trees of height $2$ of~\cite{maduel_2010_covering}.

Our main technical contribution, which is the central ingredient of our approach, is an arguably elegant technique to find a good CAP solution by first computing a maximum weight matching over the links with respect to judiciously chosen weights.
This matching is then complemented through a simple LP to an actual CAP solution.
We provide a detailed discussion of our matching-based approach in \cref{sec:matching_approach}.

We highlight that~\cite{maduel_2010_covering}, for the special case of TAP, also used an approach based on first computing a matching and extending it to a solution.
Their approach to extend the matching to a solution uses a credit-based argument, whereas our matching-based approach relies on an LP.

\subsection{Brief overview of main components}\label{sec:overview}

Similar to prior approaches in the field, our matching-based approach provides a guarantee that can be expressed in terms of different link types.
To this end, given a CacAP instance $(G=(V,E),L)$, we can fix an arbitrary root $r\in V$ and define link types with respect to this root as follows.
A link $\{u,v\}\in L$ is called \emph{in-link}, if both $u$ and $v$ lie in the same connected component of $G-r$, where $G-r \coloneqq G[V\setminus \{r\}]$ is the subgraph of $G$ induced by $V\setminus \{r\}$, i.e., the graph obtained from $G$ by removing the root and all edges incident with it.
All other links are called \emph{cross links}.
(See \cref{fig:cross_in_ex}.) We denote the sets of all in-links and cross-links by $L_{\mathrm{in}}$ and $L_{\mathrm{cross}}$, respectively, and for
any link set $U\subseteq L$, we use the shorthands $U_{\mathrm{in}}\coloneqq U\cap L_{\mathrm{in}}$ and $U_{\mathrm{cross}}\coloneqq U \cap L_{\mathrm{cross}}$.
\begin{figure}[!ht]
\begin{center}
\begin{tikzpicture}[%
scale=0.45,
lks/.style={line width=1.3pt, blue, densely dashed},
ns/.style={every node/.append style={thick,draw=black,fill=white,circle,minimum size=6, inner sep=2pt}},
term/.style={rectangle, fill=black!10},
ls/.style={every node/.append style={term}},
crossl/.append style={lks, red, densely dotted},
inl/.append style={lks, blue, densely dashed},
upl/.append style={green!70!black},
]

\pgfdeclarelayer{bg}
\pgfdeclarelayer{fg}
\pgfsetlayers{bg,main,fg}

\begin{scope}

\begin{scope}[ns]
\node  (1) at (18.68,-1.64) {};
\node  (2) at (16.80,-5.50) {};
\node  (3) at (19.08,-5.16) {};
\node[term]  (4) at (16.66,-9.60) {};
\node  (5) at (18.56,-7.16) {};
\node[term]  (6) at (18.24,-8.74) {};
\node  (7) at (20.44,-6.76) {};
\node[term]  (8) at (22.28,-5.68) {};
\node[term]  (9) at (21.00,-3.58) {};
\node[term] (10) at (20.40,-9.36) {};
\node[term] (11) at (22.36,-9.32) {};

\node (12) at (12.24,-4.60) {};
\node (13) at (14.20,-4.64) {};
\node[term] (14) at (11.24,-6.32) {};
\node[term] (15) at (12.68,-7.72) {};
\node[term] (16) at (14.80,-7.42) {};
\node (17) at (24.88,-4.72) {};
\node[term] (18) at (27.60,-4.64) {};
\node[term] (19) at (24.44,-7.64) {};
\node[term] (20) at (26.80,-7.76) {};
\node (21) at (8.04,-4.56) {};
\node[term] (22) at (6.16,-5.36) {};
\node[term] (23) at (6.08,-7.28) {};
\node[term] (24) at (8.00,-6.56) {};
\end{scope}

\node at (1)[above=2pt] {$r$};

\begin{scope}[very thick]

\draw  (1) --  (2);
\draw  (2) --  (3);
\draw  (3) --  (1);
\draw  (9) --  (3);
\draw  (2) to[bend left=15] (4);
\draw  (4) to[bend left=15] (2);
\draw  (3) to[bend left=25] (5);
\draw  (5) to[bend left=25] (3);
\draw  (5) to[bend left=25] (6);
\draw  (6) to[bend left=25] (5);
\draw  (3) --  (7);
\draw  (7) --  (8);
\draw  (8) --  (9);
\draw  (7) -- (10);
\draw (10) -- (11);
\draw (11) --  (7);

\begin{scope}
\draw  (1) to[bend left=2] (21);
\draw  (1) -- (12);
\draw  (1) -- (17);
\draw (21) to[bend left=2]  (1);
\draw (18) --  (1);
\draw (13) --  (1);
\draw (12) to[bend left=15] (14);
\draw (14) to[bend left=15] (12);
\draw (16) -- (13);
\draw (12) -- (13);
\draw (13) -- (15);
\draw (15) -- (16);
\draw (17) -- (18);
\draw (17) to[bend left=13] (19);
\draw (19) to[bend left=13] (17);
\draw (17) to[bend left=13] (20);
\draw (20) to[bend left=13] (17);
\draw (24) -- (21);
\draw (21) -- (22);
\draw (22) -- (23);
\draw (23) -- (24);
\end{scope}

\end{scope}

\end{scope}

\begin{scope}

\begin{scope}[lks]

\begin{scope}[crossl]
\draw (4) to[bend left=20] (24);
\draw (6) to (15);
\draw (8) to (19);
\draw (9) to (18);
\draw (14) to (24);
\draw (11) to (19);
\draw (10) to[bend left=25] (23);
\end{scope}

\begin{scope}[inl]
\draw (4) to[bend right] (11);
\draw (6) to (10);
\draw (8) to[bend right] (9);
\draw (19) to (20);
\draw (14) to (15);
\draw (8) to (11);
\draw (18) to (20);
\draw (15) to[bend left] (16);
\draw (22) to (24);
\end{scope}

\end{scope}
\end{scope}%

\begin{scope}[shift={(29,-3)}]%
\def\ll{30mm} %
\def\vs{12mm} %

\begin{scope}

\coordinate (rl) at (0,0) {};
\coordinate (bl) at (0,-\vs) {};

\draw[crossl] (rl) to ++(1,0) node[right] {cross-links};
\draw[inl] (bl) -- ++(1,0) node[right] {in-links};

\end{scope}
\end{scope}

\end{tikzpicture}
 \end{center}
\caption{A Leaf-to-Leaf CacAP instance with root $r$ on top.
The leaves are drawn as gray squares.
The dashed lines represent the links, which can be partitioned into cross-links (dotted in red) and in-links (dashed in blue).}\label{fig:cross_in_ex}
\end{figure}
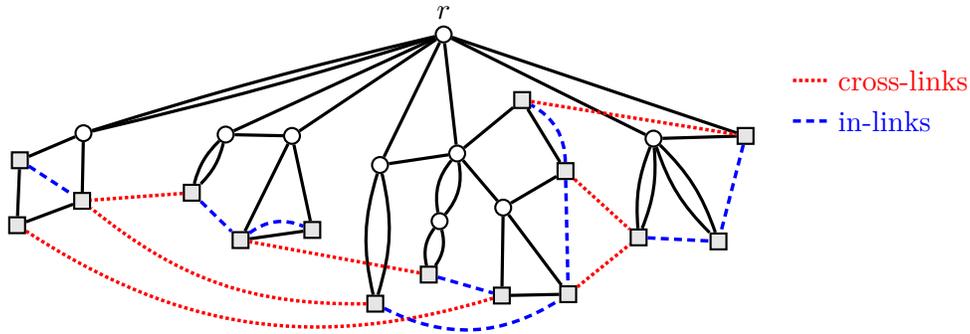

Our matching-based approach leads to a combinatorial procedure that returns a solution with the guarantee stated below.
\begin{restatable}{theorem}{theoreminlinks}\label{thm:1/2inlinks}
For any Leaf-to-Leaf CacAP instance $(G, L)$, we can efficiently compute a solution $F \subseteq L$ such that $|F| \leq |H| + \frac{1}{2}|H_{\mathrm{in}}|$ for any solution $H$ of the instance.
\end{restatable}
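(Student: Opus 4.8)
The plan is to construct $F$ in two stages: first pick a maximum-weight matching $M$ among the links with respect to a carefully chosen weight function, and then extend $M$ to a feasible CacAP solution by solving a small LP (or covering problem) over the remaining, uncovered min-cuts. The guarantee $|F| \le |H| + \tfrac12|H_{\mathrm{in}}|$ should then follow by charging: cross-links in $H$ get charged at rate $1$, in-links at rate $\tfrac32$, and we show the cost of our two-stage solution never exceeds the resulting bound. Throughout I would work in the cactus picture, using the root $r$ and the leaf/terminal structure from \cref{sec:overview}; the reason leaf-to-leaf is essential is that every link has both endpoints at degree-$2$ vertices, so the cuts a link can cover form a very restricted (laminar/path-like) family, which is what makes the matching step meaningful.

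First I would set up the combinatorial structure: fix the root $r$, and for each terminal (leaf) record the ``spine'' of cuts it must help cover. For in-links $\{u,v\}$ lying in one component of $G-r$, the link covers exactly the cuts on the tree-path between $u$ and $v$ in the natural hierarchy; for cross-links it additionally covers a ``root-side'' portion. The key observation to extract here is that a \emph{matching} $M$ on the links already covers a predictable collection of cuts, and crucially that any single link covering an in-link's worth of cuts can be ``replaced,'' at a controlled extra cost of $\tfrac12$ per in-link, by matched/extended pieces. I would then define link weights so that a maximum-weight matching prefers exactly the links that are hardest to cover cheaply otherwise — the weights should be chosen so that the LP/extension step costs at most (number of leaves not matched)$/1$ plus a correction, while the matching value lower-bounds the savings from $H$.

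Next, the analysis. Given an optimal solution $H$ to compare against, I would show: (i) $H$ itself induces a fractional matching (or a bounded-degree subgraph) whose weight is at least some linear function of $|H_{\mathrm{cross}}|$ and $|H_{\mathrm{in}}|$, so $w(M) \ge w(\text{that structure})$ by optimality of $M$; and (ii) the LP used to complete $M$ to a feasible $F$ has an optimal (integral, by total unimodularity / laminar-constraint structure) value bounded by $|H| - w(M) + \tfrac12|H_{\mathrm{in}}|$, because $H$ minus the cuts already covered by $M$ is itself a feasible completion. Adding $|M|$ and the completion cost and using the weight choice to cancel terms should yield $|F| \le |H| + \tfrac12 |H_{\mathrm{in}}|$. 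I would need the complement-LP to be integral; here the leaf-to-leaf cactus reduces the uncovered-cut covering problem to something like interval covering on a tree, so an extreme point is integral.

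The main obstacle I expect is choosing the link weights so that \emph{both} inequalities (i) and (ii) hold simultaneously with the exact constants $1$ and $\tfrac32$ — in particular, making the ``$H$ restricted to uncovered cuts is a valid completion of cost $\le |H|-w(M)+\tfrac12|H_{\mathrm{in}}|$'' step tight requires the weight of each link $\ell$ to equal, more or less, the number of cuts that $\ell$ covers \emph{that would otherwise force a separate link in the completion}, with an extra $\tfrac12$ bonus precisely on in-links. Getting this accounting right — especially handling links of $H$ that are partially redundant with $M$, and cross-links whose root-side portion interacts with several components — is where the real work lies; the matching and LP machinery themselves are standard once the weights are pinned down.
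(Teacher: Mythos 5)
Your overall architecture (weighted matching on the leaves, then an LP-based completion of the uncovered $2$-cuts) matches the paper's, but the proposal is missing the central new idea that makes the completion bound work: the notion of \emph{bad links} and the requirement that the matching contain none of them. A link $\{u,v\}$ is bad if there is a $2$-cut $C$ with $T_C\subseteq\{u,v\}\subseteq C$, i.e.\ every link covering $C$ is incident to $u$ or $v$. If the matching may contain bad links, no choice of weights rescues the scheme: the paper exhibits an instance (Figure~\ref{fig:badMatching}) where the unique maximum matching forces a completion of cost nearly $2\cdot|\OPT|$. With bad links excluded, the paper proves the completion bound $|F|\le |M|+\tfrac12|M_{\mathrm{in}}|+(|T|-2|M|)$ \emph{unconditionally in $M$}, not by comparing to $H$ as in your step~(ii): one writes down an explicit fractional solution of the directed cut LP, placing value $1$ on a ``maximal'' directed link entering each unmatched leaf and value $\tfrac12$ on a maximal link entering each matched in-link pair; the no-bad-link property is exactly what guarantees that every uncovered cut either meets an unmatched leaf, a cross-link head, or \emph{two distinct} matched in-link pairs, so it receives total value $\ge 1$. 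Your claim that ``$H$ restricted to uncovered cuts is a feasible completion of cost $\le |H|-w(M)+\tfrac12|H_{\mathrm{in}}|$'' has no justification and is not how the accounting can be closed; the comparison with $H$ happens in a separate existence lemma, where one extracts from $H$ a maximal matching $M^H$ of non-bad leaf-to-leaf links and proves $|M^H|+\tfrac12|M^H_{\mathrm{in}}|+(|T|-2|M^H|)\le |H|+\tfrac12|H_{\mathrm{in}}|$ via a counting argument: each connected component of bad links of $H$ not touching matched leaves forces an extra link of $H$ covering the union of the witnessing cuts (an uncrossing argument shows this union is again a $2$-cut).

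Two further technical points you wave at incorrectly. First, in a cactus the relevant $2$-cuts are \emph{not} laminar, so the completion LP is not ``interval covering on a tree''; the paper first proves the value bound for any laminar subfamily and then lifts it to the general case by uncrossing an optimal \emph{dual} solution (increasing $\sum_C\lambda_C|C|^2$) to make its support laminar. Second, integrality of the completion LP is not a TU statement about the undirected cut constraints; it is the \emph{directed} link LP (each link split into two antiparallel arcs, cuts must be entered), whose integrality is imported from prior work on residual CacAP instances, at the price of a factor that the explicit fractional solution above already absorbs. Without the bad-link notion, the witness-cut uncrossing in the existence lemma, and the maximal-entering-link construction for the laminar case, the plan as written cannot be completed.
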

In particular, the cardinality of the solution $F$ we return is no larger than the cardinality of any optimal solution $H$ plus half of the number of in-links of $H$.
Clearly, this immediately implies a $\sfrac{3}{2}$-approximation for Leaf-to-Leaf CacAP (and therefore also Leaf-to-Leaf CAP).
The guarantees obtained through \cref{thm:1/2inlinks} are a strengthening, for the leaf-to-leaf case, of guarantees obtainable by prior methods, in particular one developed in~\cite{fiorini_2018_approximating} for TAP and extended in~\cite{cecchetto_2021_bridging} to CacAP, which leads to a guarantee of $|F|\leq |H| + |H_{\mathrm{in}}|$.

Our approach readily leads to better factors than $\sfrac{3}{2}$ when combined with prior techniques that have been developed recently and can be translated to the leaf-to-leaf setting.
These approaches are based on a reduction introduced in~\cite{adjiashvili_2018_beating}, and later extended in~\cite{grandoni_2018_improved,cecchetto_2021_bridging}, which first reduces the given instance to a better structured one, known as $k$-wide (for constant $k$).
\begin{definition}[$k$-wide CacAP]\label{def:k-wide}
Let $k\in \mathbb{Z}_{\geq 1}$.
A CacAP instance $(G,L)$ is \emph{$k$-wide} if there is a vertex $r$ such that each connected component of $G-r$ contains at most $k$ leaves of $G$.
We call $r$ a \emph{$k$-wide root} of $G$, or simply a \emph{root}. Moreover, for each vertex set $W\subseteq V\setminus \{r\}$ of a connected component of $G-r$, we call $G[W\cup \{r\}]$ an $r$-principal subcactus of $G$.
\end{definition}
In particular, the Leaf-to-Leaf CacAP instance highlighted in~\cref{fig:cross_in_ex} is $6$-wide, with respect to the indicated root $r$, and has $4$ principal subcacti.

The following statement shows that, to obtain approximation algorithms for Leaf-to-Leaf CacAP, it suffices to consider $O(1)$-wide instances (up to an arbitrarily small error in the approximation factor).

\begin{theorem}\label{thm:main_reduction}
Let $\alpha \geq 1$ and $\epsilon > 0$. Given an $\alpha$-approximation algorithm $\mathcal{A}$ for any $O(\sfrac{1}{\epsilon^3})$-wide Leaf-to-Leaf CacAP, there is an $\alpha \cdot (1+\epsilon)$-approximation algorithm $\mathcal{B}$ for (unrestricted) Leaf-to-Leaf CacAP that calls $\mathcal{A}$ at most polynomially many times and performs further operations taking polynomial time.
\end{theorem}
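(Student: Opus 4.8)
The plan is to follow the now-standard "wide reduction" paradigm introduced in \cite{adjiashvili_2018_beating} and refined in \cite{grandoni_2018_improved,cecchetto_2021_bridging}, checking carefully that it preserves the leaf-to-leaf property. Fix a Leaf-to-Leaf CacAP instance $(G,L)$ with optimal value $\OPT$, and set $k = \Theta(1/\epsilon^3)$. First I would argue that we may assume $\OPT \geq 1/\epsilon^3$ (equivalently, $\OPT$ is large): if $\OPT$ is bounded by a constant, then the optimal solution has constantly many links, so we can afford to guess it — or guess a bounded-depth "contraction pattern" — in polynomial time, and no reduction is needed. With $\OPT$ large, the strategy is to decompose $G$ into pieces, each of which is $k$-wide, such that (i) the total cost of "interface" links needed to glue solutions of the pieces back together is at most $\epsilon \cdot \OPT$, and (ii) the union of an $\alpha$-approximate solution on each piece, together with these interface links, is a feasible solution to $(G,L)$.

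The decomposition itself I would obtain as follows. Root the cactus at an arbitrary vertex and consider the laminar family of "cactus cuts." Using a standard charging/marking argument (mark a subcactus hanging off a vertex whenever the number of leaves below it first exceeds $1/\epsilon$, say, then recurse), one produces a collection of vertices $r_1,\dots,r_m$ and associated subcacti such that contracting the complement of each subcactus to the single vertex $r_i$ yields a CacAP instance in which every component of (that cactus) minus $r_i$ has at most $k$ leaves — i.e., a $k$-wide instance — and moreover the number of such "marked" vertices across all recursion levels is $O(\epsilon \cdot \OPT)$; here one uses $\OPT \ge 1/\epsilon^3$ so that $O(1/\epsilon^2)$ additive terms per marked vertex are absorbed. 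Each marked vertex $r_i$ costs us a bounded number $O(1/\epsilon)$ of extra links to connect it appropriately (or one applies the known fact that in a CacAP instance a constant-size partial solution around each marked vertex suffices), and summing over $O(\epsilon^2 \cdot \OPT)$ marked vertices gives total overhead $O(\epsilon \cdot \OPT)$. Crucially, the contraction operation used to form each subcactus instance does not create new links, and it maps leaves to leaves (contracting a set of vertices containing no endpoint of a given link leaves that link untouched, and endpoints of links are terminals of the contracted cactus), so each piece is again a \emph{Leaf-to-Leaf} CacAP instance and $\Ascr$ applies to it.

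The algorithm $\Bscr$ then runs $\Ascr$ on each of the (polynomially many) $k$-wide leaf-to-leaf pieces, takes the union of the returned link sets, adds the $O(\epsilon \cdot \OPT)$ interface links, and outputs the result; feasibility follows because every minimum cut of $G$ either is a minimum cut inside one of the pieces (covered by that piece's solution) or is "crossed" by the subcactus structure and hence covered by an interface link, which is exactly what the decomposition guarantees. For the cost: the optimum $\OPT$ restricted to each piece is at most the cost of the natural projection of a global optimum onto that piece, and summing the per-piece bounds $\alpha \cdot \OPT_i$ together with $\sum_i \OPT_i \le \OPT + O(\epsilon \OPT)$ (again because the pieces overlap only in contracted roots) yields a total of $\alpha(1 + O(\epsilon)) \cdot \OPT$; rescaling $\epsilon$ by a constant gives the claimed $\alpha(1+\epsilon)$. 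The main obstacle — and the part that needs the most care — is the decomposition lemma: showing that one can simultaneously (a) guarantee every piece is $k$-wide with $k = O(1/\epsilon^3)$, (b) control the number of marked/contracted vertices by $O(\epsilon^2 \OPT)$, and (c) verify that the gluing back (interface links plus feasibility of the union) really works for \emph{cacti} rather than just trees, where a single cactus cut can correspond to a pair of edges. Since this reduction has essentially been carried out in \cite{adjiashvili_2018_beating,grandoni_2018_improved} for TAP and in \cite{cecchetto_2021_bridging} for CacAP, I would reuse those arguments and concentrate the write-up on checking that the leaf-to-leaf property is preserved throughout — which, as noted, follows from the fact that contractions avoid link endpoints.
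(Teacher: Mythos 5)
There is a genuine gap, and it sits exactly at the point you wave through in one sentence: the claim that the known $O(1)$-wide reduction preserves the leaf-to-leaf property because ``contractions avoid link endpoints.'' That is true for the \emph{splitting} operations (contracting one side of a $2$-cut: the contracted set becomes a degree-$2$ vertex, hence a leaf, and this is the content of the paper's \cref{lem:splitting_maintains_leaf_to_leaf}), but it is false for the \emph{link-contraction} operations that the reduction of~\cite{cecchetto_2021_bridging} also performs: covering the $x$-heavy cuts by a link set $L_H$ and then passing to the residual instance, and, inside the $k$-wide subroutine, guessing a set $S\subseteq\delta_L(s)$ of links at the split vertex and contracting them. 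Contracting a link $\{u,v\}$ merges all vertices on every $u$-$v$ path into a single supernode, which in general has degree larger than $2$; links that previously ended at leaves absorbed into this supernode now have a non-leaf endpoint, so the resulting instances are \emph{not} leaf-to-leaf (the paper's \cref{fig:contraction} is precisely a counterexample to your claim). Since your whole argument for ``$\Ascr$ applies to each piece'' rests on this preservation, the proposal as written does not go through; this is exactly the obstacle the paper's proof is designed to circumvent, and it cannot be dismissed by saying the reduction ``has essentially been carried out'' in prior work.

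The paper handles it by a quantitative workaround rather than by preservation: it re-proves the reduction of~\cite{cecchetto_2021_bridging} in a parametrized form (\cref{thm:reduction_with_f}) with an additive error $\sum_{\Jscr} f(\Jscr)$ over the split minors, where $f$ counts non-leaf endpoints of links; it bounds the number of link contractions in any splitting by $\epsilon\cdot\OPT$, so this additive term is at most $\epsilon\cdot\OPT$ (\cref{lem:algorithms_bs}); and, crucially, it shows (\cref{lem:algorithms_as}) how to turn an $\alpha$-approximation for $k$-wide \emph{leaf-to-leaf} instances into an algorithm for general $k$-wide instances with guarantee $\alpha\cdot\OPT(\Iscr)+f(\Iscr)$, by contracting at most $f(\Iscr)$ further links so that every offending non-leaf endpoint is merged into a supernode containing a leaf (or loses all incident links), and then attaching an auxiliary degree-$2$ vertex $t_s$ via a doubled edge to each such supernode and re-routing its links to $t_s$, which restores a genuine $k$-wide leaf-to-leaf instance without increasing the optimum. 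Your remaining sketch (the marking/charging decomposition, the $\OPT\ge 1/\epsilon^3$ guessing step, and the gluing of per-piece solutions) is a plausible but much looser rendering of what in the paper is the round-or-cut framework with \cref{prop:combine_split_sol}; those details could perhaps be repaired, but without an analogue of the $f$-bookkeeping and the auxiliary-leaf construction, the leaf-to-leaf hypothesis needed to invoke $\Ascr$ is simply not available on the pieces your decomposition produces.
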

\cref{thm:main_reduction} has been proven for (non leaf-to-leaf) CacAP~\cite{cecchetto_2021_bridging}.
As we discuss later, a slight modification of the reduction used in~\cite{cecchetto_2021_bridging} allows for translating this result to the leaf-to-leaf setting, where we want to make sure that the $O(1)$-wide CacAP instance maintains the property of being also leaf-to-leaf if the original instance was leaf-to-leaf.

On $O(1)$-wide CacAP instances (even weighted ones), we can leverage the following result from~\cite{cecchetto_2021_bridging}, which is a consequence of a technique in~\cite{basavaraju_2014_parameterized}.
\begin{lemma}[\cite{cecchetto_2021_bridging}]\label{lem:fpt_in_nb_leaves}
For any weighted $k$-wide CacAP instance $(G=(V,E),L)$ with link costs $c\in\mathbb{R}_{\geq 0}^L$, we can compute in time $3^k \poly(|V|)$ a CacAP solution $F\subseteq L$ with $c(F) \leq c(H) + c(H_{\mathrm{cross}})$ for any solution $H$ of the instance.
\end{lemma}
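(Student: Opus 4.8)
The plan is to set up a dynamic program over the $r$-principal subcacti of the $k$-wide instance, where the state captures exactly the interaction pattern that cross-links can have with a subcactus. First I would fix a $k$-wide root $r$, so that $G-r$ splits into connected components $W_1,\dots,W_t$, and each $r$-principal subcactus $G[W_i\cup\{r\}]$ has at most $k$ leaves. The key structural observation is that any solution $H$ restricted to a single subcactus, together with the way its cross-links ``reach out'' of that subcactus, can be summarized by a bounded amount of information: for each subcactus we only need to know, for the (at most $k$) leaves it contains, which leaves are covered ``internally'' and which leaves still need to be connected toward the root via a cross-link. Since there are at most $k$ leaves, there are at most $2^k$ (or, counting more carefully, $3^k$) relevant subsets/configurations, which is where the $3^k$ in the running time comes from.

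The main steps I would carry out are: (i) For each principal subcactus $G[W_i\cup\{r\}]$ and each configuration $S$ of its leaves, compute by brute force (over the constantly many leaves, hence constantly many relevant link subsets within the subcactus) a minimum-cost set of in-links of that subcactus that covers all the cuts internal to the subcactus except those ``deferred'' to $S$; call this cost $c_i(S)$. (ii) Observe that the cuts of $G$ that are \emph{not} internal to any single subcactus — i.e.\ the cuts separated by removing $r$ or cuts that straddle several subcacti — must be covered either by cross-links or by the deferred leaves, and that covering these global cuts reduces to a small covering problem over the configurations $S_1,\dots,S_t$ plus a choice of cross-links. (iii) Combine (i) and (ii): we want to minimize $\sum_i c_i(S_i) + c(X)$ over configurations $S_i$ and cross-link sets $X$ such that $X$ together with the deferred leaves covers all global cuts. (iv) Compare against the benchmark: given any solution $H$, set $S_i$ to be the configuration induced by $H$ on subcactus $i$ and $X\coloneqq H_{\mathrm{cross}}$; then the internal part of $H$ costs at least $\sum_i c_i(S_i)$, and we are charging $c(X)=c(H_{\mathrm{cross}})$ a second time on top of $c(H)$, which yields exactly $c(F)\le c(H)+c(H_{\mathrm{cross}})$.

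The hard part will be step (iii): a priori, minimizing over all cross-link sets $X$ simultaneously with all the subcactus configurations is not obviously a bounded-size problem, since the number of cross-links can be large and they can connect arbitrary pairs of leaves in distinct subcacti. The right way to handle this, following the technique of~\cite{basavaraju_2014_parameterized} as used in~\cite{cecchetto_2021_bridging}, is to process the subcacti one at a time in a fixed order and maintain a DP state recording only a bounded summary of the ``open'' cross-link obligations — essentially which of the at most $k$ leaves in the current subcactus are waiting to be matched by a cross-link to some later subcactus, and which global cuts are still uncovered restricted to the bounded interface. Because each subcactus contributes at most $k$ leaves to this interface, the state has size $3^{O(k)}$, the transition between consecutive subcacti is a brute-force minimization over the constantly many link choices touching the current subcactus, and the whole DP runs in $3^k\poly(|V|)$ time. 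I would then argue correctness of the benchmark by the exchange/charging argument in step (iv), where the only subtlety is to verify that cross-links of $H$ that are internal to the computed structure are not double counted — which holds because we only re-charge $H_{\mathrm{cross}}$, and every cross-link of $H$ is, by definition, in $H_{\mathrm{cross}}$.
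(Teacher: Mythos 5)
There is a genuine gap, and it sits exactly where you flagged it: step (iii). The dynamic program you describe cannot have bounded state: cross-links join leaves of arbitrary pairs of principal subcacti and their costs depend on both endpoints, so after processing a prefix of subcacti the set of ``open obligations'' may involve leaves of arbitrarily many earlier subcacti, not just the at most $k$ leaves of the current one; and deferring a leaf ``to be matched by some later cross-link'' without fixing its partner cannot be priced, since $c$ is an arbitrary nonnegative weight on each individual link. In fact, if such a $3^k\poly(|V|)$ DP existed, your own benchmark in step (iv) would give $c(F)\le\sum_i c_i(S_i^H)+c(H_{\mathrm{cross}})\le c(H_{\mathrm{in}})+c(H_{\mathrm{cross}})=c(H)$, where $S_i^H$ is the configuration induced by $H$ — i.e., an exact algorithm for $k$-wide CacAP; combined with \cref{thm:main_reduction} this would yield a $(1+\epsilon)$-approximation for Leaf-to-Leaf CacAP, contradicting the APX-hardness of leaf-to-leaf TAP. (Relatedly, your accounting of ``charging $c(H_{\mathrm{cross}})$ a second time'' does not correspond to any step of your construction.) A further inaccuracy: there are no $2$-cuts straddling several subcacti, since any $2$-cut avoiding $r$ induces a connected subgraph and hence lies inside a single component of $G-r$; the coupling between subcacti is not through ``global cuts'' but through cross-links covering cuts that are internal to a single subcactus.

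The intended proof is much simpler and deliberately avoids any coordination of cross-links. For each principal subcactus, form the CacAP instance on $G[W_i\cup\{r\}]$ in which every link with exactly one endpoint in $W_i$ has its other endpoint replaced by $r$; solve each such instance optimally with the FPT algorithm of~\cite{basavaraju_2014_parameterized}, parameterized by its at most $k$ terminals (this is where $3^k\poly(|V|)$ comes from, not from enumerating leaf configurations); and output the union $F$ of the original links corresponding to the chosen ones. Feasibility holds because every $2$-cut avoiding $r$ lies in one subcactus and is covered there. For the cost, the projection of any solution $H$ is feasible for each sub-instance; each in-link of $H$ appears in exactly one sub-instance and each cross-link in at most two, so $c(F)\le c(H_{\mathrm{in}})+2\,c(H_{\mathrm{cross}})=c(H)+c(H_{\mathrm{cross}})$. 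The $c(H_{\mathrm{cross}})$ slack in the lemma is precisely the price of solving the subcacti independently — the very coordination your DP attempts to achieve is what the argument gives up.
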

The above statement is obtained by solving independently and optimally the CacAP problems on each principal subcactus of a given $k$-wide instance and then returning the union of all these solutions.
This is the step that requires $k$ to be constant to be efficiently executable.

Note how the guarantee given by~\cref{lem:fpt_in_nb_leaves} is complementary to the one we obtain through our matching-based procedure as described in \cref{thm:1/2inlinks}.
By returning the better of the two, we immediately obtain a $\sfrac{4}{3}$-approximation for $O(1)$-wide Leaf-to-Leaf CacAP.
\begin{corollary}\label{cor:fourOverThreeKWide}
Let $(G=(V,E),L)$ be an $O(1)$-wide Leaf-to-Leaf CacAP instance.
Computing a solution $F_1\subseteq L$ as claimed by \cref{thm:1/2inlinks} and a solution $F_2\subseteq L$ as claimed by~\cref{lem:fpt_in_nb_leaves} (with $c$ being unit weights), and returning the one of smaller cardinality, leads to a $\sfrac{4}{3}$-approximation.
\end{corollary}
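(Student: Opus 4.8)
The plan is to play the two guarantees against each other via a convex combination. Fix an optimal solution $H$ to the given $O(1)$-wide Leaf-to-Leaf CacAP instance and set $\OPT \coloneqq |H|$, $a \coloneqq |H_{\mathrm{in}}|$, and $b \coloneqq |H_{\mathrm{cross}}|$, so that $a + b = \OPT$. First I would record that both $F_1$ and $F_2$ can actually be produced in polynomial time: $F_1$ by \cref{thm:1/2inlinks} directly, and $F_2$ by \cref{lem:fpt_in_nb_leaves} applied with unit costs, whose running time $3^k\poly(|V|)$ is polynomial because $k = O(1)$ for an $O(1)$-wide instance.

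Next I would instantiate the two guarantees at the fixed optimum $H$: \cref{thm:1/2inlinks} gives $|F_1| \le \OPT + \tfrac12 a$, while \cref{lem:fpt_in_nb_leaves} gives $|F_2| \le \OPT + b$. The key observation is that these are complementary in the split of $\OPT$ into in-links and cross-links: one guarantee is good when $H$ has few in-links, the other when $H$ has few cross-links. Taking the weighted average with weights $\tfrac23$ and $\tfrac13$ yields
\[
\tfrac23 |F_1| + \tfrac13 |F_2| \;\le\; \tfrac23\!\left(\OPT + \tfrac12 a\right) + \tfrac13\left(\OPT + b\right) \;=\; \OPT + \tfrac13(a+b) \;=\; \tfrac43\,\OPT .
\]
Since the algorithm returns the solution of smaller cardinality, its size is at most $\min(|F_1|,|F_2|) \le \tfrac23 |F_1| + \tfrac13 |F_2| \le \tfrac43\,\OPT$, which is exactly the claimed approximation factor.

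There is essentially no obstacle here: the statement is a short arithmetic consequence of the two previously established guarantees, and the weights $\tfrac23,\tfrac13$ are forced by requiring the $a$-coefficient $\tfrac23\cdot\tfrac12$ and the $b$-coefficient $\tfrac13$ to coincide (equivalently, one can argue directly that $\min(\tfrac12 a, b) \le \tfrac13(a+b)$, since $\tfrac12 a > \tfrac13(a+b)$ forces $a > 2b$ while $b > \tfrac13(a+b)$ forces $2b > a$, a contradiction). The only thing to be mildly careful about is stating clearly that correctness of $F_1$ and $F_2$ as CacAP solutions is inherited verbatim from \cref{thm:1/2inlinks} and \cref{lem:fpt_in_nb_leaves}, so that the better of the two is again a feasible solution, of cardinality at most $\tfrac43\,\OPT$.
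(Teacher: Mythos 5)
Your proposal is correct and matches the paper's proof essentially verbatim: both instantiate the two guarantees at a fixed optimum $H$, split $|H|$ into in-links and cross-links, and bound $\min\{|F_1|,|F_2|\}$ by the convex combination $\tfrac23|F_1|+\tfrac13|F_2|\le \tfrac43|\OPT|$. Your added remarks on polynomial running time and on why the weights $\tfrac23,\tfrac13$ are forced are fine but not needed beyond what the paper states.
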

\begin{proof}
Let $\OPT\subseteq L$ be an optimal solution of $(G,L)$.
By \cref{thm:1/2inlinks}, we have $|F_1| \leq |\OPT| + \frac{1}{2}|\OPT_{\mathrm{in}}|$, and \cref{lem:fpt_in_nb_leaves} provides $|F_2|\leq |\OPT| + |\OPT_{\mathrm{cross}}|$.
Hence, the better of the two has size
\begin{equation*}
\min\{|F_1|,|F_2|\} \leq \frac{2}{3} |F_1| + \frac{1}{3} |F_2| \leq |\OPT| + \frac{1}{3} |\OPT_{\mathrm{in}}| + \frac{1}{3} |\OPT_{\mathrm{cross}}| = \frac{4}{3}|\OPT|,
\end{equation*}
as desired.
\end{proof}
Thus, \cref{cor:fourOverThreeKWide} immediately implies, together with \cref{thm:main_reduction}, that there is a $(\sfrac{4}{3}+\epsilon)$-approximation for Leaf-to-Leaf CacAP.
Finally, a recently introduced technique based on stack analysis~\cite{cecchetto_2021_bridging} allows for obtaining approximation factors below $\sfrac{4}{3}$, and implies the claimed $1.29$-approximation for Leaf-to-Leaf CacAP (and therefore also Leaf-to-Leaf CAP) as stated in \cref{thm:main}.

\subsection{Organization of paper}

In \cref{sec:matching_approach}, we introduce our main new technical ingredient, namely a simple matching-based approach to derive Leaf-to-Leaf CacAP (or TAP) solutions, which leads to \cref{thm:1/2inlinks}.
The reduction to $O(1)$-wide Leaf-to-Leaf CacAP instances is discussed in \cref{sec:reduction} 
with some additional explanation of how precisely we reuse results from prior work in \cref{appendix:reduction}.
\cref{sec:stack_analysis} discusses how the stack analysis approach of~\cite{cecchetto_2021_bridging} allows for obtaining approximation factors below $\sfrac{4}{3}$-factor, leading to the claimed $1.29$-approximation for Leaf-to-Leaf CAP.
\section{Our matching-based approach}\label{sec:matching_approach}

We now describe our matching-based approach, which leads to \cref{thm:1/2inlinks}.
In fact, we will show a slight generalization of \cref{thm:1/2inlinks}, which applies to a slightly larger problem class which we call \emph{leaf-to-leaf+ instances}.
This will be useful later on when we combine our matching-based approach with other algorithms to obtain our main result, \cref{thm:main}.
\begin{definition}[Leaf-to-Leaf+ CacAP instance]
A CacAP instance $(G=(V,E),L)$ is a \emph{leaf-to-leaf+ instance} if it has a root $r\in V$ such that every endpoint of a link in $L$ is the root or a leaf of $G$. 
\end{definition}
The main result of this section is the following, which immediately implies \cref{thm:1/2inlinks}.
\begin{theorem}\label{thm:1/2inlinks+}
For any Leaf-to-Leaf+ CacAP instance $(G,L)$, we can efficiently compute a solution $F\subseteq L$ such that $|F|\leq |H| + \frac{1}{2}|H_{\mathrm{in}}|$ for any solution $H$ of the instance.
\end{theorem}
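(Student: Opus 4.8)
I would approach Theorem~\ref{thm:1/2inlinks+} by constructing $F$ in two stages: first a carefully-weighted maximum-weight matching $M$ on the link set, and then an LP-based completion of $M$ to a feasible CacAP solution. The guarantee $|F| \le |H| + \frac12 |H_{\mathrm{in}}|$ must hold simultaneously against \emph{every} feasible $H$, so the natural strategy is to prove it against an arbitrary fixed $H$ by exhibiting, inside $H$, enough structure that our matching can ``charge'' against. Concretely, I would fix a leaf-to-leaf+ instance $(G,L)$ with root $r$, and think of every feasible solution as covering all the (non-trivial) cuts of the cactus, which — via the standard CacAP cut structure — correspond to the ``cross-cuts'' separating subcacti and the ``internal'' cuts inside each principal subcactus. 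The key dichotomy to exploit is that an in-link has both endpoints (leaves) in the same component of $G - r$, hence only helps cover cuts inside one subcactus, whereas a cross-link reaches across. This is exactly the asymmetry the $\tfrac12$ factor reflects.

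\textbf{Step 1: Set up the matching.} I would define a graph on the leaves (plus possibly the root) whose edges are the links $L$, and assign each link a weight: cross-links get weight $1$, in-links get weight $\tfrac12$ (the precise weights are the delicate design choice — they should be tuned so that a maximum-weight matching $M$ has weight at least, roughly, a constant fraction of $|H| + \tfrac12|H_{\mathrm{in}}|$, or so that $M$ ``pays for itself'' in the subsequent LP bound). Compute a maximum-weight matching $M \subseteq L$; note $M$ is a set of vertex-disjoint links. The crucial claim here is a lower bound on $w(M)$ in terms of any feasible $H$: since $H$ covers all cuts, $H$ contains a fractional (in fact integral) object from which a large matching can be extracted — I would argue that $H$, restricted to links incident to leaves, contains a matching of weight comparable to $|H| - c\cdot|H_{\mathrm{in}}|$ for a suitable constant, using that each leaf is covered and that in-links are ``cheap.'' The honest obstacle is getting the constants to line up; I expect to need a more refined argument than a crude matching-in-$H$ bound, perhaps an LP-duality or fractional-relaxation argument showing $w(M) \ge$ (value of a covering LP).

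\textbf{Step 2: Complete $M$ to a solution via an LP.} Given $M$, I would contract or ``pre-pay'' the cuts already crossed by $M$, leaving a residual CacAP instance; then solve a natural covering LP for the residual instance and round it. Because the residual instance after picking $M$ has a simpler cut structure (many cross-cuts are now covered, each uncovered residual piece lives essentially inside a subcactus or a contracted tree-like structure), I would argue its LP relaxation has an integrality property strong enough — e.g. the residual cut-LP is half-integral, or its optimum is small relative to what remains of $H$ — to round at low loss. The total cost is $|F| = |M| + |F_{\mathrm{residual}}|$, and I would bound each piece against $H$: roughly, $|M|$ is paid by the matching part of $H$, and $|F_{\mathrm{residual}}|$ is paid by the rest, with in-links of $H$ being exactly the links that are forced to appear ``twice'' in the accounting — once because they can be used by $H$ but were not matched, which is where the $\tfrac12|H_{\mathrm{in}}|$ slack gets absorbed.

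\textbf{Main obstacle.} I expect the hard part to be the interface between the two stages: proving that whatever $M$ fails to cover can be completed by an LP solution whose cost, \emph{added to $|M|$}, stays within $|H| + \tfrac12|H_{\mathrm{in}}|$ against all $H$ simultaneously. The matching is chosen before we see $H$, so the argument cannot be ``for this $H$, pick this $M$''; instead one needs a weight function and a rounding whose guarantee is robust. I would handle this by choosing the weights so that the matching LP and the completion LP fit together as a single primal-dual / LP-hierarchy bound — essentially writing one LP whose value is at least the algorithm's cost and at most $|H| + \tfrac12|H_{\mathrm{in}}|$ — and by leveraging the leaf-to-leaf+ structure (every link touches the root or a leaf) to keep the residual cut structure laminar enough that half-integrality or a direct combinatorial rounding applies. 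A secondary subtlety is handling links incident to the root $r$ itself, which behave like cross-links but need separate bookkeeping; I would treat $r$-incident links as an honorary ``$(k\!+\!1)$-th leaf'' per subcactus and check the matching/LP argument goes through unchanged.
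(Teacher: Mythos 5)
Your two-stage architecture (max-weight matching with cross-links weighted $1$ and in-links weighted $\tfrac12$, then an LP-based completion on the residual cuts) matches the paper's skeleton, but the proposal is missing the one idea that makes the approach work: the matching must be restricted to links that are not \emph{bad}, where a link $\{u,v\}$ is bad if there is a $2$-cut $C$ with $T_C\subseteq\{u,v\}\subseteq C$ (i.e.\ every link covering $C$ ends at $u$ or $v$). Without this restriction the plan provably fails, and your chosen weights do not rescue it: in the paper's Figure~\ref{fig:badMatching} example the root can be placed so that \emph{all} links are in-links, hence maximum-weight and maximum-cardinality matchings coincide, the unique maximum matching consists entirely of bad links, and any completion costs roughly $2\cdot\OPT$. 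The $\tfrac12$-versus-$1$ weighting controls the trade-off between in- and cross-links, but it is the exclusion of bad links that guarantees the matching can be completed within the budget $|F|\le|M|+\tfrac12|M_{\mathrm{in}}|+(|T|-2|M|)$ at all.

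Both halves of your sketch lean on exactly the steps where this notion is needed and where the real work lies. For the existence bound (your ``matching inside $H$'' claim), the paper takes a maximal non-bad leaf-to-leaf matching $M^H\subseteq H$ and must separately account for the links of $H$ whose endpoints are both uncovered; these are precisely bad links, and bounding their number requires an uncrossing argument showing that the union of the witnessing $2$-cuts over a connected component of bad links is again a $2$-cut, which some link of $H$ must leave --- a crude ``$H$ covers every leaf'' count does not suffice. For the completion, the paper does not use half-integrality of the residual cut-LP (which need not hold); it uses the \emph{directed}-link LP, whose integrality is imported from prior work, and bounds its value by exhibiting a fractional solution that places $1$ on a maximal link entering each unmatched leaf and $\tfrac12$ on a maximal link entering each matched in-link pair; the feasibility of the $\tfrac12+\tfrac12$ charging on a cut covered only by matched leaves again uses that the matching has no bad links, and the extension from laminar families to general cactus cut families goes through LP duality plus uncrossing of the dual support. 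Without the bad-link notion and these two arguments, the ``interface'' you flag as the main obstacle cannot be closed.
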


To describe our matching-based approach, consider a Leaf-to-Leaf+ CacAP instance $\Iscr=(G=(V,E),L)$ together with a root $r\in V$.
We denote by $\mathcal{C}$ the set of $2$-cuts of $G$, where, by convention, we only consider cuts not containing $r$, i.e.,
\begin{equation*}
\mathcal{C} \coloneqq \left\{C\subseteq V\setminus \{r\} \colon |\delta_E(C)| = 2\right\}.
\end{equation*}
Hence, the task is to find a smallest link set crossing all sets in $\mathcal{C}$.

\medskip

We will first compute a large matching on the leaves $T$ of $G$ and then show that we can cheaply complete the matching to a solution $F$ with the desired properties.
On a high level, it seems intuitive that good solutions contain large matchings.
In particular, a simple lower bound on the number of links needed in any solution is given by $\sfrac{|T|}{2}$, because each leaf needs to have a link incident with it.
Any instance with an optimal solution close to this lower bound must thus contain a very large matching.
However, simply computing a maximum cardinality matching and then completing it to a solution does not generally lead to strong solutions.
See \cref{fig:badMatching} for an example.
\begin{figure}[!ht]
\begin{center}
\begin{tikzpicture}[%
xscale=0.5,
yscale=0.5,
lks/.style={line width=1.3pt, blue, densely dashed},
ts/.style={every node/.append style={font=\scriptsize}},
ls/.style={every node/.append style={rectangle, fill=black!10}},
lksm/.style={blue, solid},
lkso/.style={line width=3pt,orange!70!black, solid, opacity=0.4},
]

\def\p{6}

\def\hn{3.5}

\def\hl{1.3}

\def\vd{3}

\def\bcac{8pt}

\pgfdeclarelayer{fg}
\pgfdeclarelayer{bg}
\pgfsetlayers{bg,main,fg}

\begin{scope}[every node/.style={thick,draw=black,fill=white,circle,minimum size=6, inner sep=2pt}]

\begin{scope}[ls]
\foreach \x in {1,...,\p} {
\pgfmathsetmacro\xa{\x*\hn-\hl/2}
\pgfmathsetmacro\xb{\x*\hn+\hl/2}
\node (\x a) at (\xa,0) {};
\node (\x b) at (\xb,0) {};
}
\end{scope}

\begin{scope}
\foreach \x in {1,...,\p} {
\pgfmathsetmacro\xx{\x*\hn}
\node (\x) at (\xx,\vd) {};
\node (\x t) at (\xx,2*\vd) {};
}
\end{scope}

\end{scope}

\begin{scope}
\node at (1t)[above=2pt] {$r$};

\node at (2a)[below=2pt] {$u$};
\node at (2b)[below=2pt] {$v$};
\end{scope}

\begin{scope}[very thick]
\pgfmathtruncatemacro\pm{\p-1}
\foreach \t in {1,...,\pm} {
  \pgfmathtruncatemacro\rv{\t+1}
  \draw (\t t) to[bend left = \bcac] (\rv t);
  \draw (\t t) to[bend right = \bcac] (\rv t);
}

\foreach \t in {1,...,\p} {
  \draw (\t) to[bend left = \bcac] (\t t);
  \draw (\t) to[bend right = \bcac] (\t t);
}

\foreach \t in {1,...,\p} {
  \draw (\t a) to[bend left = \bcac] (\t);
  \draw (\t a) to[bend right = \bcac] (\t);
  \draw (\t b) to[bend left = \bcac] (\t);
  \draw (\t b) to[bend right = \bcac] (\t);
}
\end{scope}

\begin{scope}[lks]
\def\bl{50pt}

\begin{scope}[lkso]

\draw (1a)   to[out=-60, in=240, looseness=0.5] (1b);
\draw (\p a) to[out=-60, in=240, looseness=0.5] (\p b);

\pgfmathtruncatemacro\pm{\p-1}
\foreach \u in {1,...,\pm} {
\pgfmathtruncatemacro\v{\u + 1}
\draw (\u b)   to[out=-60, in=240, looseness=0.5] (\v a);
}

\end{scope}

\begin{scope}[lksm]
\foreach \u in {1,...,\p} {
\draw  (\u a) to[bend left = \bl]  (\u b);
}
\end{scope}
\end{scope}

\begin{pgfonlayer}{bg}
\begin{scope}
\filldraw[red,fill opacity=0.2,rounded corners=20pt]
  ($(2a.center)+(-1,-1.0)$) -- ($(2b.center)+(1,-1.0)$) -- node[above right,red,opacity=1] {$C$} ($(2.center)+(0,2)$) -- cycle;
\end{scope}
\end{pgfonlayer}

\coordinate (kn) at ($(\p)+(2.5,0.5)$);
\begin{scope}[shift={(kn)}]%
\def\ll{30mm} %
\def\vs{12mm} %

\draw[lkso] (0,0) -- ++(1,0) node[right,opacity=1] {$\mathrm{OPT}$};
\draw[lksm] (0,-1.0) -- ++(1,0) node[right] {$M$};

\end{scope}

\end{tikzpicture}
 \end{center}
\caption{
A Leaf-to-Leaf TAP instance, represented as a CacAP instance.
The links are shown as blue lines and thick orange lines.
The thick orange links show an optimal solution, which has cardinality $7$.
The blue are the unique maximum cardinality matching $M$ on the link set.
However, complementing the matching $M$ to a solution requires at least $5$ extra links (all orange/$\OPT$ ones except for the left-most and right-most one), leading to a solution of cardinality at least $12$.
By making the example wider, the ratio between the solution obtained by (optimally) complementing $M$ and $\OPT$ approaches $2$.
The blue link $\{u,v\}$ is an example of a bad link, because there is a $2$-cut, highlighted in red, such that each link crossing $C$ has either $u$ or $v$ as one of its endpoints.
}\label{fig:badMatching}
\end{figure}
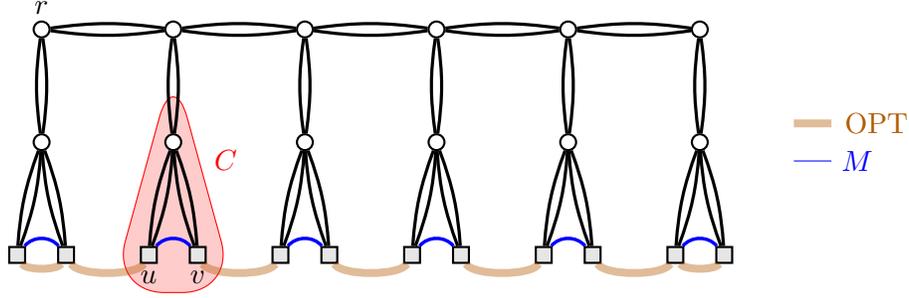

As we formalize in the following, a key reason for a matching not to have a good completion is that it contains a certain type of links, which we call \emph{bad}.
For the special case of TAP, Maduel and Nutov~\cite{maduel_2010_covering} already identified these links as being undesirable and called them \emph{redundant}.
\begin{definition}[Bad link]
For a cut $C\in \Cscr$, let $T_C \subseteq T \cap C$ be the set of leaves in the cut $C$ that are endpoints of links covering the cut $C$. We say that a link $\ell = \{u,v\} \in L$ is a \emph{bad link} if $T_C \subseteq \{u,v\} \subseteq C$ for some $C\in \Cscr$.
\end{definition}
\cref{fig:badMatching} highlights an example of a bad link.

The key lemma in our matching-based approach is the following, which we show in \cref{sec:proof_matching_bound}.
In words, it says that large matchings without bad links can be cheaply augmented to a CacAP solution.
\begin{lemma}\label{lem:matching_bound}
Given a feasible Leaf-to-Leaf+ CacAP instance $(G, L)$ and a matching $M\subseteq L$ on the leaves of $G$ without bad links, we can efficiently find a CacAP solution $F\subseteq L$ with $M\subseteq F$ such that:
\begin{equation}\label{eq:completionGuaranteeM}
|F|\leq |M|+\frac{1}{2}|M_{\mathrm{in}}|+(|T| - 2|M|).
\end{equation}
\end{lemma}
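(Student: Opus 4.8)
The plan is to complete the matching $M$ to a CacAP solution by covering, one at a time, the $2$-cuts that are not yet covered by $M$, and to charge the cost of the extra links to the $|T|-2|M|$ unmatched leaves and to the in-links of $M$ in a way that yields exactly the bound in~\eqref{eq:completionGuaranteeM}. Concretely, I would first observe that $M$ covers a certain collection of $2$-cuts, and that the uncovered $2$-cuts, ordered by inclusion, form a laminar family (this follows from the standard uncrossing facts about $2$-cuts of a cactus that are already implicit in the paper, e.g. that the sets in $\Cscr$ form a laminar-like structure once we fix the root $r$ and discard cuts containing $r$). I would then process the minimal uncovered $2$-cuts in a bottom-up fashion: for each such minimal uncovered cut $C$, pick a leaf $t\in T_C$ and a link $\ell\in L$ covering $C$ with $t$ as an endpoint — such a link exists because the instance is feasible, and since $M$ contains no bad links and $C$ is not covered by $M$, the set $T_C$ has at least two elements not both ``trapped'' by a single matching edge, which is what lets us always find a useful covering link. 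Adding $\ell$ to the partial solution covers $C$ and possibly several other previously uncovered cuts.

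The heart of the argument is the charging scheme. Each link we add to complete $M$ should be charged, with total weight $1$, either to an unmatched leaf of $G$ or to an in-link of $M$ at rate $\tfrac12$. I would split the added links into two groups. A link $\ell$ added to cover a minimal uncovered cut $C$ that contains an unmatched leaf in $T_C$ gets charged to that unmatched leaf; since in the bottom-up process each minimal uncovered cut is disjoint from the others processed so far in the relevant sense, each unmatched leaf is charged at most once, giving the $+(|T|-2|M|)$ term. The remaining added links are those covering minimal uncovered cuts $C$ all of whose leaves in $T_C$ are matched by $M$; here I would argue, using the no-bad-link hypothesis, that $C$ must contain (in its interior) at least one in-link of $M$, i.e.\ a matching edge $\{u,v\}$ with both $u,v\in C$, and that this in-link is ``used up'' by at most two such cuts — hence the cost $1$ of $\ell$ is charged as $\tfrac12$ to each of at most two in-links, or $1$ to a single in-link, matching the $+\tfrac12|M_{\mathrm{in}}|$ term. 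The $|M|$ term is of course just the cost of the links of $M$ itself, which are all retained in $F$.

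I would then bundle this together: $|F| = |M| + (\text{number of added links}) \le |M| + (|T|-2|M|) + \tfrac12|M_{\mathrm{in}}|$, which is exactly~\eqref{eq:completionGuaranteeM}, and feasibility of $F$ holds because every $2$-cut is covered either by $M$ at the outset or by one of the added links during the bottom-up sweep. To make the process polynomial, I would implement it as: repeatedly find a minimum-depth uncovered $2$-cut (computable via the cactus structure and a min-cut-style check on $G + (\text{current links})$), add a covering link incident to an appropriate leaf, and repeat until no uncovered $2$-cut remains.

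\emph{Main obstacle.} The delicate point is the second group of added links: I need a clean combinatorial argument that, when a minimal uncovered $2$-cut $C$ has all of $T_C$ matched, $C$ strictly contains an in-link of $M$, \emph{and} that the map from such cuts to these in-links is at most $2$-to-$1$. This is where the ``no bad link'' hypothesis and the leaf-to-leaf+ structure (every link endpoint is a leaf or the root, so a $2$-cut's boundary interacts with links only through leaves) must be used carefully — if $T_C\subseteq\{u,v\}$ for the would-be covering matching edge we would have a bad link, contradiction, so $|T_C|\ge 2$ forces a genuinely interior matched leaf, and tracing which matching edge lies in $C$ and bounding how many sibling cuts can point to it is the crux. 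Getting the constant right (that it is $\tfrac12$ per in-link and not worse) is what pins down the $\tfrac12|H_{\mathrm{in}}|$ term in Theorem~\ref{thm:1/2inlinks+}, so I would spend the most care there.
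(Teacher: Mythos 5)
Your plan has two genuine gaps, and the first one is structural. You assert that the uncovered $2$-cuts, once we fix the root $r$ and discard cuts containing it, form a laminar family. That is false for cacti (it is the special feature of the tree/TAP case): already in a single cycle on $r,a,b,c$, the cuts $\{a,b\}$ and $\{b,c\}$ both belong to $\mathcal{C}$ and cross. Consequently the ``minimal uncovered cuts'' of your bottom-up sweep need not be disjoint, and the once-per-unmatched-leaf charging that gives you the $(|T|-2|M|)$ term is no longer justified. The paper's proof is organized around exactly this obstruction: it first proves the cost bound for an arbitrary \emph{laminar} subfamily of $\mathcal{C}^M$ (Lemma~\ref{lem:nonempty-tap-case}), then lifts it to all of $\mathcal{C}^M$ by passing to the dual of the directed-cut LP \eqref{eq:directed-cut-lp-residual-instance} and uncrossing the dual support, and finally uses integrality of that LP (Lemma~\ref{lem:lp-integral}) to turn the fractional bound into the actual completion $U$ with $F=M\cup U$.

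The second gap is the one you flag yourself, and as stated the accounting does not give the claimed coefficient. To obtain $+\tfrac12|M_{\mathrm{in}}|$, each in-link of $M$ may absorb total charge at most $\tfrac12$ over the entire process; your scheme charges a full unit either to a single in-link, or $\tfrac12$ to each of two in-links while also allowing an in-link to be used by up to two cuts, which only yields $+|M_{\mathrm{in}}|$ (or worse), i.e., the weaker guarantee already available from prior work. The correct combinatorial fact (which the paper proves) is that any uncovered cut $C$ all of whose terminals in $T_C$ are matched contains \emph{two distinct} in-links of $M$ with endpoints in $T_C$: if $t_1\in T_C$ is matched by $\{s_1,t_1\}$, then $s_1\in C$ because $C$ is uncovered by $M$, and non-badness of $\{s_1,t_1\}$ gives some $t_2\in T_C\setminus\{s_1,t_1\}$. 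But the paper exploits this fractionally — it places weight $\tfrac12$ on one \emph{maximal} link entering each matched pair and weight $1$ on one maximal link entering each unmatched leaf (Claim~\ref{claim:maximal_makes_sense}), so every laminar uncovered cut receives total weight at least $1$ — and then converts to an integral completion via LP integrality, not via a per-cut integral charge. Relatedly, your greedy step ``pick any link incident to some $t\in T_C$ covering $C$'' can charge the same unmatched leaf repeatedly along a chain of nested cuts; you would need the maximal-link choice to prevent this. So as written the proposal does not establish \eqref{eq:completionGuaranteeM}; repairing both issues essentially leads back to the paper's LP-based argument.
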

Our matching-based algorithm now simply computes a matching without bad links that minimizes the right-hand side of~\eqref{eq:completionGuaranteeM}, and completes it to a solution with the guarantee claimed by \cref{lem:matching_bound}.
Note that finding a matching without bad links that minimizes
\begin{equation*}
|M|+\frac{1}{2}|M_{\mathrm{in}}| + (|T| - 2|M|) = -|M_{\mathrm{cross}}| - \frac{1}{2}|M_{\mathrm{in}}| + |T|,
\end{equation*}
can easily be done by deleting all bad links and non leaf-to-leaf links and finding a maximum weight matching over the remaining links where each cross-link has a weight of $1$ and each in-link has a weight of $\sfrac{1}{2}$. 

To obtain that our matching-based procedure leads to a solution with the guarantees claimed by \cref{thm:1/2inlinks}, we show that there exist matchings without bad links that lead to a cheap CacAP solution through \cref{lem:matching_bound}.
\begin{lemma}\label{lem:matching_exists}
For any Leaf-to-Leaf+ CacAP instance $(G, L)$, there exists a matching $M\subseteq L$ on the leaves of $G$ without bad links such that 
\begin{equation*}
|M|+\frac{1}{2}|M_{\mathrm{in}}|+(|T| - 2|M|) \leq |H| + \frac{1}{2}|H_{\mathrm{in}}|
\end{equation*}
for any solution $H$ of the instance.
\end{lemma}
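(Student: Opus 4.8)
The plan is to derive the desired matching from a single, carefully chosen solution of the instance. First observe that it suffices to prove the statement for the solution $H^\star$ that minimizes $|H|+\tfrac12|H_{\mathrm{in}}|$ over all solutions $H$: once we have a matching $M$ without bad links with $|M|+\tfrac12|M_{\mathrm{in}}|+(|T|-2|M|)\le |H^\star|+\tfrac12|H^\star_{\mathrm{in}}|$, the same inequality holds with $H^\star$ replaced by any solution $H$, since the right-hand side only grows. We may and do assume $H^\star$ is inclusion-wise minimal. It is convenient to put weights $w$ on the links, $w(\ell)=1$ for $\ell\in L_{\mathrm{cross}}$ and $w(\ell)=\tfrac32$ for $\ell\in L_{\mathrm{in}}$. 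Then $|H|+\tfrac12|H_{\mathrm{in}}|=w(H)$ for every solution $H$, and, since a matching $M$ on the leaves leaves exactly $|T|-2|M|$ leaves uncovered and $|M|+\tfrac12|M_{\mathrm{in}}|+(|T|-2|M|)=|T|-|M_{\mathrm{cross}}|-\tfrac12|M_{\mathrm{in}}|=w(M)+(|T|-2|M|)$, the goal becomes: find a matching $M\subseteq L$ on the leaves without bad links with $w(M)+(|T|-2|M|)\le w(H^\star)$.

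Second, I would establish this bound for an inclusion-wise maximal matching $M_0\subseteq H^\star$, temporarily ignoring the requirement that $M_0$ contain no bad link. Every leaf $t$ is incident to a link of $H^\star$ (the singleton $\{t\}$ is a $2$-cut and must be covered). If $t$ is not matched by $M_0$, any such link $\{t,x\}\in H^\star$ lies in $H^\star\setminus M_0$ and, by maximality of $M_0$, its other endpoint $x$ is matched by $M_0$ or equals the root; assigning to $t$ one such link yields an injection from the $|T|-2|M_0|$ unmatched leaves into $H^\star\setminus M_0$. Since $w(\ell)\ge 1$ for all $\ell$, this gives $|T|-2|M_0|\le w(H^\star\setminus M_0)$ and hence $w(M_0)+(|T|-2|M_0|)\le w(M_0)+w(H^\star\setminus M_0)=w(H^\star)$. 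So if $M_0$ happens to contain no bad link, we are done.

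The remaining, and main, task is to transform $M_0$ into a matching $M$ with no bad links while preserving the bound $w(M)+(|T|-2|M|)\le w(H^\star)$. Suppose $\ell=\{u,v\}\in M_0$ is bad, witnessed by a $2$-cut $C\in\Cscr$ with $u,v\in C$ and $T_C\subseteq\{u,v\}$. Since $H^\star$ covers $C$ and every link covering $C$ has its endpoint inside $C$ in $T_C\subseteq\{u,v\}$, there is a link of $H^\star$ incident to $u$ or $v$ that leaves $C$, say $\{u,z\}$ with $z\notin C$ (this link is in particular not bad with respect to $C$). The idea is to drop $\ell$ from the matching and re-saturate $u$, $v$, and any leaf freed in the process using such links together with links to currently-unmatched leaves, restoring maximality inside $H^\star$ (or, where needed, inside $H^\star$ together with a short alternating path), and then repeat. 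The subtle point — and the main obstacle of the whole proof — is to choose these exchanges so that (i) no newly added link is bad and (ii) the quantity $w(M)+(|T|-2|M|)$ does not increase along the way; the danger in (ii) is to replace a cross-link by an in-link, which costs an extra $\tfrac12$ in $w$ without a compensating gain. I expect this to require arguing about the cactus structure of the cuts witnessing badness (so that a bad link can always be replaced by links crossing essentially the same part of the cactus and whose cross/in type is no worse), together with a potential argument — e.g.\ counting the bad links in the current matching — to guarantee termination. By contrast, the reduction and reformulation of the first paragraph and the charging bound of the second are routine; the whole difficulty is concentrated in this bad-link elimination step.
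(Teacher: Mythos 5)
Your first two paragraphs are fine (the reduction to a single extremal $H^\star$, the weight reformulation, and the injection argument for a maximal matching inside $H^\star$ are all correct), but the proof is not complete: the step you yourself identify as ``the main obstacle'' --- turning $M_0$ into a matching with no bad links without increasing $w(M)+(|T|-2|M|)$ --- is exactly where the entire content of the lemma lies, and you only sketch an exchange heuristic for it. The difficulties are real: after dropping a bad link $\{u,v\}$ witnessed by $C$, the only links of $H^\star$ crossing $C$ have an endpoint in $\{u,v\}$, and their other endpoints may already be matched, so re-saturation forces alternating-path arguments in which you must simultaneously control (i) that no inserted link is itself bad (with respect to a possibly different cut), (ii) that cross-links are not traded for in-links, and (iii) termination; none of these is established, and it is not clear the local exchanges can be made to work at all.

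The paper avoids this entirely by never letting bad links into the matching: it takes $M^H$ to be a maximal matching among the \emph{non-bad} leaf-to-leaf links of $H$, and then repairs the resulting hole in your injection argument. The hole is that an uncovered leaf may be incident in $H$ only to bad links whose other endpoint is also uncovered, so a single bad link can be ``charged'' by both of its endpoints. The paper's key claim is that this over-charging is paid for by extra links of $H$: for each connected component $\mathcal{K}$ of the graph of such bad links on the uncovered leaves, the union of the witnessing $2$-cuts is again a $2$-cut (by uncrossing in the cactus), and feasibility of $H$ forces a link with exactly one endpoint in $\mathcal{K}$, yielding $\sum_{v\in T_{\mathcal{K}}}|\delta_H(v)|\geq 2m_{\mathcal{K}}+1\geq n_{\mathcal{K}}+m_{\mathcal{K}}$ and hence $|H\setminus M^H|\geq |T|-2|M^H|$. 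This uncrossing-plus-counting argument is the missing idea in your proposal; without it (or a worked-out substitute for your exchange scheme), the proof does not go through.
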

Finally, \cref{thm:1/2inlinks} is a straightforward consequence of the above statements.
\begin{proof}[Proof of \cref{thm:1/2inlinks}]
As discussed, we can efficiently find a matching $M$ that minimizes the right-hand side of~\eqref{eq:completionGuaranteeM}, and then use \cref{lem:matching_bound} to efficiently obtain a solution $F$.
\cref{lem:matching_exists} implies that this solution has the desired guarantees.
\end{proof}

We now provide details on the steps performed in our algorithm to obtain a solution $F\supseteq M$ from a leaf-to-leaf matching $M$ without bad links as claimed in \cref{lem:matching_bound}.
To extend $M$ to a CacAP solution $M\cup U$, for some link set $U\subseteq L$, we need that $U$ covers all $2$-cuts not yet covered by $M$. Hence, $M\cup U$ is a CacAP solution if and only if $U\cap \delta_L(C)\neq \emptyset$ for all $C\in\mathcal{C}^M$, where
\begin{equation*}
\mathcal{C}^M \coloneqq \bigl\{ C \subseteq V\setminus\{r\} \ : \ |\delta_E(C)|=2 \text{ and } \delta_L(C)\cap M=\emptyset \bigr\}.
\end{equation*}
To find a good extension $U$, we use an LP based on directed links that is integral, which is an idea introduced in \cite{cecchetto_2021_bridging}.
More precisely, we use the following directed link set, which contains for each original link two antiparallel directed links:
\begin{equation*}
\vec{L}\coloneqq \bigcup_{\{u,v\}\in L}\left\{(u,v),(v,u)\right\},
\end{equation*}
and solve the following LP to find a good extension $U$:
\begin{equation}\label{eq:directed-cut-lp-residual-instance}
\min \left\{ x(\vec{L}) : x \in \mathbb{R}^{\vec{L}}_{\ge 0},\ x\left(\delta^-_{\vec{L}}(C)\right) \ge 1 \quad\forall\; C\in \mathcal{C}^M \right\}.
\tag{dir-LP}
\end{equation}
In words, the LP requires one to ``buy'' directed links, and a cut is only counted as covered if a directed link is entering it.\footnote{We highlight that, in the context of TAP, \eqref{eq:directed-cut-lp-residual-instance} corresponds to the well-known integral up-link LP, which first replaces each link $\{u,v\}$ by at most two other links, one from each endpoint of the link to the vertex of the unique $u$-$v$ path in the tree that lies closest to the root.}
Whereas \eqref{eq:directed-cut-lp-residual-instance} is thus not a relaxation of the problem of finding the best completion $U$ for $M$, it can be shown to be integral with vertices being $\{0,1\}$-vectors (see~\cref{sec:proof_matching_bound}).
Moreover, it has an optimal objective value that is no more than twice as expensive as the optimal completion.
This follows from the observation that one can set, for each link $\{u,v\}$ in an optimal completion $U^*$, the values of $x$ for both $(u,v)$ and $(v,u)$ to $1$ to obtain a feasible solution for \eqref{eq:directed-cut-lp-residual-instance}.

Our matching-based approach is summarized in \cref{algo:new_backbone}.

\begin{algorithm2e}[H]
\addtolength\linewidth{-4ex}
\begin{enumerate}[label=(\arabic*)]\itemsep2pt
\item\label{algitem:computeMatching}
Compute a leaf-to-leaf matching $M\subseteq L$ without bad links maximizing $w(M)$ for 
\begin{equation*}
w(\ell) \coloneqq 
	\begin{cases}
		1 &\text{ if }\ell\text{ is a cross-link} \\
		\tfrac{1}{2} &\text{ if }\ell\text{ is an in-link}.
	\end{cases}
\end{equation*}
\item\label{algitem:complete}
Compute an optimal vertex solution $x^*$ of the LP below ($x^*$ is a $\{0,1\}$-vector):
\begin{equation*}
\renewcommand\arraystretch{1.5}
\begin{array}{r>{\displaystyle}rc>{\displaystyle}ll}
\min & x(\vec{L})                &     &                                         &                              \\
     & x(\delta^-_{\vec{L}}(C))\ & \ge & 1                                       & \forall\; C\in \mathcal{C}^M \\
     & x                         & \in & \mathbb{R}^{\vec{L}}_{\ge 0}  . &
\end{array}
\end{equation*}
Let $U \coloneqq \{\{u,v\}\in L  \colon x^*((u,v))=1 \text{ or } x^*((v,u))=1\}$.
\item Return $M\cup U$.
\end{enumerate}
\caption{Our matching-based approach}\label{algo:new_backbone}
\end{algorithm2e}

In the rest of this section, we provide the details to show that \cref{algo:new_backbone} indeed leads to a Leaf-to-Leaf+ CacAP solution $F=M\cup U$ satisfying $|F| \leq |H| + \frac{1}{2}|H_{\mathrm{in}}|$ for any other solution $H$, which thus implies \cref{thm:1/2inlinks+}.
To this end, we first provide a proof for \cref{lem:matching_exists} in \cref{sec:goodMatchingsExist}, thus showing that a good matching $M$ exists.
We then show in \cref{sec:proof_matching_bound} that our completion procedure used in \cref{algo:new_backbone} leads to a solution with the desired guarantees.

\subsection{Existence of good matching (proof of \texorpdfstring{\cref{lem:matching_exists}}{Lemma 10})}
\label{sec:goodMatchingsExist}

We now provide a proof of \cref{lem:matching_exists}, which shows that there is a good matching $M$.

\begin{proof}[Proof of \cref{lem:matching_exists}]
Let $H\subseteq L$ be any solution of the given Leaf-to-Leaf+ CacAP instance $(G=(V,E),L)$.
Let $M^{H}\subseteq H$ be an inclusion-wise maximal matching consisting only of leaf-to-leaf links in $H$ that are not bad. 
We denote by $T_{\text{cov}}\subseteq T$ the set of leaves covered by $M^{H}$.
Notice that each link $\ell \in H \setminus M^{H}$ satisfies either
    \begin{enumerate}
        \item\label{enum:notbad} one endpoint of $\ell$ belongs to $T_{\text{cov}}$, or
        \item $\ell$ is a bad link. 
    \end{enumerate}
    
Let $H^{\mathrm{bad}}\subseteq H$ be the set of bad links in $H$ for which~\ref{enum:notbad} does not hold. 
    The following claim upper bounds the number of links in $H^{\mathrm{bad}}$.
    \begin{claim}\label{claim:boundoptbad}
    \begin{equation*}
        \sum_{v\in T\setminus T_{\mathrm{cov}}} |\delta_{H}(v)| \geq |T\setminus T_{\mathrm{cov}}| + |H^{\mathrm{bad}}|  .
    \end{equation*}
    \end{claim}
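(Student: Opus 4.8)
\textbf{Proof plan for Claim~\ref{claim:boundoptbad}.}

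The plan is to charge each vertex in $T\setminus T_{\mathrm{cov}}$ and each bad link in $H^{\mathrm{bad}}$ against the link-endpoints counted by the left-hand side, i.e.\ against the incidences $\sum_{v\in T\setminus T_{\mathrm{cov}}}|\delta_H(v)|$. First I would observe that every leaf $v$ of $G$ must be covered by at least one link of $H$ (since there is a $2$-cut $\{v\}$, or more precisely a $2$-cut containing only $v$ among the leaves — for a leaf-to-leaf+ instance the relevant minimal $2$-cut is a singleton leaf cut that must be crossed), so $|\delta_H(v)|\ge 1$ for every $v\in T$. This already gives the term $|T\setminus T_{\mathrm{cov}}|$. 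The remaining task is to exhibit an injection (or an appropriate counting) that assigns to each bad link $\ell\in H^{\mathrm{bad}}$ an additional distinct incidence at some vertex of $T\setminus T_{\mathrm{cov}}$, beyond the one incidence already used per vertex.

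The key structural point I would use is the definition of $H^{\mathrm{bad}}$: these are bad links both of whose endpoints lie \emph{outside} $T_{\mathrm{cov}}$ (otherwise property~\ref{enum:notbad} would hold). Hence every $\ell\in H^{\mathrm{bad}}$ has both endpoints in $T\setminus T_{\mathrm{cov}}$, so it contributes $2$ to the left-hand side sum. Writing $d\coloneqq|T\setminus T_{\mathrm{cov}}|$ and letting $S$ be the set of incidences at vertices of $T\setminus T_{\mathrm{cov}}$ coming from links of $H^{\mathrm{bad}}$, we have $|S|=2|H^{\mathrm{bad}}|$. Now I want to say that the total incidence count at $T\setminus T_{\mathrm{cov}}$ is at least $d + |H^{\mathrm{bad}}|$. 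The cleanest route: partition $T\setminus T_{\mathrm{cov}}$ into the vertices that are endpoints of some link in $H^{\mathrm{bad}}$ — call this set $B$, with $|B|=b$ — and the rest, which contribute at least $1$ each, i.e.\ at least $d-b$ incidences. Each vertex of $B$ has at least one $H^{\mathrm{bad}}$-incidence, but I need slightly more; here I would invoke maximality of $M^H$ together with the non-badness requirement in $M^H$: if a vertex of $B$ had \emph{only} bad-link incidences and no others, one still needs to produce the extra count. The argument I expect to work is a handshake-style double count restricted to $B$: $\sum_{v\in B}|\delta_H(v)| \ge 2|H^{\mathrm{bad}}| $ just from the $H^{\mathrm{bad}}$ edges among $B$, and combining with the $\ge 1$ bound on $T\setminus(T_{\mathrm{cov}}\cup B)$ gives $\sum_{v\in T\setminus T_{\mathrm{cov}}}|\delta_H(v)| \ge (d-b) + 2|H^{\mathrm{bad}}|$. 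It remains to check $(d-b)+2|H^{\mathrm{bad}}| \ge d + |H^{\mathrm{bad}}|$, i.e.\ $|H^{\mathrm{bad}}| \ge b$, which says the number of bad links is at least the number of vertices they span — this needs an extra observation, most likely that each vertex of $T\setminus T_{\mathrm{cov}}$ is incident to \emph{exactly} one link of $H$ in the tight case, or that the bad links in $H^{\mathrm{bad}}$ form a graph on $B$ with at least $|B|$ edges.

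The main obstacle, I expect, is exactly this last comparison: controlling the relationship between $|H^{\mathrm{bad}}|$ and the number of vertices spanned by these bad links, or equivalently avoiding double-charging when several bad links share an endpoint. I anticipate the resolution uses the precise definition of a bad link — each $\ell=\{u,v\}\in H^{\mathrm{bad}}$ witnesses a $2$-cut $C_\ell$ with $T_{C_\ell}\subseteq\{u,v\}\subseteq C_\ell$ — so that for a vertex $v\in T\setminus T_{\mathrm{cov}}$, any link of $H$ covering such a witness cut $C_\ell$ must be incident to $u$ or $v$; tracing through which links cover which witness cuts, one sets up a charging where the ``extra'' incidence for $\ell$ is the second endpoint incidence, and maximality of $M^H$ prevents two bad links of $H^{\mathrm{bad}}$ from being vertex-disjoint in a way that would let both their endpoint-pairs be added to $M^H$. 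I would formalize this by choosing, for each $\ell\in H^{\mathrm{bad}}$, a canonical endpoint to charge (say the one of smaller index, or the one closer to the root), argue this map is injective into $T\setminus T_{\mathrm{cov}}$ after removing the already-counted single incidence, and conclude the displayed inequality.
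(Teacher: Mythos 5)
Your reduction of the claim to the inequality $|H^{\mathrm{bad}}|\geq |B|$ (the number of bad links is at least the number of vertices of $T\setminus T_{\mathrm{cov}}$ they span) is where the argument breaks: that inequality is false in general. Already a single link $\ell\in H^{\mathrm{bad}}$ whose two endpoints are both in $T\setminus T_{\mathrm{cov}}$ gives $|H^{\mathrm{bad}}|=1$ but $|B|=2$, and your count $(d-b)+2|H^{\mathrm{bad}}|=d$ falls one short of the required $d+|H^{\mathrm{bad}}|=d+1$. More generally, whenever a connected component of the graph $(T\setminus T_{\mathrm{cov}},H^{\mathrm{bad}})$ is a tree, the incidences coming from the bad links themselves plus one incidence per vertex outside $B$ are not enough; you need one genuinely \emph{additional} incidence per such component from a link of $H$ that is not in $H^{\mathrm{bad}}$. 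Your closing remarks (charging a canonical endpoint per bad link, or invoking maximality of $M^H$ to forbid vertex-disjoint bad links) do not supply this: vertex-disjoint bad links are perfectly possible (they just lie in different components), and maximality of $M^H$ is irrelevant here since bad links can never be added to $M^H$ anyway.

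The missing idea, which is exactly how the paper argues, is a per-component uncrossing step. Work componentwise on $G^{\mathrm{bad}}=(T\setminus T_{\mathrm{cov}},H^{\mathrm{bad}})$: for a component $\mathcal{K}$ with $n_{\mathcal{K}}$ vertices and $m_{\mathcal{K}}$ edges, take for each bad link $\ell\in H^{\mathrm{bad}}_{\mathcal{K}}$ a witness $2$-cut $C^{\ell}$ with $T_{C^{\ell}}\subseteq \ell\subseteq C^{\ell}$, and form the union $C^{\mathcal{K}}=\bigcup_{\ell}C^{\ell}$. Because the component is connected, these cuts can be merged pairwise (two intersecting $2$-cuts avoiding the root have a $2$-cut union), so $C^{\mathcal{K}}$ is itself a $2$-cut; feasibility of $H$ gives a link $\ell_{\mathcal{K}}\in H$ crossing it. By the bad-link property every link crossing $C^{\mathcal{K}}$ has an endpoint among the endpoints of the bad links of $\mathcal{K}$, i.e.\ in $T_{\mathcal{K}}$, while its other endpoint lies outside $C^{\mathcal{K}}\supseteq T_{\mathcal{K}}$; hence $\ell_{\mathcal{K}}\notin H^{\mathrm{bad}}_{\mathcal{K}}$ and it contributes one extra incidence to $T_{\mathcal{K}}$. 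This yields $\sum_{v\in T_{\mathcal{K}}}|\delta_H(v)|\geq 2m_{\mathcal{K}}+1\geq n_{\mathcal{K}}+m_{\mathcal{K}}$, using $m_{\mathcal{K}}\geq n_{\mathcal{K}}-1$ for connected components (isolated vertices are covered by your correct observation that every leaf has $|\delta_H(v)|\geq 1$). Summing over components gives the claim. So your first paragraph and your identification of the witness-cut structure are on the right track, but without the union-of-witness-cuts argument the charging scheme you sketch cannot be completed.
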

Before proving \cref{claim:boundoptbad}, we show that it implies the desired result.
Assuming \cref{claim:boundoptbad}, we get
\begin{equation}\label{eq:boundOPTInMOPT}
    \begin{aligned}
       |H| &= |M^{H}| + |H \setminus M^{H}|
        \geq |M^{H}| + \sum_{v\in T\setminus T_{\mathrm{cov}}} |\delta_{H}(v)| - |H^{\mathrm{bad}}|
        \geq |M^{H}| + |T\setminus T_{\mathrm{cov}}| \\
        &= |M^{H}| + \left(|T|-2|M^{H}|\right) ,
    \end{aligned} 
\end{equation}
    where the first inequality holds because all links in $H$ that are incident to a vertex in $T\setminus T_{\rm{cov}}$ are contained in $H \setminus M^{H}$ and only links in $H^{\mathrm{bad}}$  have both endpoints in $T\setminus T_{\rm{cov}}$, and the second inequality follows from \cref{claim:boundoptbad}. 
Hence, if $M\subseteq L$ is a matching using only leaf-to-leaf links that are not bad and such that it minimizes $|M| + \tfrac{1}{2}|M_{\mathrm{in}}|+(|T|-2|M|)$ among all such matchings, we obtain
    \begin{equation*}
        |M|+\frac{1}{2}|M_{\mathrm{in}}|+(|T| - 2|M|) \leq|M^{H}|+\frac{1}{2}|M^{H}_{\mathrm{in}}|+\left(|T| - 2|M^{H}|\right) 
        \leq |H| + \frac{1}{2}|H_{\mathrm{in}}| ,
    \end{equation*} 
where the first inequality follows from the fact that $M$ has been chosen to minimize that expression and the second one is due to~\eqref{eq:boundOPTInMOPT} and $\frac{1}{2}|M^{H}_{\mathrm{in}}| \leq  \frac{1}{2}|\mathrm{H}_{\mathrm{in}}|$, which holds because $M^{H}_{\mathrm{in}} \subseteq H_{\mathrm{in}}$.

    It remains to prove \cref{claim:boundoptbad}.
    
    \begin{claimproof}[Proof of \cref{claim:boundoptbad}]
        Let $G^{\mathrm{bad}}= (T\setminus T_{\mathrm{cov}},H^{\mathrm{bad}})$ be the graph induced by links in $H^{\mathrm{bad}}$ on $T\setminus T_{\mathrm{cov}}$. 
        Let $\mathcal{K}=(T_{\mathcal{K}}, H^{\rm{bad}}_{\mathcal{K}})$ be one of its connected components and let $n_{\mathcal{K}}\coloneqq |T_{\mathcal{K}}|$ and $m_{\mathcal{K}}\coloneqq |H^{\rm{bad}}_{\mathcal{K}}|$ denote the number of vertices and edges of $\mathcal{K}$, respectively.
        We will show 
        \begin{equation}\label{eq:bound_for_connected_comp}
        \sum_{v\in T_{\mathcal{K}}} |\delta_{H}(v)| \ge n_{\mathcal{K}} + m_{\mathcal{K}}. 
        \end{equation}
The claim then follows by summing over all connected components of $G^{\rm{bad}}$.

        For each link $\ell \in H^{\rm{bad}}_{\mathcal{K}}$, there exists a cut $C^{\ell}\in \Cscr$ such that $\ell=\{u,v\}$ is a bad link with respect to $C^{\ell}$, i.e., we have $T_{C^{\ell}} \subseteq \{u,v\} \subseteq C^{\ell}$.
         Let
        \begin{equation}\label{eq:defUnionCutCi}
            C^{\mathcal{K}} \coloneqq \bigcup_{\ell \in H^{\rm{bad}}_{\mathcal{K}}} C^{\ell}  .
        \end{equation}
Then $C^{\mathcal{K}}$ is also a $2$-cut in $G$, which can be derived via well-known combinatorial uncrossing arguments as follows.
Start with the family $\mathcal{F}\coloneqq \{C^{\ell} \colon \ell \in H_{\mathcal{K}}^{\mathrm{bad}}\}$.
We maintain that $\mathcal{F}$ is a family of $2$-cuts that are \emph{connected} in the sense that the graph with vertex set $\mathcal{F}$ that has an edge between $C_1\in \mathcal{F}$ and $C_2\in\mathcal{F}$ if $C_1\cap C_2 \neq \emptyset$, is connected.
Because, $\mathcal{K}$ is connected, this holds for the starting $\mathcal{F}$.
We then successively select any two sets $C_1,C_2\in \mathcal{F}$ with $C_1\cap C_2\neq \emptyset$ and replace it by $C_1\cup C_2$.
Because $C_1$ and $C_2$ are $2$-cuts that intersect (and both do not contain the root $r$), also $C_1\cup C_2$ is a $2$-cut.
(This is the step implied by classic uncrossing techniques; see, e.g., Lemma~24 in \cite{cecchetto_2021_bridging}.)
At the end of this procedure we have the single cut $C^{\mathcal{K}}$ in our family, which is therefore also a $2$-cut, as desired.

Because $H$ is a Leaf-to-Leaf+ CacAP solution, there exists a link $\ell_{\mathcal{K}}$ in $H$ that covers the $2$-cut $C^{\mathcal{K}}$.
Note that one endpoint of $\ell_{\mathcal{K}}$ must be contained in $T_{\mathcal{K}}$ because by~\eqref{eq:defUnionCutCi} we have
\begin{equation*}
\delta(C^{\mathcal{K}}) \subseteq \bigcup_{\ell \in H^{\mathrm{bad}}_{\mathcal{K}}} \delta(C^{\ell}),
\end{equation*}
and the choice of $C^{\ell}$ together with the definition of bad links implies that for $\ell=\{v,w\}\in H^{\mathrm{bad}}$, any link in $\delta(C^{\ell})$ must have $u\in T_{\mathcal{K}}$ or $v\in T_{\mathcal{K}}$ as one of its endpoints.
Moreover, one endpoint of $\ell_{\mathcal{K}}$ is not contained in $T_{\mathcal{K}}$, because $T_{\mathcal{K}}\subseteq C^{\mathcal{K}}$.
Hence, in particular, $\ell_{\mathcal{K}}$ is not contained in the set $H^{\rm{bad}}_{\mathcal{K}}$ of links of the component $\mathcal{K}$.
Because each of the $m_{\mathcal{K}}$ many links in $H^{\rm{bad}}_{\mathcal{K}}$ has both endpoints in $T_{\mathcal{K}}$ and the link $\ell_{\mathcal{K}}$ has only one endpoint in $T_{\mathcal{K}}$, we have 
        \begin{equation*}
            \sum_{v\in \mathcal{K}} |\delta_{H}(v)| \geq 2m_{\mathcal{K}} +1 \geq n_{\mathcal{K}} + m_{\mathcal{K}} ,
        \end{equation*}
because $\mathcal{K}$ is connected.
This shows \eqref{eq:bound_for_connected_comp}.
\end{claimproof}
\end{proof}

\subsection{Completing matchings to CacAP solutions}\label{sec:proof_matching_bound}

In this section we show that the completion $U$ of $M$ computed in step~\ref{algitem:complete} of \cref{algo:new_backbone} leads to a solution $F= U\cup M$ fulfilling the guarantees claimed by \cref{lem:matching_bound}.
To this end, we first observe that \eqref{eq:directed-cut-lp-residual-instance} is integral; actually, it only has $\{0,1\}$-vertices, i.e., vertices where each coordinate is either $0$ or $1$.
This guarantees that step~\ref{algitem:complete} of \cref{algo:new_backbone} can indeed be performed as described.

We recall the definition of \textit{residual instance} from \cite[Definition 17]{cecchetto_2021_bridging}, which will be useful when considering instances in which some links have been contracted.

\begin{definition}[residual instance]\label{def:residual_instance}
 Let $\Iscr=(G,L)$ be a CacAP instance and let $L'\subseteq L$. Let $L'=\{\ell_1, \ldots, \ell_h\}$ be a numbering (ordering) of the links in $L'$. The \textit{residual instance} of $\mathcal{I}$ with respect to $L'$ and this numbering is the instance that arises by performing the following contraction operation sequentially for each link $\ell=\ell_1$ up to $\ell=\ell_h$:
contract all vertices that are on every $u$-$v$ path in the cactus, where $u$ and $v$ are the endpoints of $\ell$, into a single vertex.
 \end{definition}
 
As proved in \cite{cecchetto_2021_bridging}, any contraction order leads to the same outcome (Lemma~18 in \cite{cecchetto_2021_bridging}) and hence we will in the following simply talk about the residual instance of $\Iscr$ with respect to $L'$ without specifying an order of the link in $L'$.
Moreover, a residual instance with respect to some link set $L'$ is a CacAP instance whose $2$-cuts correspond precisely to the $2$-cuts in $G$ that have not been covered by $L'$ (Lemma~19 in \cite{cecchetto_2021_bridging}).
This implies that a link set $F$ is a feasible solution for the residual instance of $\Iscr$ with respect to $L'$ if and only if $F\cup L'$ is a feasible solution for the instance $\Iscr$ (Corollary~20 in \cite{cecchetto_2021_bridging}).

\begin{lemma}\label{lem:lp-integral}
All vertices of the feasible region of \eqref{eq:directed-cut-lp-residual-instance} are within $\{0,1\}^{\vec{L}}$.
\end{lemma}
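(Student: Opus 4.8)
The plan is to show that \eqref{eq:directed-cut-lp-residual-instance} is in fact a special case of a classical integral linear program, namely a covering LP whose constraint matrix is a network (interval-type) matrix, or more directly, that it coincides with the well-known \emph{directed cut covering LP} on an in-tree, which is totally dual integral. The cleanest route is to exploit the cactus structure: after taking the residual instance with respect to $M$, the family $\Cscr^M$ of remaining $2$-cuts is laminar up to complementation (since all $2$-cuts of a cactus not containing the root form a laminar family once we fix the root-side), and we may build an auxiliary in-forest $T$ whose nodes are the minimal sets of $\Cscr^M$ together with the ``root region'', with arcs pointing toward the root. Each directed link $(u,v)\in\vec L$ then covers exactly the cuts $C\in\Cscr^M$ such that $C$ lies on the $T$-path strictly between $v$ and the apex of $\{u,v\}$ (this is exactly the up-link interpretation flagged in the footnote, generalized to cacti via Definition~\ref{def:residual_instance}). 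Thus the incidence vector of $(u,v)$ restricted to $\Cscr^M$ is the indicator of a directed path in an in-forest, so the constraint matrix of \eqref{eq:directed-cut-lp-residual-instance} is (the transpose of) a network matrix, which is totally unimodular; combined with the integral right-hand side $\mathbbm 1$, this yields that every vertex of the feasible region is integral.

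First I would make precise the correspondence between $2$-cuts of a cactus (with a fixed root) and a laminar family, invoking the standard uncrossing fact already used in the proof of Lemma~\ref{lem:matching_exists} (``Lemma~24 in \cite{cecchetto_2021_bridging}''): two $2$-cuts not containing $r$ that cross can be uncrossed, so the maximal chains of $\Cscr^M$ organize into a forest. Concretely I would define, for each $C\in\Cscr^M$, its parent as the smallest cut in $\Cscr^M$ properly containing $C$ (or the artificial root if none exists), obtaining an in-tree $\mathcal T$ on $\Cscr^M\cup\{r\text{-region}\}$. Second, I would verify the key combinatorial claim: for a directed link $(u,v)$, the set $\{C\in\Cscr^M : v\in C,\ u\notin C\}=\delta^-_{\vec L}$-contribution of $(u,v)$ equals exactly the set of $\mathcal T$-nodes on the unique path from the $\mathcal T$-node containing $v$ up to (but not including) the $\mathcal T$-node containing $u$ — equivalently up to the apex of $\ell=\{u,v\}$ in the residual cactus. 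This is where the residual-instance machinery (Lemmas 18–20 of \cite{cecchetto_2021_bridging}, quoted above) does the work, since it guarantees $\Cscr^M$ really is the $2$-cut family of a genuine cactus. Third, I would conclude: the system $x(\delta^-_{\vec L}(C))\ge 1$ for $C\in\Cscr^M$, $x\ge 0$ has constraint matrix whose rows are indexed by $\mathcal T$-nodes and whose columns are incidence vectors of directed root-ward $\mathcal T$-paths; such a matrix is a network matrix, hence totally unimodular, so $\{x\ge 0 : Ax\ge\mathbbm 1\}$ has only integral vertices, and since every column sum along any path constraint forces each coordinate of a vertex into $\{0,1\}$ (any vertex $x^*$ satisfies $0\le x^*_e$ and no facet can push a coordinate above $1$ because dropping it to $1$ keeps feasibility and strictly decreases the objective — or more cleanly, total unimodularity plus $\mathbbm 1$ RHS gives $x^*\in\{0,1\}^{\vec L}$ at any vertex since the unique basic solution for a $0/1$ TU system with $0/1$ bounds lies in $\{0,1\}$).

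Alternatively — and this may be the shorter write-up — I would not build $\mathcal T$ explicitly but instead cite the TAP up-link LP integrality result directly and reduce: the residual instance is a cactus, and on a cactus every $2$-cut is ``generated'' by a single cycle, so one can split $\vec L$ along cycles and observe the LP decomposes into independent blocks each equivalent to the up-link LP on a tree (the footnote's remark), whose integrality is classical. I would then only need to argue the $\{0,1\}$ (as opposed to merely integral) property, which follows because the objective $x(\vec L)$ with all coefficients $1$ is minimized and no vertex can have a coordinate $\ge 2$: decreasing such a coordinate by $1$ keeps $x\ge0$ and keeps every covering constraint satisfied (each constraint is $\ge 1$ and the coordinate contributed $\ge 2$), contradicting vertex-optimality — wait, that argues only about optimal vertices, so to get \emph{all} vertices I should stick with the TU argument, which gives integrality of every vertex, and then note that any integral point with a coordinate $\ge 2$ is not a vertex because it is the midpoint of two feasible integral points obtained by $\pm$ perturbing that coordinate. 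The main obstacle I anticipate is the bookkeeping in the second step: carefully proving that a directed link covers exactly an up-going path of residual $2$-cuts, handling the case where $u$ and $v$ lie in different principal subcacti (cross-link) versus the same one (in-link), and making sure the apex / ``vertices on every $u$–$v$ path'' notion from Definition~\ref{def:residual_instance} lines up with the parent relation in $\mathcal T$. Everything after that — TU of network matrices, integral polyhedra — is entirely standard and can be cited.
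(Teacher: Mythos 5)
There is a genuine gap, and it sits exactly at the step you flag as ``bookkeeping'': your whole construction rests on the claim that, after fixing the root side, the $2$-cuts of the (residual) cactus that do not contain $r$ form a laminar family, so that each directed link covers precisely a root-ward path in an in-forest $\mathcal{T}$. This is false for cacti. Within a single cycle of the cactus, every arc not containing $r$ is a $2$-cut, and two overlapping arcs (say $\{a,b\}$ and $\{b,c\}$ on a cycle $r,a,b,c$) cross: neither contains the other and they intersect. Hence $\Cscr^M$ has no parent function ``smallest cut properly containing $C$'', the in-tree $\mathcal{T}$ is not well defined, and the ``covers exactly an up-going path'' claim breaks down. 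This is precisely the difficulty that separates CacAP from TAP (where your argument would be fine, as the footnote about the up-link LP indicates), and the paper is explicit about it: \cref{lem:nonempty-tap-case} is proved only for a \emph{laminar subfamily} $\Lscr\subseteq\mathcal{C}^M$, and a separate dual uncrossing argument is then needed ``even if $\mathcal{C}^M$ is not necessarily laminar''. Your fallback route has the same problem in a different guise: although each $2$-cut of a cactus is induced by two edges of a single cycle, a single link covers cuts belonging to several cycles along its path, so the variables do not split by cycle and \eqref{eq:directed-cut-lp-residual-instance} does not decompose into independent per-cycle blocks each equivalent to a tree up-link LP.

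For comparison, the paper does not attempt a structural TU/network-matrix proof at all: it passes to the residual instance $(G^M,L^M)$, notes via Lemma~19 of \cite{cecchetto_2021_bridging} that its $2$-cuts are exactly $\mathcal{C}^M$, and then cites Lemma~14 of \cite{cecchetto_2021_bridging}, which asserts integrality of the directed cut LP for an arbitrary CacAP instance; the upgrade from integral vertices to $\{0,1\}$ vertices is the same perturbation argument you give at the end (that part of your proposal is fine). So to repair your write-up you would either cite that integrality result as a black box, as the paper does, or supply an integrality proof that genuinely handles crossing cuts inside cycles --- which cannot be done by laminarity or by an in-forest network matrix as proposed.
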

\begin{proof}
We consider the residual instance $(G^M, L^M)$ with respect to the matching $M$.
Then, as proved in \cite[Lemma~19]{cecchetto_2021_bridging}, one can observe that the 2-cuts $\mathcal{C}_{G^M}$
of the residual instance correspond precisely to the cuts in $\mathcal{C}^M$.
By Lemma~14 in \cite{cecchetto_2021_bridging}, the LP 
 \begin{equation*}
\min \left\{ x(\vec{L}) : x \in \mathbb{R}^{\vec{L}}_{\ge 0},\ x\left(\delta^-_{\vec{L}}(C)\right) \ge 1 \text{ for all }C\in \mathcal{C}_{G^M} \right\}
\end{equation*}
is integral and hence also \eqref{eq:directed-cut-lp-residual-instance} is integral.
Moreover, integrality of \eqref{eq:directed-cut-lp-residual-instance} readily implies that all vertices of the feasible region are $\{0,1\}$-vectors.
To observe this, we show that any point $x\in \mathbb{R}^{\vec{L}}_{\geq 0}$ feasible for \eqref{eq:directed-cut-lp-residual-instance} with $x(\ell)>1$ for some $\ell\in \vec{L}$ cannot be a vertex.
Indeed, for such a point $x$, both increasing or decreasing $x(\ell)$ by a small quantity will lead to other feasible points in $\eqref{eq:directed-cut-lp-residual-instance}$.
This implies that $x$ is not an extreme point of the feasible region of \eqref{eq:directed-cut-lp-residual-instance} and therefore not a vertex of it.
\end{proof}

To show that \cref{algo:new_backbone} returns a completion $U$ satisfying $|U|\leq \frac{1}{2}|M_{\mathrm{in}}| + (|T|- 2|M|)$, it remains to show that the optimal value of \eqref{eq:directed-cut-lp-residual-instance} is at most $\frac{1}{2}|M_{\mathrm{in}}| + (|T|-2|M|)$.
We will prove this in two steps.
First, we observe that if we only look at a subset of constraints consisting of a laminar subfamily $\mathcal{L}\subseteq \mathcal{C}^M$, then the statement holds.
Note that this case includes Leaf-to-Leaf+ TAP, where the set of all minimum cuts form a laminar family.
In a second step, we leverage this result on laminar cuts to obtain the desired upper bound on \eqref{eq:directed-cut-lp-residual-instance} also for Leaf-to-Leaf+ CacAP instances, as desired.
\begin{lemma}\label{lem:nonempty-tap-case}
Let $\Lscr \subseteq \mathcal{C}^M$ be a laminar family.
Then the optimum value of the LP
\begin{equation}\label{eq:laminar_polytope}
\min \Bigl\{ x(\vec{L}) : x\in \mathbb{R}^{\vec{L}}_{\ge 0},  x\bigl(\delta^-_{\vec{L}}(C)\bigr) \ge 1 \text{ for all }C\in \Lscr \Bigr\}
\end{equation}
is at most $\frac{1}{2}|M_{\mathrm{in}}| + (|T| -2 |M|)$.
\end{lemma}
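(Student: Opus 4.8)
The proof reduces to exhibiting an explicit feasible point of the polytope in \eqref{eq:laminar_polytope} of value at most $\tfrac12|M_{\mathrm{in}}|+(|T|-2|M|)$; since any feasible point upper bounds the optimum, it suffices to produce a set $D\subseteq\vec{L}$ of directed links with $D\cap\delta^-_{\vec{L}}(C)\neq\emptyset$ for every $C\in\Lscr$ and $|D|\le\tfrac12|M_{\mathrm{in}}|+(|T|-2|M|)$, and take its indicator vector. (One could also invoke the integrality of \eqref{eq:laminar_polytope} via \cref{lem:lp-integral}, but this is not needed.) I will use two structural facts, valid for every $C\in\Lscr\subseteq\mathcal{C}^M$. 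First, $G$ is a cactus, hence $2$-edge-connected, so $G[C]$ is connected; as $r\notin C$, all of $C$ lies in a single connected component of $G-r$, and therefore every link of $L$ with both endpoints in $C$ is an in-link. Second, $C$ is not covered by $M$, so every link of $M$ meeting $C$ has both endpoints in $C$; and $C$ is covered by some link of $L$ (feasibility), whose endpoint inside $C$ is a leaf because the instance is leaf-to-leaf+ and $r\notin C$; in particular $C$ contains a leaf and $T_C\neq\emptyset$.

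Before the recursion I bound the number of inclusion-minimal cuts of $\Lscr$, which are pairwise disjoint by laminarity. If such a cut $C$ contains an unmatched leaf, charge $C$ to one of them. Otherwise all leaves of $C$ are matched, hence (by the second fact) to leaves of $C$, hence (by the first fact) via in-links of $M$; so $C$ contains at least one in-link, and it must contain at least two, since if it contained exactly one in-link $\{u,v\}$ then $T_C\subseteq\{u,v\}\subseteq C$, making $\{u,v\}$ a bad link and contradicting the hypothesis on $M$. Charge $C$ to two of its in-links. Since minimal cuts are disjoint, these charges are disjoint, so the number of minimal cuts of $\Lscr$ is at most $(|T|-2|M|)+\tfrac12|M_{\mathrm{in}}|$.

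The core step is to construct $D$ with $|D|$ obeying the same bound, by a recursion on $\Lscr$ viewed as a forest $\mathcal{F}$ ordered by inclusion. At a call handling a subfamily with a single maximal element $C^{*}$ (initially one call per root of $\mathcal{F}$), pick a link $\{a,b\}$ covering $C^{*}$ — one exists since $C^{*}\in\mathcal{C}^M$ — chosen so that its endpoint $a\in C^{*}$ lies in the smallest possible cut $B$ of the current subfamily; add $(b,a)$ to $D$, which covers exactly the cuts of the subfamily that contain $a$, i.e. the chain from $B$ up to $C^{*}$; and recurse on the maximal cuts among the cuts of the subfamily that are proper subsets of $C^{*}$ and do not contain $a$. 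By construction $D$ covers all of $\Lscr$, and $|D|$ equals the number of ``bottom'' cuts $B$ produced over all calls. The claim is that each such $B$ admits the same charging as a minimal cut: by the greedy choice, no leaf lying in a cut strictly inside $B$ (within the current subfamily) has a link leaving $C^{*}$, which together with the second structural fact and the absence of bad links forces $B$ to contain either an unmatched leaf that is ``local'' to $B$ (not contained in a smaller cut of the current subfamily) or at least two in-links of $M$ all of whose endpoints are local leaves of $B$; charging $B$ accordingly and noting that the charged leaf or in-links determine the smallest cut of $\Lscr$ containing them makes the charges disjoint across all calls, yielding $|D|\le(|T|-2|M|)+\tfrac12|M_{\mathrm{in}}|$.

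The hard part is exactly this last charging argument. One must rule out a bottom cut $B$ whose only local matched pair would witness that $B$ is a bad cut were it not for some link covering $B$ that emanates from a strictly smaller cut, and argue that by maximality of the greedy choice such a link would already have been used to cover $B$ ``from below'', so that $B$ would not be a bottom cut at all. Equally, one must phrase the recursive invariant so that ``local'' leaves (and in-links) of different bottom cuts are genuinely disjoint — in particular that a bottom cut produced later, after its smaller cuts have been stripped away, is charged to resources not already used. The first two steps above are routine given \cref{lem:lp-integral}, the definition of bad links, and the structural facts; all of the technical weight sits in making this accounting precise.
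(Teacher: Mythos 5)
There is a genuine gap. Your plan is to prove a \emph{stronger, integral} statement: a set $D\subseteq\vec{L}$ of directed links covering all of $\Lscr$ with $|D|\le\frac12|M_{\mathrm{in}}|+(|T|-2|M|)$, built by a greedy recursion over the laminar forest and certified by charging each ``bottom cut'' $B$ to an unmatched leaf or to two in-links of $M$ that are ``local'' to $B$, with charges disjoint across recursive calls. But that charging/disjointness argument is exactly where the whole content of the lemma sits, and you explicitly leave it open (``all of the technical weight sits in making this accounting precise''). It is also genuinely delicate, not just tedious: when the endpoint $a$ of the chosen covering link is matched, non-badness of $\{a,a'\}$ only produces a second witness leaf $t_2\in T_B$, and nothing in your greedy rule (which inspects only links covering $C^{*}$, not links covering $B$ alone) prevents $t_2$ and its matching edge from lying inside a strictly smaller cut of the subfamily that you recurse on afterwards, where the same resources may be needed to pay for a later bottom cut. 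Your second paragraph (counting inclusion-minimal cuts) does not help here, since the bottom cuts of the recursion need not be minimal cuts of $\Lscr$. A side inaccuracy: \cref{lem:lp-integral} asserts integrality of the LP over the full family $\mathcal{C}^M$; it does not by itself give integrality of \eqref{eq:laminar_polytope}, which has only a subfamily of the constraints.

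The paper avoids all of this by proving only the fractional bound, which is all the lemma claims: it exhibits an explicit half-integral feasible point $x^M$, putting weight $1$ on a \emph{maximal} directed link entering each unmatched leaf and weight $\tfrac12$ on a maximal directed link entering each matched pair $\{v,w\}\in M_{\mathrm{in}}$. The value is exactly $(|T|-2|M|)+\tfrac12|M_{\mathrm{in}}|$ by construction, and feasibility per cut $C\in\Lscr$ follows from laminarity (the cuts containing a leaf form a chain, so a maximal entering link covers every cut that any entering link covers) together with the no-bad-link property, which guarantees two distinct matched pairs in $T_C$ whose chosen links each contribute $\tfrac12$. Because the weights are assigned once per unmatched leaf and once per matching edge, no disjoint-charging bookkeeping is needed. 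If you want to salvage your integral construction, you would have to supply the recursive invariant you sketch; otherwise, the fractional construction is the intended (and much shorter) route, with integrality invoked only later, for the full family $\mathcal{C}^M$, via \cref{lem:lp-integral}.
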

\begin{proof}
We construct a feasible solution $x^M$ for \eqref{eq:laminar_polytope} with $x^M(L) \le \frac{1}{2}|M_{\mathrm{in}}| + (|T| -2 |M|)$.
For a link $\ell \in \vec{L}$, let 
\[
\mathcal{C}^{\ell} \coloneqq \{ C\in \Lscr : \ell \in \delta^-(C) \}
\]
be the set of cuts in $\Lscr$ that are covered by $\ell$.
For a terminal $t\in T$, we call a link $(s,t)\in \vec{L}$ a \emph{maximal link entering} $t$ if the set $\Cscr^{(s,t)}$ is inclusion-wise maximal among all links entering $t$.
Similarly, for a link $\{v,w\}\in M$, we call $\ell \in \delta^-(\{v,w\})$ a \emph{maximal link entering} $\{v,w\}$ if the set $\Cscr^{\ell}$ is inclusion-wise maximal among all links entering $\{v,w\}$.
See \cref{fig:laminar_cuts} for an example.
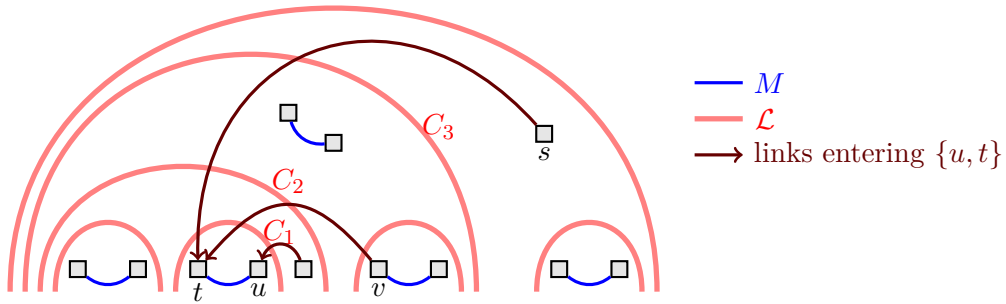
\begin{figure}[!ht]
\begin{center}
\begin{tikzpicture}[scale=0.4]

\path[use as bounding box] (-1, -4) rectangle (28,6);

\tikzset{lksm/.style={blue,solid,very thick}}
\tikzset{ts/.style={rectangle, fill=black!10,thick,draw=black,minimum size=6, inner sep=2pt}}
\tikzset{cuts/.append style={line width=2, red, opacity=0.5}}
\tikzset{links/.append style={red!40!black,very thick, solid,->}}

\begin{scope}
\foreach \n/\type/\x/\y in {%
	1/ts/0/-3,
	2/ts/2/-3,
	3/ts/4/-3,
	4/ts/6/-3,
	5/ts/7.5/-3,
	6/ts/10/-3,
	7/ts/12/-3,
	9/ts/7/2.2,
	10/ts/8.5/1.2,
	11/ts/16/-3,
	12/ts/18/-3,
	13/ts/15.5/1.5%
}{
	\node[\type] (\n) at (\x,\y) {};
}
\end{scope}

\begin{scope}
\node at (3)[below=0.05] {$t$};
\node at (4)[below=0.05] {$u$};
\node at (13)[below=0.05] {$s$};
\node at (6)[below=0.05] {$v$};
\end{scope}

\begin{scope}
\foreach \s/\t in {%
	1/2,
	3/4,
	6/7,
	9/10,
	11/12%
}{
	\draw[lksm] (\s) to[bend right=40] (\t);
}
\end{scope}

\begin{scope}
\foreach \lef/\righ/\loose in {%
	-0.75/2.75/2.25,
	3.25/6.75/2.25,
	9.25/12.75/2.25,
	15.25/18.75/2.25,
	-1.25/8.25/1.5,
	-1.75/13.25/1.8,
	-2.25/19.25/1.5
}{
	\draw[cuts] (\lef,-3.75) to[bend left=90, looseness=\loose] (\righ,-3.75);
}
\end{scope}

\begin{scope}[red]
\node at (6.7,-0.9)[below=0.05] {$C_1$};
\node at (7,0.8)[below=0.05] {$C_2$};
\node at (12,2.6)[below=0.05] {$C_3$};
\end{scope}

\begin{scope}[links]
\draw (6) to[bend right=50, looseness=1.5] (3);
\draw (13) to[bend right=70, looseness=1.5] (3);
\draw (5) to[bend right=70, looseness=1.5] (4);
\end{scope}

\begin{scope}[shift={(20.5,3.2)}]
\def\ll{16mm} %
\def\vs{12mm} %

\draw[lksm] (0,0) -- ++(\ll,0) node[right,opacity=1] {$M$};
\draw[cuts] (0,-\vs) -- (\ll,-\vs) node[right,opacity=1]{$\mathcal{L}$};
\draw[links] (0,-2*\vs) -- (\ll,-2*\vs) node[right,opacity=1]{links entering $\{u,t\}$};
\end{scope}

\end{tikzpicture} \end{center}
\caption{
The link $(s,t)$ is the maximal link entering $t$ because $\Cscr^{(s,t)}=\{C_1,C_2,C_3\} \supseteq \{C_1,C_2\}=\Cscr^{(v,t)}$. Similarly, the link $(s,t)$ is also the maximal link entering $\{t,u\}$.
}\label{fig:laminar_cuts}
\end{figure}

Using that $\Lscr$ is a laminar family, we can observe the following.

\begin{claim}\label{claim:maximal_makes_sense}
If $\ell$ is a link entering a terminal $t\in T$ and $\ell_{\max}$ is a maximal link entering $t$, then $\mathcal{C}^{\ell} \subseteq \mathcal{C}^{\ell_{\max}}$.
Similarly, if $\ell$ is a link entering $\{v,w\}\in M$ and $\ell_{\max}$ is a maximal link entering $\{v,w\}$, then $\mathcal{C}^{\ell} \subseteq \mathcal{C}^{\ell_{\max}}$.
\end{claim}
\begin{claimproof}
Let $\ell$ be a link entering a terminal $t\in T$ and $\ell_{\max}$ a maximal link entering~$t$. 
Because $\mathcal{L}$ is a laminar family, the cuts of $\mathcal{L}$ that contain $t$ form a chain $C_1\subsetneq C_2 \subsetneq \ldots \subsetneq C_q$.
Thus, $\mathcal{C}^\ell$ is a prefix of that chain, i.e., $\mathcal{C}^\ell=\{C_1, C_2, \ldots, C_i\}$ for some $i\in [q]$.
As $\ell_{\max}$ is a maximal link entering~$t$, its prefix must be the largest one among all links, and thus $\mathcal{C}^{\ell_{\max}} \supseteq \mathcal{C}^{\ell}$, as desired.

The second part of the statement follows by an analogous reasoning.
More precisely, let $\{v,w\}\in M$, let $\ell$ be a link entering $\{v,w\}$, and $\ell_{\max}$ be a maximal link entering $\{v,w\}$.
Any cut $C\in \mathcal{L}$ fulfills either $\{v,w\}\subseteq C$ or $\{v,w\}\cap C = \emptyset$, because $\mathcal{L}\subseteq \mathcal{C}^M$ contains only cuts that are not covered by the matching $M$.
Hence, all links in $\mathcal{C}^\ell$ are cuts of $\mathcal{L}$ containing both $v$ and $w$.
As before, the family of all cuts in $\mathcal{L}$ that contain both $v$ and $w$ form a chain because $\mathcal{L}$ is laminar.
Because $\ell_{\max}$ is a maximal link entering $\{v,w\}$, the chain $\mathcal{C}^{\ell_{\max}}$ must be the largest one among all those links.
Thus, $\mathcal{C}^{\ell_{\max}}\supseteq \mathcal{C}^{\ell}$, as claimed.
\end{claimproof}

We now explain how we construct a cheap feasible solution $x^M$ for LP~\eqref{eq:laminar_polytope}.
Let $T^M\subseteq T$ be the set of leaves of $G$ that are not covered by the matching $M$.
Then $|T^M| = |T| -2 |M|$.
We define the vector $x^M$ as follows:
\begin{itemize}
\item for each leaf $t\in T^M$, we choose a maximal link $\ell \in \vec{L}$ entering $t$ and set $x^M_{\ell} \coloneqq 1$;
\item for each in-link $\{v,w\}\in M_{\rm{in}}$, we choose a maximal link $\ell \in \vec{L}$ entering $\{v,w\}$ and set $x^M_{\ell} \coloneqq \tfrac{1}{2}$;
\item for all other links in $\vec{L}$, we set $x^M_{\ell} \coloneqq 0$.
\end{itemize}
Clearly, $x^M(L) = |T^M| + \tfrac{1}{2} |M_{\mathrm{in}}| =  \tfrac{1}{2} |M_{\mathrm{in}}| + (|T| -2 |M|)$.
Thus, it only remains to prove $x^M\bigl(\delta^-_{\vec{L}}(C)\bigr) \ge 1$ for all cuts $C\in \Lscr$.

To this end, fix a cut $C\in \Lscr$.
Recall that $T_C \subseteq T\cap C$ is the set of terminals in $C$ that  have an incident link covering $C$.
We first observe that if $T_C$ contains the head of a cross-link $\ell\in M_{\mathrm{cross}}$, then, because $x^M_\ell=1$ and any cross-links covers all cuts containing its head, we have $x^M(\delta^-_{\vec{L}}(C)) \geq 1$ as desired.

We now distinguish two cases.
First, suppose $T_C$ contains a terminal $t\in T^M$ that is not covered by the matching $M$. 
Then we have $x_{\ell_{\max}}=1$ for a maximal link $\ell_{\max} \in \vec{L}$ entering $t$.
Because $t\in T_C$, there exists a link $\ell \in \vec{L}$ that enters $t$ and covers $C$.
By Claim~\ref{claim:maximal_makes_sense}, the maximality of $\ell_{\max}$ implies that also $\ell_{\max}$ covers $C$.
Hence, $x^M(\delta^-(C)) \ge x^M_{\ell_{\max}} = 1$.

We now consider the remaining case where all terminals in $T^C$ are covered by in-links in the matching $M$.
Let $t_1 \in T^C \subseteq C$. (Note that $T_C$ is not empty because our CacAP instance is feasible.)
In the matching $M$, the terminal $t_1$ is covered by an in-link $\{s_1,t_1\} \in M_{\rm{in}}$. 
Because the link $\{s_1,t_1\}$ is not bad, there exists a terminal $t_2 \in T_C \setminus \{s_1,t_1\}$, which is covered in the matching $M$ by an in-link $\{s_2,t_2\}\in M_{\rm{in}}$.
For each $i\in\{1,2\}$, we chose a maximal link $\ell_i\in\vec{L}$ entering $\{s_i,t_i\}$ and defined $x_{\ell_i}\coloneqq \tfrac{1}{2}$.
Because $t_i\in T_C$, there exists a link $\ell \in \vec{L}$ that enters $t_i$ and covers $C$. 
By Claim~\ref{claim:maximal_makes_sense}, this implies that also the link $\ell_i$ covers the cut $C$.
Hence, we have $x^M(\delta^-(C)) \ge x^M_{\ell_1} + x^M_{\ell_2} = \tfrac{1}{2} + \tfrac{1}{2} = 1$.
 \end{proof}

We now use \cref{lem:nonempty-tap-case}  to show that \eqref{eq:directed-cut-lp-residual-instance} has a cheap feasible solution even if $\mathcal{C}^M$ is not necessarily laminar.
This will complete the proof of \cref{lem:matching_bound}.

\begin{lemma}
The optimum value of \eqref{eq:directed-cut-lp-residual-instance} is at most  $\frac{1}{2}|M_{\mathrm{in}}| + |T| -2 |M|$.
\end{lemma}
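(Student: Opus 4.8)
The plan is to derive the general (non-laminar) case from \cref{lem:nonempty-tap-case} by passing to the LP dual of \eqref{eq:directed-cut-lp-residual-instance} and then uncrossing. The dual is the fractional cut-packing LP $\max\bigl\{\sum_{C\in\mathcal{C}^M} y_C : y\in\mathbb{R}^{\mathcal{C}^M}_{\ge 0},\ \sum_{C\in\mathcal{C}^M:\ \ell\in\delta^-_{\vec L}(C)} y_C\le 1\ \forall\,\ell\in\vec L\bigr\}$. Since the instance is feasible, every cut of $\mathcal{C}^M$ is entered by at least one directed link, so the dual feasible region is contained in $[0,1]^{\mathcal{C}^M}$; in particular \eqref{eq:directed-cut-lp-residual-instance} has a finite optimum and strong duality holds. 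Hence it suffices to bound the optimal dual value by $\tfrac12|M_{\mathrm{in}}|+|T|-2|M|$.

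The heart of the argument is to show that there is an \emph{optimal} dual solution $y^*$ whose support $\mathcal{S}\coloneqq\{C\in\mathcal{C}^M:\ y^*_C>0\}$ is a laminar family; I would do this by the standard uncrossing technique, which needs two ingredients. (i) $\mathcal{C}^M$ is closed under uncrossing: if $C_1,C_2\in\mathcal{C}^M$ cross, then $C_1\cap C_2$ and $C_1\cup C_2$ are nonempty proper subsets of $V$ not containing $r$, and submodularity of the cut function $|\delta_E(\cdot)|$ together with $2$-edge-connectivity of the cactus forces $|\delta_E(C_1\cap C_2)|=|\delta_E(C_1\cup C_2)|=2$ (this is exactly the uncrossing step already invoked in the proof of \cref{claim:boundoptbad}; cf.\ Lemma~24 in \cite{cecchetto_2021_bridging}); moreover any link crossing $C_1\cap C_2$ or $C_1\cup C_2$ necessarily crosses $C_1$ or $C_2$, so neither new cut is crossed by $M$, whence $C_1\cap C_2,\,C_1\cup C_2\in\mathcal{C}^M$. (ii) For each $\ell=(s,t)\in\vec L$, the coverage indicator $C\mapsto\mathbbm{1}[\ell\in\delta^-_{\vec L}(C)]=\mathbbm{1}[t\in C]-\mathbbm{1}[\{s,t\}\subseteq C]$ is submodular in $C$, being the difference of the modular function $C\mapsto\mathbbm{1}[t\in C]$ and the supermodular function $C\mapsto\mathbbm{1}[\{s,t\}\subseteq C]$. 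Given (i) and (ii), the uncrossing move that shifts weight $\epsilon=\min\{y_{C_1},y_{C_2}\}$ off $C_1,C_2$ and onto $C_1\cap C_2,C_1\cup C_2$ keeps $y$ dual-feasible (by (ii) no constraint value increases) and preserves the objective $\sum_C y_C$, while strictly increasing $\sum_C y_C|C|^2$ (since $|C_1\cap C_2|^2+|C_1\cup C_2|^2>|C_1|^2+|C_2|^2$ for a crossing pair). Taking an optimal dual solution $y^*$ maximizing $\sum_C y_C|C|^2$ — which exists because the optimal face is bounded — its support $\mathcal{S}$ is therefore laminar.

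To conclude, note that $y^*$ is a feasible solution of the LP dual of \eqref{eq:laminar_polytope} with $\mathcal{L}\coloneqq\mathcal{S}$ (the dual constraints for $\mathcal{L}=\mathcal{S}$ are a weakening of those of \eqref{eq:directed-cut-lp-residual-instance}). By weak duality the optimum of \eqref{eq:laminar_polytope} for this laminar $\mathcal{L}$ is at least $\sum_C y^*_C$, which by strong duality equals the optimum of \eqref{eq:directed-cut-lp-residual-instance}. On the other hand, since $\mathcal{L}=\mathcal{S}\subseteq\mathcal{C}^M$ is laminar, \cref{lem:nonempty-tap-case} bounds the optimum of \eqref{eq:laminar_polytope} by $\tfrac12|M_{\mathrm{in}}|+(|T|-2|M|)$. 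Chaining these inequalities gives that the optimum of \eqref{eq:directed-cut-lp-residual-instance} is at most $\tfrac12|M_{\mathrm{in}}|+|T|-2|M|$. Combined with \cref{lem:lp-integral} (which makes step~\ref{algitem:complete} return a $\{0,1\}$-solution $x^*$ of this value), this shows the completion $U$ satisfies $|U|\le\tfrac12|M_{\mathrm{in}}|+|T|-2|M|$, so $|F|=|M|+|U|$ fulfills \eqref{eq:completionGuaranteeM}, finishing \cref{lem:matching_bound}.

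I expect the main obstacle to be pinning down the uncrossing step cleanly: verifying the submodularity of the per-link coverage function in (ii) and checking that the "not crossed by $M$" property is preserved in (i); everything else is routine LP-duality bookkeeping. An alternative to the duality route would be to build a laminar subfamily $\mathcal{L}\subseteq\mathcal{C}^M$ directly from the cactus/cycle structure such that feasibility for $\mathcal{L}$ already implies feasibility for $\mathcal{C}^M$, but the dual-uncrossing approach seems both shorter and more robust.
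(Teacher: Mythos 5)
Your proposal is correct and follows essentially the same route as the paper's proof: dualize \eqref{eq:directed-cut-lp-residual-instance} to the fractional cut-packing LP, uncross an optimal dual solution using the potential $\sum_C y_C|C|^2$ to obtain a laminar support, and then invoke \cref{lem:nonempty-tap-case} via LP duality (the paper phrases this as a contradiction argument, you argue directly, which is immaterial). The only difference is that you explicitly verify the two facts the paper leaves implicit --- that $\mathcal{C}^M$ is closed under taking unions/intersections of crossing cuts and that the directed coverage indicators satisfy the submodular inequality $\chi^{\delta^-(C_1)}+\chi^{\delta^-(C_2)}\ge\chi^{\delta^-(C_1\cap C_2)}+\chi^{\delta^-(C_1\cup C_2)}$ --- and both verifications are sound.
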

\begin{proof}
We consider the dual of \eqref{eq:directed-cut-lp-residual-instance}, which is
\begin{equation}\label{eq:cactus_dual}
\max \left\{\sum_{C\in \mathcal{C}^M} \lambda_C \ : \ \lambda \in \mathbb{R}_{\ge 0}^{\mathcal{C}^M},\ \sum_{C\in \mathcal{C}^M} \lambda_C  \cdot \chi^{\delta^-_{\vec{L}}(C)} \le \chi^{\vec{L}} \right\}.
\end{equation}

Suppose by the sake of deriving a contradiction that the optimum value of the primal LP \eqref{eq:directed-cut-lp-residual-instance} is strictly greater than $\frac{1}{2}|M_{\rm{in}}| + |T| -2 |M|$.
Then, by strong duality, there exists a feasible solution $\lambda^*$ to the dual LP~\eqref{eq:cactus_dual} with 
\begin{equation*}
\sum_{C \in \mathcal{C}^M} \lambda^*_C \ >\  \frac{1}{2}|M_{\mathrm{in}}| + |T| -2 |M|  .
\end{equation*}
We will show by a standard combinatorial uncrossing argument that we may assume the support $\{C\in \mathcal{C}^M : \lambda^*_C > 0\}$ of $\lambda^*$ to be a laminar family $\Lscr\subseteq \mathcal{C}^M$.
Then by restricting $\lambda^*$ to its support, we obtain a feasible solution to 
\begin{equation}\label{eq:tree_dual}
\max \left\{\sum_{C\in \Lscr} \lambda_C \ : \ \lambda \in \mathbb{R}_{\ge 0}^{\Lscr},\ \sum_{C\in \Lscr} \lambda_C  \cdot \chi^{\delta^-_{\vec{L}}(C)} \le \chi^{\vec{L}} \right\}
\end{equation}
with value $\sum_{C \in \Lscr} \lambda^*_C \ >\  \frac{1}{2}|M_{\mathrm{in}}| + |T| -2 |M|$.
This implies, again by duality, that the optimum value of the LP
\begin{equation*}
\min \Bigl\{ x(\vec{L}) : x\in \mathbb{R}^{\vec{L}}_{\ge 0},  x\bigl(\delta^-_{\vec{L}}(C)\bigr) \ge 1 \text{ for all }C\in \Lscr \Bigr\}
\end{equation*}
 is strictly greater than $\frac{1}{2}|M_{\rm{in}}| + |T| -2 |M|$, contradicting Lemma~\ref{lem:nonempty-tap-case}.

It remains to show that we may assume the support of $\lambda^*$ to be laminar.
Let $S, T \in \mathcal{C}^M$ be two crossing sets. Then
\[
 \chi^{\delta^-(S)}+ \chi^{\delta^-(T)} \geq \chi^{\delta^-(S\cap T)}+\chi^{\delta^-(S\cup T)}.
 \]
 Hence, if $S$ and $T$ are in the support of $\lambda^*$, we can decrease $\lambda^*_S$ and $\lambda^*_T$ by some sufficiently small $\epsilon >0$ and at the same time increase $\lambda^*_{S\cap T}$ and $\lambda^*_{S\cup T}$ by $\epsilon$, such that the vector $\lambda^*$ remains feasible for \eqref{eq:cactus_dual} and the value of $\sum_{C \in \mathcal{C}^M} \lambda^*_C $ does not change.
However, the value of $\sum_{C \in \mathcal{C}^M} \lambda^*_C \cdot |C|^2 $ increases.
Therefore, if we choose $\lambda \in \mathbb{R}^{\mathcal{C}^M}_{\ge 0}$ to be a feasible solution to \eqref{eq:cactus_dual} with $\sum_{C \in \mathcal{C}^M} \lambda_C = \sum_{C \in \mathcal{C}^M} \lambda^*_C$ that maximizes $\sum_{C \in \mathcal{C}^M} \lambda_C \cdot |C|^2 $ among all such vectors $\lambda$, then the support of $\lambda$ is a laminar family $\Lscr \subseteq \mathcal{C}^M$.
\end{proof}

\section{Reducing to \texorpdfstring{$\mathbf{O(1)}$}{O(1)}-wide instances}\label{sec:reduction}

Recall that~\cite{cecchetto_2021_bridging} gave a reduction from general instances of CacAP to \emph{$O(1)$-wide instances} (Definition~\ref{def:k-wide}).
In this section we show that this reduction can be adapted for the case of leaf-to-leaf CacAP instances:
we show that, at the expense of an arbitrarily small constant loss in the approximation factor, it suffices to consider $O(1)$-wide leaf-to-leaf CacAP instances.

\begin{theorem}\label{thm:reduction}
Let $\alpha \geq 1$, $\epsilon> 0$, and  $k\coloneqq \frac{64(8+3\epsilon)}{\epsilon^3}$. 
Given an $\alpha$-approximation algorithm $\mathcal{A}$ for $k$-wide Leaf-to-Leaf CacAP instances, there is an $\alpha \cdot (1+2\epsilon)$-approximation algorithm $\mathcal{B}$ for (unrestricted) Leaf-to-Leaf CacAP that calls $\mathcal{A}$ at most polynomially many times and performs further operations taking polynomial time.
\end{theorem}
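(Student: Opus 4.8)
The plan is to adapt the reduction of Cecchetto, Traub, and Zenklusen from general CacAP to the leaf-to-leaf setting, essentially reusing their machinery but tracking one extra invariant: that every intermediate instance produced by the reduction is still leaf-to-leaf (or at worst leaf-to-leaf+, which by \cref{thm:1/2inlinks+} is just as good for our purposes). First I would recall the structure of the original reduction. It proceeds by choosing a good root for the cactus, and then, as long as some $r$-principal subcactus is ``too wide'' (contains more than $k$ leaves), one argues that there must be a witness structure — a sub-instance that is itself moderately sized but whose optimal solution is large relative to $|T|$ — which can be solved (nearly) optimally by brute force and then contracted, reducing the total number of leaves. The contraction step is where one must be careful: contracting a set of links in a leaf-to-leaf instance can in principle create new vertices of degree $\geq 3$ and thus destroy the leaf-to-leaf property at the endpoints of the remaining links.

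The key step is therefore to verify that the contraction operation used in the reduction, when applied to a leaf-to-leaf (or leaf-to-leaf+) CacAP instance, yields again a leaf-to-leaf+ instance. I expect this to follow because the contractions in the reduction are always of the form ``solve a principal subcactus and contract its solution,'' which collapses an entire subcactus hanging off the root into the root itself; hence the only vertex whose degree changes is the root $r$, and all surviving links still have their (surviving) endpoints among the leaves, so the instance remains leaf-to-leaf+ with the same root $r$. This is exactly the reason the definition of leaf-to-leaf+ was introduced earlier in the excerpt — it is the natural closure of the leaf-to-leaf class under the contractions arising in this reduction. I would state this as a short lemma and then invoke it at each step of the reduction.

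Granting this invariant, the rest of the argument is identical to \cite{cecchetto_2021_bridging}: partition the principal subcacti into ``small'' ones (at most $k$ leaves, which $\mathcal{A}$ handles directly) and ``large'' ones; for each large subcactus, either it admits a witness sub-instance that is dense enough to be contracted with negligible loss, or its optimum is so small relative to its leaf count that the analysis closes; choosing $k = \frac{64(8+3\epsilon)}{\epsilon^3}$ makes the accumulated multiplicative error at most $(1+2\epsilon)$. The number of brute-force calls and contractions is polynomial because each contraction strictly decreases $|T|$. The main obstacle is purely bookkeeping: one must confirm that every place where \cite{cecchetto_2021_bridging} invokes an approximation algorithm or an exact solver on a sub-instance, that sub-instance inherits the leaf-to-leaf+ property, so that in particular the final calls are to $\mathcal{A}$ on genuinely $k$-wide leaf-to-leaf instances (one may need a final trivial step to convert a leaf-to-leaf+ instance back to leaf-to-leaf by splitting the root, analogous to the footnote construction in the introduction). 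I would defer the detailed verification of which lemmas of \cite{cecchetto_2021_bridging} carry over verbatim to an appendix (\cref{appendix:reduction}), and in the main text give only the lemma asserting the preserved invariant together with the resulting recursion on $|T|$ and the error accounting.

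Combining \cref{thm:reduction} with the constant $k$ chosen above and with \cref{cor:fourOverThreeKWide} (or the stronger stack-analysis bound) then yields \cref{thm:main_reduction} in the leaf-to-leaf setting, and ultimately \cref{thm:main}.
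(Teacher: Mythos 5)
There is a genuine gap, and it sits exactly at the step you flag as ``the key step.'' Your proposed invariant --- that the contractions arising in the reduction of \cite{cecchetto_2021_bridging} preserve the leaf-to-leaf (or leaf-to-leaf+) property --- is false, and the paper includes \cref{fig:contraction} precisely to exhibit a counterexample. The contractions in that reduction are \emph{not} of the form ``collapse a whole principal subcactus into the root'': they are contractions of individual links (the links $L_H$ covering $x$-heavy cuts, and the guessed link sets $S\subseteq\delta_L(s)$ in the $k$-wide subroutine). Contracting a link $\{u,v\}$ merges all vertices lying on every $u$--$v$ path into a single supernode, which is in general an internal vertex of degree at least $3$ located far from the root and may be an endpoint of surviving links; after such a contraction one can have leaf-to-non-leaf and even non-leaf-to-non-leaf links, so neither leaf-to-leaf nor leaf-to-leaf+ survives. (Relatedly, leaf-to-leaf+ was not introduced as a closure under these contractions: links incident to the root arise from the root-shadow completion needed for the stack analysis, not from the reduction.) Since your whole argument rests on running the old reduction verbatim while carrying this invariant, it does not go through as stated.

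The paper's actual route accepts that contractions destroy the property and instead \emph{bounds the damage}. It defines $f(\Iscr)$ as the number of non-leaf endpoints of links, observes that each link contraction increases this count by at most one while splitting operations (\cref{lem:splitting_maintains_leaf_to_leaf}) and link deletions do not increase it, and uses a version of the reduction with an additive error term $\sum_{\Jscr} f(\Jscr)$ over the split minors of an $(\epsilon\cdot\OPT,k)$-splitting (\cref{thm:reduction_with_f}, \cref{lem:algorithms_bs}), so this total error is at most $\epsilon\cdot\OPT$. Then, on each $k$-wide split minor that fails to be leaf-to-leaf, it pays for the repair explicitly (\cref{lem:algorithms_as}): contract at most $f(\Iscr)$ further links so that every non-leaf link endpoint is merged with a leaf or isolated, attach an auxiliary leaf $t_s$ to each relevant supernode and reroute its links there, obtaining a genuine $k$-wide leaf-to-leaf instance whose optimum is no larger; calling $\mathcal{A}$ there and adding back the contracted links costs at most $\alpha\cdot\OPT(\Iscr)+f(\Iscr)$. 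Summing the repairs gives the extra $\epsilon\cdot\OPT$ that turns the usual $(1+\epsilon)$ loss into the $(1+2\epsilon)$ in the theorem statement --- a discrepancy your proposal also does not account for, since a reduction that truly preserved the leaf-to-leaf property would give $(1+\epsilon)$ directly.
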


The reason why we cannot use the reduction to $O(1)$-wide instances given in~\cite{cecchetto_2021_bridging} as a black box is that in our setting we have to make sure that we maintain the property that all links are leaf-to-leaf links. 
The $k$-wide instances that result from the reduction given in~\cite{cecchetto_2021_bridging} are obtained from the original instance by (possibly repeatedly) deleting links and applying the following two types of operations, which we call \emph{splitting} and \emph{contraction}.

\begin{definition}[splitting]\label{def:splitting}
Let $C \subseteq V$ with $|\delta_E(C)|=2$. \emph{Splitting at $C$} leads to two sub-instances $\mathcal{I}_C$ and $\mathcal{I}_{V\setminus C}$, where $\mathcal{I}_C$ is the instance obtained from $\mathcal{I}$ by contracting all vertices except for those in $C$, and $\mathcal{I}_{V\setminus C}$ is the instance obtained from $\mathcal{I}$ by contracting $C$. 
\end{definition}

\begin{definition}[contraction]\label{def:contraction}
 Let $\Iscr=(G=(V,E),L)$ be a CacAP instance and let $\ell\in L$. \emph{To contract $\ell=\{u,v\}$} means that we contract the set of vertices that are contained in every $u$-$v$ path in $G$. 
 \end{definition}

We observe that splitting operations maintain the leaf-to-leaf property.

\begin{lemma}\label{lem:splitting_maintains_leaf_to_leaf}
Let $\Iscr=(G=(V,E),L)$ be an instance of CacAP and let $C \subseteq V$ with $|\delta_E(C)|=2$.
If $\Iscr$ is an instance of Leaf-to-Leaf CacAP, then also the CacAP instances  $\mathcal{I}_C$ and $\mathcal{I}_{V\setminus C}$ that result from splitting $\Iscr$ at $C$ are instances of Leaf-to-Leaf CacAP.

More generally, if $p$  is the number of vertices in $\Iscr$ that are endpoints of a link in $L$, but are not leaves of $G$, then the total number of such non-leaf endpoints of links in $\Iscr_C$ and $\Iscr_{V\setminus C}$ is also $p$.
\end{lemma}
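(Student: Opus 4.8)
The plan is to analyze carefully what splitting at a $2$-cut $C$ does to the cactus structure and to the endpoints of links. First I would recall the two operations in Definition~\ref{def:splitting}: $\Iscr_C$ contracts everything outside $C$ into a single new vertex $v_{\text{out}}$, and $\Iscr_{V\setminus C}$ contracts $C$ into a single new vertex $v_C$. The key structural observation is that a contraction of a connected vertex set in a cactus along a $2$-cut again yields a cactus, and more importantly, that it changes the degrees of vertices in a very controlled way: every vertex $w$ that is \emph{not} inside the contracted set keeps its incident cactus edges, except that edges going into the contracted set now point to the single new vertex. Since $|\delta_E(C)|=2$, at most the two edges of $\delta_E(C)$ are affected, and they become the two edges incident to $v_{\text{out}}$ (resp.\ $v_C$); in particular the new vertex has degree~$2$, i.e.\ it is a leaf of the resulting cactus. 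Any other vertex $w$ retains all of its incident edges that do not cross $C$, hence its degree is unchanged — so $w$ is a leaf in the sub-instance if and only if it was a leaf in $\Iscr$. (One should double-check the boundary case where the two edges of $\delta_E(C)$ share an endpoint, i.e.\ form part of a common cycle of length~$2$; there the new vertex may merge these into parallel edges, but it still has exactly degree $2$.)

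Next I would track link endpoints. Every link $\ell=\{x,y\}\in L$ survives in exactly one of the two sub-instances with both endpoints intact, \emph{unless} it has one endpoint in $C$ and one outside, in which case it survives in \emph{both}: in $\Iscr_C$ as a link from its endpoint in $C$ to $v_{\text{out}}$, and in $\Iscr_{V\setminus C}$ as a link from its endpoint outside $C$ to $v_C$. (If both endpoints lie in $C$ it appears only in $\Iscr_C$; symmetrically for outside; if both endpoints are the two endpoints of the edges in $\delta_E(C)$ and these lie on opposite sides, the same ``crossing'' analysis applies.) Combining this with the degree analysis from the previous paragraph: the newly created vertices $v_{\text{out}}$ and $v_C$ are always leaves, so they never contribute to the count of non-leaf link endpoints. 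And every ``old'' vertex that is an endpoint of some surviving link is a leaf in the sub-instance iff it was a leaf in $\Iscr$. For the first assertion of the lemma (pure leaf-to-leaf input), this immediately gives that every link endpoint in either sub-instance is a leaf, so both sub-instances are Leaf-to-Leaf.

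For the more refined counting statement, I would argue as follows. Let $P$ denote the set of non-leaf link-endpoints of $\Iscr$, with $|P|=p$; every element of $P$ lies either in $C$ or in $V\setminus C$, inducing a partition $P = P_C \cupp P_{\text{out}}$. By the above, a vertex $w\in P_C$ is a non-leaf link-endpoint of $\Iscr_C$ (it stays non-leaf and still has an incident link there), and it is \emph{not} present as an old vertex in $\Iscr_{V\setminus C}$ at all (it got contracted into the leaf $v_C$); symmetrically for $P_{\text{out}}$. Conversely any non-leaf link-endpoint of a sub-instance is neither $v_{\text{out}}$ nor $v_C$ (those are leaves) hence is an old vertex that was already a non-leaf link endpoint of $\Iscr$, i.e.\ lies in $P$. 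Thus the non-leaf link-endpoints of $\Iscr_C$ are exactly $P_C$ and those of $\Iscr_{V\setminus C}$ are exactly $P_{\text{out}}$, and the total is $|P_C|+|P_{\text{out}}|=p$, as claimed. The main obstacle I anticipate is purely bookkeeping: making the degree-preservation claim airtight in all the small degenerate cases of cactus contraction (parallel edges, the two cut-edges belonging to the same cycle, a link endpoint coinciding with an endpoint of a cut edge), and making sure the correspondence between ``a surviving link'' and ``its endpoints in the sub-instance'' is stated precisely enough that the counting argument goes through without double-counting the crossing links.
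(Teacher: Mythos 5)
Your proposal is correct and follows essentially the same argument as the paper: the contracted side becomes a degree-$2$ vertex, hence a leaf, all other vertices keep their degree and thus their leaf status, and each link keeps its endpoints except that a cut-crossing link has one endpoint replaced by the new leaf, so the non-leaf link endpoints of $\Iscr_C$ and $\Iscr_{V\setminus C}$ are exactly those of $\Iscr$ split according to which side of $C$ they lie on. Your additional bookkeeping (crossing links appearing in both sub-instances, degenerate cactus cases) is fine but not needed beyond what the paper's short proof already covers.
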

\begin{proof}
Let $p_C$  and $p_{V\setminus C}$ be the number of non-leaf endpoints of links in $\Iscr$ that are contained in $C$ and $V\setminus C$, respectively.
Then $p_C + p_{V\setminus C} = p$.
In the sub-instance $\mathcal{I}_C$, the set $V\setminus C$ is contracted into a single vertex $v_{V\setminus C}$ that is a leaf for the new instance, because $|\delta_E(C)| = |\delta_E(v_{V\setminus C})| =2$. 
For each link $\ell \in \delta_E(C)$, one of its endpoints is replaced by the leaf $v_{V\setminus C}$ while the other endpoint does not change.
As the endpoints of other links are not changed and all leaves of $G$ in $C$ remain leaves after the contraction of $V\setminus C$, the number of non-leaf endpoints of links in $\mathcal{I}_C$ is $p_C$.
By symmetry, it follows that the number of non-leaf endpoints of links in $\mathcal{I}_{V\setminus C}$ is $p_{V\setminus C}$.
\end{proof}

Unfortunately, contractions of links do not maintain the leaf-to-leaf property. See \cref{fig:contraction}.
However, we will use that such contractions are not applied too often.

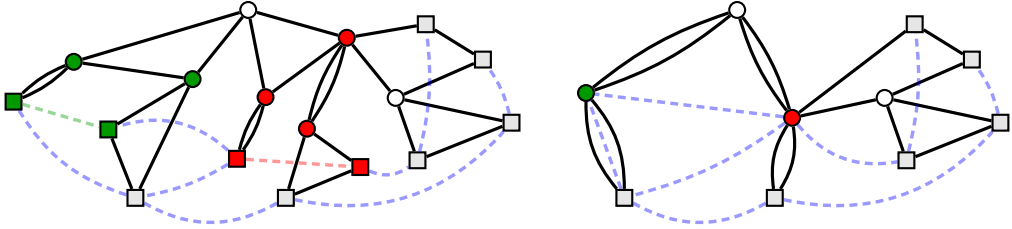
\begin{figure}[h!t]
\begin{center}
\begin{tikzpicture}[%
scale=0.5,
lks/.style={line width=1.3pt, blue, opacity=0.4, densely dashed},
ts/.style={every node/.append style={font=\scriptsize}},
ls/.style={every node/.append style={rectangle, fill=black!10}},
]

\begin{scope}[xshift=-8cm]
\begin{scope}[every node/.style={thick,draw=black,fill=white,circle,minimum size=6, inner sep=2pt}]

\begin{scope}[ls]
\node[fill=red]  (4) at (16.70,-7.96) {};
\node[fill=red]  (6) at (19.98,-8.18) {};
\node (8) at (23.24,-5.32) {};
\node (9) at (21.72,-4.38) {};
\node (10) at (21.50,-8) {};
\node (11) at (24,-7) {};
\node[fill=green!60!black] (14) at (10.76,-6.44) {};
\node[fill=green!60!black] (15) at (13.28,-7.18) {};
\node (16) at (14,-9) {};
\node (17) at (18,-9){};
\end{scope}

\node (1) at (17,-4) {};
\node[fill=red]  (2) at (17.46,-6.32) {};
\node[fill=red]  (3) at (19.62,-4.74) {};
\node[fill=red]  (5) at (18.56,-7.16) {};
\node (7) at (20.92,-6.34) {};
\node[fill=green!60!black] (12) at (12.36,-5.38) {};
\node[fill=green!60!black] (13) at (15.52,-5.84) {};
\end{scope}

\begin{scope}[very thick]

\draw (13) --  (1);
\draw  (1) --  (2);
\draw  (2) --  (3);
\draw  (1) -- (12);
\draw  (3) --  (1);
\draw  (9) --  (3);
\draw  (2) to[bend left=10] (4);
\draw  (2) to[bend right=10] (4);
\draw  (3) to[bend left=10] (5);
\draw  (3) to[bend right=10] (5);
\draw  (5) to (6);
\draw  (5) to (17);
\draw (17) -- (6);
\draw  (3) --  (7);
\draw  (7) --  (8);
\draw  (8) --  (9);
\draw  (7) -- (10);
\draw (10) -- (11);
\draw (11) --  (7);
\draw (14) to[bend left=10] (12);
\draw (12) to[bend left=10] (14);
\draw (16) -- (13);
\draw (12) -- (13);
\draw (13) -- (15);
\draw (15) -- (16);

\end{scope}

\begin{scope}[lks]

\draw[green!60!black] (14) to (15);
\draw (16) to[bend left=20] (14);
\draw[red] (4) to (6);
\draw (10) to[bend right=10] (9);
\draw (11) to[bend right=10] (8);
\draw (4) to[bend right] (15);
\draw (6) to[bend right] (10);
\draw (17) to[bend right] (11);
\draw (4) to[bend left=10] (16);
\draw (17) to[bend left] (16);

\end{scope}
\end{scope}

\begin{scope}[xshift=5cm]
\begin{scope}[every node/.style={thick,draw=black,fill=white,circle,minimum size=6, inner sep=2pt}]

\begin{scope}[ls]

\node (8) at (23.24,-5.32) {};
\node (9) at (21.72,-4.38) {};
\node (10) at (21.50,-8) {};
\node (11) at (24,-7) {};
\node (16) at (14,-9) {};
\node (17) at (18,-9){};

\end{scope}

\node[fill=red]  (4) at (18.46,-6.87) {};
\node[fill=green!60!black] (14) at (12.98,-6.21) {};
\node (1) at (17,-4) {};
\node (7) at (20.92,-6.34) {};
\end{scope}

\begin{scope}[very thick]

\draw (14) to[bend right=10](1);
\draw  (1) to[bend right=10] (4);
\draw  (1) to[bend right=10] (14);
\draw  (4) to[bend right=10] (1);
\draw  (9) --  (4);
\draw  (4) --  (7);
\draw  (7) --  (8);
\draw  (8) --  (9);
\draw  (7) -- (10);
\draw (10) -- (11);
\draw (11) --  (7);
\draw (16) to [bend right=20](14);
\draw (14) to [bend right=20](16);
\draw  (4) to[bend right=20] (17);
\draw (17) to[bend right=20] (4);

\end{scope}

\begin{scope}[lks]

\draw (16) to (14);
\draw (10) to[bend right=10] (9);
\draw (11) to[bend right=10] (8);
\draw (4) to (14);
\draw (4) to[bend left=10] (16);
\draw (17) to[bend left] (16);
\draw (4) to[bend right] (10);
\draw (17) to[bend right] (11);

\end{scope}
\end{scope}

\end{tikzpicture} \end{center}
\caption{
The left picture shows a leaf-to-leaf instance where two links to be contracted are highlighted in green and red, respectively, together with the vertices to be contracted. 
The right picture shows the resulting instance after the two links have been contracted. Notice that after these contraction operations there exist both leaf-to-non-leaf links and non-leaf-to-non-leaf links.
}\label{fig:contraction}
\end{figure}

The following definition formally captures the type of decompositions that we obtain from the reduction from~\cite{cecchetto_2021_bridging}.
 
\begin{definition}\label{def:eps_k_splitting}
Let $\Iscr=(G=(V,E),L)$ be a CacAP instance. A sequence of splitting operations, contraction operations, and deletions of links is called a \emph{ $(\gamma,k)$-splitting of $(G,L)$} if the following two properties are satisfied.
\begin{itemize}
    \item All instances obtained after applying the sequence of operations are $k$-wide; we call these instances the \emph{split-minors} of the $(\gamma,k)$-splitting.
    \item The number of contraction operations is at most $\gamma$.
\end{itemize}
\end{definition}

The proof from \cite{cecchetto_2021_bridging} yields the following general reduction statement.

\begin{theorem}[\cite{cecchetto_2021_bridging}]\label{thm:reduction_with_f}
Let $f: \{ \Iscr : \Iscr \text{ CacAP instance }\} \to \mathbb{R}_{\ge 0}$ be an efficiently computable function and let $\epsilon > 0$.
Suppose we are given an efficient algorithm $\mathcal{A}$ that for any feasible $k$-wide CacAP instance $\Iscr$ computes a solution with at most $\alpha \cdot \OPT(\Iscr)+f(\Iscr)$ many links.
Then there is an algorithm $\mathcal{B}$ that for any feasible CacAP instance $\Iscr$ computes a solution with at most
\begin{equation*}
\alpha \cdot (1+\epsilon)\cdot \OPT(\Iscr)\quad  +\  \max_{\substack{\mathcal{S}\!\text{ an $(\epsilon |\OPT(\Iscr)|,k)$-}\\ \text{splitting of }\Iscr}}\quad \sum_{\mathcal{J}\!\text{ split minor of }\mathcal{S}} f(\mathcal{J})  
\end{equation*}
many links, while calling $\mathcal{A}$ at most polynomially many times and performing further operations taking polynomial time.
\end{theorem}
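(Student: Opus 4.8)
The plan is to reproduce the decomposition-based reduction of~\cite{cecchetto_2021_bridging} with only cosmetic changes: its proof already establishes a statement of this shape, the sole difference being that wherever it bounds the output of the oracle on a $k$-wide instance $\mathcal J$ by $\alpha\cdot\OPT(\mathcal J)$, we now bound it by $\alpha\cdot\OPT(\mathcal J)+f(\mathcal J)$, and one checks these additive terms accumulate only once per split minor.

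\emph{The algorithm $\mathcal B$.} Given $\Iscr$, I would run the recursive decomposition of~\cite{cecchetto_2021_bridging}: repeatedly delete links that are not needed, split the current cactus at a $2$-cut taken from a carefully chosen laminar family, and contract (i.e.\ commit to buying) a link, until every sub-instance produced is $k$-wide; the $k$-wide sub-instances obtained are the split minors $\mathcal J_1,\dots,\mathcal J_p$, and the links contracted along the way are $\ell_1,\dots,\ell_\gamma$. Since the choice of where to split and what to contract is governed by an (unknown) optimum $\OPT=\OPT(\Iscr)$, $\mathcal B$ also loops over all guesses $t\in\{1,\dots,|L|\}$ of $|\OPT|$, runs the decomposition with budget $\epsilon t$ for each, applies $\mathcal A$ to every resulting $k$-wide split minor, lifts the returned link sets back through the splits to links of $\Iscr$, adds the contracted links, tests feasibility, and finally returns the smallest feasible link set found. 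As the numbers of split minors and of contractions are polynomially bounded (the former because each split raises the count by one and the decomposition has polynomial size, the latter because $\gamma\le\epsilon t\le|L|$), $\mathcal A$ is invoked polynomially often and the other work is polynomial.

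\emph{Correctness.} Fix one decomposition with split minors $\mathcal J_1,\dots,\mathcal J_p$ and contracted links $\ell_1,\dots,\ell_\gamma$, and let $F$ be the union of $\{\ell_1,\dots,\ell_\gamma\}$ with the lifts of the sets $F_i:=\mathcal A(\mathcal J_i)$. Exactly as in~\cite{cecchetto_2021_bridging}, $F$ is feasible for $\Iscr$: a link deletion only shrinks the link set, so it never endangers a solution of $\Iscr$; and any $2$-cut of $\Iscr$ either is covered by some contracted link $\ell_j$, or survives all splits and contractions and thus reappears as a $2$-cut of some $\mathcal J_i$, where it is covered by $F_i$ and hence by its lift. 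This uses only that splitting at a $2$-cut $C$ yields two instances whose $2$-cuts together contain every uncovered $2$-cut of the parent, and that contracting $\ell$ deletes precisely the $2$-cuts $\ell$ covers --- both from~\cite{cecchetto_2021_bridging}. Since lifting cannot increase cardinality, $|F|\le\gamma+\sum_i|F_i|\le\gamma+\sum_i\bigl(\alpha\,\OPT(\mathcal J_i)+f(\mathcal J_i)\bigr)$.

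\emph{Size bound and the main obstacle.} The technical core, taken verbatim from~\cite{cecchetto_2021_bridging} --- and the place where the width $k=\Theta(1/\epsilon^3)$ is forced --- is the decomposition lemma: for $t=|\OPT|$ the procedure produces an $(\epsilon|\OPT|,k)$-splitting whose split minors satisfy $\sum_i\OPT(\mathcal J_i)+\gamma\le(1+\epsilon)\,\OPT(\Iscr)$ (this simultaneously bounds the overhead from an $\OPT$-link being duplicated across split boundaries and the number of links we buy outright; in particular it forces every $\mathcal J_i$ to be feasible, so $\mathcal A$ is applicable there). With $\alpha\ge1$ this gives $\alpha\sum_i\OPT(\mathcal J_i)+\gamma\le\alpha\bigl(\sum_i\OPT(\mathcal J_i)+\gamma\bigr)\le\alpha(1+\epsilon)\OPT(\Iscr)$, hence $|F|\le\alpha(1+\epsilon)\OPT(\Iscr)+\sum_i f(\mathcal J_i)$, and $\sum_i f(\mathcal J_i)\le\max_{\mathcal S}\sum_{\mathcal J}f(\mathcal J)$ over all $(\epsilon|\OPT(\Iscr)|,k)$-splittings $\mathcal S$; since $\mathcal B$ returns the best feasible set over all guesses $t$, it does at least as well. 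Everything here other than the decomposition lemma is bookkeeping; the genuine difficulty is exactly that lemma --- making every piece $k$-wide while keeping (contractions $+$ duplication) below $\epsilon|\OPT|$, in polynomial time, without knowing $\OPT$ --- which we inherit unchanged from~\cite{cecchetto_2021_bridging}, the adaptation to a general $f$ being immediate since $f(\mathcal J)$ enters the bound only additively, once for each split minor.
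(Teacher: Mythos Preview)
Your proposal mischaracterizes the reduction in~\cite{cecchetto_2021_bridging}. The actual proof is \emph{not} a direct combinatorial decomposition of the instance guided by (a guess of) $|\OPT|$; it is a \emph{round-or-cut} procedure run inside the Ellipsoid method over the integer polytope $P_{\mathrm{CacAP}}(\Iscr)$. Given a fractional point $x\in[0,1]^L$, the algorithm first covers all $x$-heavy cuts (cuts $C$ with $x(\delta_L(C))$ large) by buying $\le\tfrac{\epsilon}{2}x(L)$ links, then recursively splits the residual instance at light cuts, and on each resulting $k$-wide piece either produces a cheap solution \emph{or} returns a hyperplane separating $x$ from $P_{\mathrm{CacAP}}$. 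The whole point of the LP framework is that the decomposition is governed by $x$, not by an unknown $\OPT$: heavy cuts, splittability, and the contraction budget are all defined relative to $x$, and when the rounding fails one learns a valid inequality and continues Ellipsoid. There is no ``decomposition lemma'' of the form $\sum_i\OPT(\mathcal J_i)+\gamma\le(1+\epsilon)\,\OPT(\Iscr)$ that can be invoked as a black box without this LP machinery; the bounds in~\cite{cecchetto_2021_bridging} are in terms of $x(L)$, and only become bounds in terms of $|\OPT|$ because the Ellipsoid method eventually hands you an $x$ with $x(L)=|\OPT|$ that cannot be separated.

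Your merging step is also incorrect as stated. Taking the union of lifted sub-solutions together with the contracted links does \emph{not} yield a feasible solution for $\Iscr$: when you split at a $2$-cut $C$, cuts of $\Iscr$ that cross $C$ need not be covered by $F_C\cup F_{V\setminus C}$, and the paper uses Proposition~\ref{prop:combine_split_sol} to add up to $|\delta_L(C)\cap F_C|-1$ extra links at each merge. These merging costs are precisely why Lemma~\ref{lem:algo_unsplittable} bounds $|F|+|\delta_F(s)|$ rather than $|F|$ alone, and why the algorithm on each $k$-wide piece must enumerate subsets $S\subseteq\delta_L(s)$ of bounded size (and, failing that, output a separating hyperplane). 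Your bound $|F|\le\gamma+\sum_i|F_i|$ therefore does not hold, and the gap cannot be absorbed into a single imported lemma because that lemma, in the form you state it, is not in~\cite{cecchetto_2021_bridging}.
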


Because \cite{cecchetto_2021_bridging} does not formally state \cref{thm:reduction_with_f} in this form, we provide details on how the theorem follows from~\cite{cecchetto_2021_bridging} in \cref{appendix:reduction}.
Observe that when $f\equiv 0$, the statement of \cref{thm:reduction_with_f} gives a reduction to $k$-wide instances similar to \cref{thm:reduction}, but for general (not necessarily leaf-to-leaf) CacAP instances. 
In order to prove \cref{thm:reduction}, we will apply \cref{thm:reduction_with_f} with the function $f$  being defined by
\begin{equation}\label{eq:definition_f}
    f\bigl((G=(V,E),L)\bigr) \coloneqq \bigl| \bigl\{ v\in V : v\text{ is not a leaf of $G$, but an endpoint of a link in }L\bigr\}\bigr|  .
\end{equation}

The next lemma shows why this choice of the function $f$ is useful for proving  \cref{thm:reduction}.

\begin{lemma}\label{lem:algorithms_bs}
Let the function $f$ be defined by \eqref{eq:definition_f}. 
Then, for any Leaf-to-Leaf CacAP instance $\Iscr$, we have 
\[
\max_{\substack{\mathcal{S}\!\text{ an $(\gamma,k)$-}\\ \text{splitting of }\Iscr}}\quad \sum_{\mathcal{J}\!\text{ split minor of }\mathcal{S}} f(\mathcal{J}) \ \le\ \gamma .
\]
\end{lemma}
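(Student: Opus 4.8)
The plan is to track the quantity $f(\Jscr)$ — the number of non-leaf link endpoints — through the sequence of operations that make up a $(\gamma,k)$-splitting, and show that the total over all split-minors can only \emph{increase} by contraction operations, and by at most one per such operation. Since a $(\gamma,k)$-splitting of $\Iscr$ uses at most $\gamma$ contractions, and the starting instance $\Iscr$ is leaf-to-leaf so that $f(\Iscr)=0$, this immediately gives the bound $\gamma$. Concretely, I would define, for any collection of CacAP instances that appears at some stage of the splitting process, the potential $\Phi \coloneqq \sum_{\Jscr} f(\Jscr)$, and analyse how each of the three operation types affects $\Phi$.

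First I would handle the easy cases. Deletion of a link $\ell$ from an instance $\Jscr$ can only remove endpoints from the set of link-endpoints, hence $f$ does not increase, so $\Phi$ does not increase. Splitting at a $2$-cut $C$ replaces one instance $\Jscr$ by the two sub-instances $\Jscr_C$ and $\Jscr_{V\setminus C}$; here I invoke \cref{lem:splitting_maintains_leaf_to_leaf} directly, which states precisely that the number of non-leaf link-endpoints is preserved, i.e.\ $f(\Jscr_C) + f(\Jscr_{V\setminus C}) = f(\Jscr)$. So splitting leaves $\Phi$ unchanged.

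The only operation that can increase $\Phi$ is a contraction. When we contract a link $\ell = \{u,v\}$ in an instance $\Jscr = (G=(V,E),L)$, we contract the set $P$ of vertices lying on every $u$--$v$ path in $G$ into a single new vertex $z$. I need to argue that this creates at most one new non-leaf link-endpoint. The key observations: (i) vertices outside $P$ are unaffected — their degrees in the cactus and their status as link-endpoints do not change, so they contribute the same to $f$ as before; (ii) every link-endpoint that was previously located at a vertex of $P$ is now located at the single vertex $z$; hence all of the (possibly several) non-leaf link-endpoints inside $P$, together with any leaf endpoints inside $P$, collapse to the \emph{one} vertex $z$. Therefore the number of non-leaf link-endpoints can increase by at most one, namely if $z$ is a non-leaf vertex of the contracted cactus that is an endpoint of some link, whereas before no non-leaf vertex of $P$ was a link-endpoint. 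So each contraction increases $\Phi$ by at most $1$.

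Combining: along any $(\gamma,k)$-splitting of $\Iscr$, we start with the single instance $\Iscr$, which is leaf-to-leaf, so $\Phi = f(\Iscr) = 0$; deletions and splittings never increase $\Phi$; and there are at most $\gamma$ contractions, each increasing $\Phi$ by at most $1$. Hence at the end $\Phi = \sum_{\Jscr \text{ split minor}} f(\Jscr) \le \gamma$, which is the claimed inequality (taking the maximum over all $(\gamma,k)$-splittings of $\Iscr$). I expect the main obstacle — really the only point requiring care — to be the contraction step: one must be precise about the fact that the whole contracted set $P$ becomes a single vertex, so that no matter how many link-endpoints sat on $P$ before, they all merge into one, contributing at most one unit to $f$. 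The bookkeeping for endpoints outside $P$, and the fact that a link with one endpoint in $P$ simply has that endpoint relocated to $z$ (and a link entirely inside $P$ becomes a loop at $z$, contributing nothing), should be stated explicitly but is routine.
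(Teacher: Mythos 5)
Your proposal is correct and follows essentially the same argument as the paper: deletions do not increase the number of non-leaf link-endpoints, splittings preserve it by \cref{lem:splitting_maintains_leaf_to_leaf}, and each contraction increases it by at most one (since the contracted set becomes a single vertex), so starting from $f(\Iscr)=0$ the total over all split-minors is bounded by the number of contractions, which is at most $\gamma$. Your treatment of the contraction step is merely a more detailed justification of the one-line claim the paper makes.
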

\begin{proof}
Contracting a single link increases the number of non-leaf endpoints of links by at most one.
Moreover, the deletion of links does not increase the total number of non-leaf endpoints and by \cref{lem:splitting_maintains_leaf_to_leaf} the same holds for splitting operations.
Thus, for every $(\gamma, k)$-splitting $\mathcal{S}$, the total number  $\sum_{\mathcal{J}\!\text{ split minor of }\mathcal{S}} f(\mathcal{J})$ of non-leaf endpoints is upper bounded by the number of contraction operations in the splitting $\mathcal{S}$, which is at most $\gamma$ by \cref{def:eps_k_splitting}.
\end{proof}

In order to be able to prove \cref{thm:reduction} using \cref{thm:reduction_with_f}, we will show that using an approximation algorithm for Leaf-to-Leaf CacAP, we can give an approximation algorithm for CacAP with few non-leaf endpoints of links.
More precisely, we show the following lemma.

\begin{restatable}{lemma}{algoflemma}\label{lem:algorithms_as}
Let $\alpha \geq 1$ and the function $f$ be defined by \eqref{eq:definition_f}.
Suppose we are given an $\alpha$-approximation algorithm $\mathcal{A}$ for $k$-wide Leaf-to-Leaf CacAP instances.
Then there is an algorithm $\mathcal{A}'$ that for any  feasible $k$-wide CacAP instance $\Iscr$ computes a solution with at most
 $\alpha \cdot \OPT(\Iscr) + f(\Iscr)$ many links, while calling $\mathcal{A}$ at most once and performing further operations taking polynomial time.
\end{restatable}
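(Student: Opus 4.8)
The plan is to take a feasible $k$-wide CacAP instance $\Iscr = (G=(V,E),L)$ with $f(\Iscr) = p$ non-leaf endpoints of links, turn it into a $k$-wide \emph{Leaf-to-Leaf} instance at the cost of adding $p$ extra links to the final solution, run $\mathcal{A}$ on it, and translate the solution back. The transformation is essentially the one sketched in the footnote of Section~\ref{sec:intro} for the weighted reduction, but done once and only for the ``bad'' vertices. Concretely, let $B \subseteq V$ be the set of vertices that are endpoints of some link in $L$ but are not leaves of $G$; so $|B| = p$. For each $v \in B$ I would attach a small gadget: add a new length-$2$ cycle (two new vertices $w_v^1, w_v^2$ joined to $v$ by a doubled edge, or a triangle of $2$-cuts), making $w_v^1$ a new leaf; reroute every link incident to $v$ so that its endpoint $v$ is replaced by $w_v^1$; and this is the key point, we do \emph{not} add a zero-cost link $\{w_v^1, w_v^2\}$ (since CacAP is unweighted), but instead we will simply throw in, at the end, one fixed link covering the new $2$-cut near $v$. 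I must choose the gadget so that (i) the resulting instance $\Iscr'$ is still a cactus, (ii) it is still $k$-wide — here I use that the gadget adds only $O(1)$ leaves per component and, more carefully, that the new leaves can be charged to the same component as $v$, so wideness grows by at most a constant per bad vertex in that component; I should double-check the exact gadget keeps us within the same $k$ or argue the reduction tolerates the slack, and (iii) it is leaf-to-leaf.

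The second step is the correspondence between solutions. First, $\OPT(\Iscr') \le \OPT(\Iscr)$: given an optimal solution $H$ of $\Iscr$, reroute each of its links through the gadgets exactly as the links were rerouted; this covers all original $2$-cuts, and the only uncovered $2$-cuts are the $O(1)$ new ones inside the gadgets, at most one ``essential'' new $2$-cut per $v \in B$ — handle these by noting we built the gadget so that the rerouted links already cover all but one new $2$-cut per gadget, and that last one is covered by the distinguished gadget link. So $\OPT(\Iscr') \le \OPT(\Iscr) + p$ in the worst case; better, I would design the gadget so the rerouted links cover \emph{all} new cuts and get $\OPT(\Iscr') \le \OPT(\Iscr)$ cleanly — this is the part to be careful about. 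Running $\mathcal{A}$ yields a solution $F'$ for $\Iscr'$ with $|F'| \le \alpha \cdot \OPT(\Iscr')$. Now map $F'$ back to a link set $F$ for $\Iscr$: replace each rerouted link by its original, and for each gadget link in $F'$ (covering a gadget $2$-cut) substitute any fixed original link incident to the corresponding $v$ — there are at most $p$ such substitutions, so $|F| \le |F'| + p$. One checks $F$ covers every $2$-cut of $\Iscr$: original $2$-cuts of $G$ are in bijection with non-gadget $2$-cuts of $G'$, and those are covered by the images of $F'$'s links.

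Putting it together, $|F| \le \alpha \cdot \OPT(\Iscr') + p \le \alpha \cdot \OPT(\Iscr) + p = \alpha \cdot \OPT(\Iscr) + f(\Iscr)$, using $\OPT(\Iscr') \le \OPT(\Iscr)$, which gives the claimed bound. The algorithm $\mathcal{A}'$ builds $\Iscr'$ in polynomial time, calls $\mathcal{A}$ once, and post-processes in polynomial time, as required.

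The main obstacle I expect is getting the gadget exactly right so that three things hold simultaneously: the new instance is a genuine cactus, it remains $k$-wide (no blow-up of the number of leaves per principal subcactus beyond $k$, or else a mild adjustment of the constant that Theorem~\ref{thm:reduction} already absorbs), and the inequality $\OPT(\Iscr') \le \OPT(\Iscr)$ holds \emph{without} an additive $p$ (so that the final bound is $\alpha \cdot \OPT + f$ and not $\alpha \cdot \OPT + (\alpha{+}1) f$). The cleanest fix is a triangle/$K_3$-of-$2$-cuts gadget at each $v \in B$ with the property that any single rerouted link through $w_v^1$ already crosses the unique new essential $2$-cut, so feasibility is preserved for free and only the $p$ ``trivial'' new $2$-cuts (each coverable by one already-present gadget edge interpreted as a link, or by the substitution step) contribute the additive $p = f(\Iscr)$ on the algorithmic side.
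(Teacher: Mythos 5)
Your plan has a genuine gap, and it sits exactly where you flag uncertainty: the inequality $\OPT(\Iscr') \le \OPT(\Iscr)$ for the gadgeted instance does not hold, and no choice of gadget of the kind you describe can make it hold. Whatever gadget you attach at a bad vertex $v$, it introduces at least one new leaf $w_v^1$ whose only incident links in $\Iscr'$ are the rerouted links that were incident to $v$; hence every feasible solution of $\Iscr'$ must contain one of these rerouted links. But an optimal solution of $\Iscr$ need not contain \emph{any} link incident to $v$ ($v\in B$ only means some link of $L$ ends at $v$, not that $\OPT$ uses one), so the rerouted copy of $\OPT(\Iscr)$ leaves the cut $\delta(w_v^1)$ uncovered. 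The ``triangle'' fix you sketch does not address this, because the problem is not which new $2$-cuts a rerouted link crosses, but that there may be no rerouted link at $v$'s gadget at all. The honest bound is only $\OPT(\Iscr')\le \OPT(\Iscr)+p$, which after applying $\mathcal{A}$ and translating back gives roughly $\alpha\cdot\OPT(\Iscr)+\alpha\, p$ (or worse with your extra substitution step), not the claimed $\alpha\cdot\OPT(\Iscr)+f(\Iscr)$. A second, independent problem is $k$-wideness: you add one new leaf per bad vertex, attached next to $v$, so a single principal subcactus can gain as many new leaves as it has bad vertices, and $f(\Iscr)$ is not $O(1)$ per component in the instances produced by the reduction; this is not a ``mild adjustment of the constant'' that \cref{thm:reduction} absorbs, and $\mathcal{A}$ (which only handles $k$-wide leaf-to-leaf instances) can no longer be invoked.

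The paper's proof avoids both issues by \emph{not} attaching gadgets at bad vertices directly. Instead it first greedily contracts a set $X\subseteq L$ of at most $f(\Iscr)$ links so that every bad vertex ends up merged into a supernode that either contains an original leaf of $G$ or has no incident links (\cref{claim:construct_X}); the output solution will be $F\cup X$, and the budget $|X|\le f(\Iscr)$ is exactly the additive term of the lemma. Only then does it attach an auxiliary leaf $t_s$ (via a doubled edge) at each supernode $s$ with incident links and reroute. Because every such supernode contains a leaf of $G$, any solution of $\Iscr$ is guaranteed to have a link at that leaf, which is what makes $|\OPT(\tilde\Iscr^X)|\le|\OPT(\Iscr)|$ go through (\cref{claim:cost_bound}); and because each auxiliary leaf can be charged injectively to an original leaf swallowed by its supernode, the instance stays $k$-wide (\cref{claim:k_wide}). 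Feasibility of $F\cup X$ for $\Iscr$ comes from the residual-instance machinery (\cref{def:residual_instance} and \cite[Corollary~20]{cecchetto_2021_bridging}). If you want to salvage your direct-gadget route, you would need some analogue of this contraction step that ``pays'' for the new leaves out of the $f(\Iscr)$ budget rather than out of $\OPT$; as written, the key idea is missing.
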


\begin{proof}
Given a feasible $k$-wide CacAP instance $\Iscr=(G=(V,E),L)$, we will transform it into a $k$-wide leaf-to-leaf instance. 
As a first step, we will compute a set $X$ of at most $f(\Iscr)$ many links that we can contract to make sure that every non-leaf endpoint of a link in $\Iscr$ is merged with a leaf of $G$ or has no incident links anymore.
In a second step, we will then modify the resulting CacAP instance $\Iscr^X$ such that we obtain a $k$-wide leaf-to-leaf instance $\tilde \Iscr^X$ with the following properties:
\begin{enumerate}[label=(\alph*)]
\item\label{item:opt_does_not_increase} $|\OPT(\tilde \Iscr^X)| \le |\OPT(\Iscr)|$, and
\item\label{item:residual_instance_property} for every solution $F$ of $\tilde \Iscr^X$, the link set $F\cup X$ is a feasible solution for the instance $\Iscr$.
\end{enumerate}
Finally, we use the given algorithm $\Ascr$ to compute an $\alpha$-approximate solution $F$ for the $k$-wide Leaf-to-Leaf CacAP instance $\tilde \Iscr^X$.
Then, using $|X|\le f(\Iscr)$ and properties \ref{item:opt_does_not_increase} and  \ref{item:residual_instance_property}, we can conclude that $F\cup X$ is a solution for $\Iscr$ with
\[
|F\cup X| \le \alpha \cdot \left|\OPT\left(\tilde \Iscr^X\right)\right| + |X| \le  \alpha\cdot|\OPT(\Iscr)| + f(\Iscr).
\]

Let us now explain how we choose the set $X\subseteq L$ and construct the instance $\Iscr^X$.
We will choose a set $X\subseteq L$ with $|X| \le f(\Iscr)$; then we define $\Iscr^X$ to be the residual instance of $\Iscr$ with respect to $X$ (see \cref{def:residual_instance}).
It is well known that $F\subseteq L$ is a feasible solution to the residual instance $\Iscr^X$ if and only if $F\cup X$ is a feasible solution to the original instance (see for example Corollary~20 in \cite{cecchetto_2021_bridging}).
We will call a node of the residual instance that arose from the contraction of several vertices of the original instance $\Iscr$, a \emph{supernode}.
We say that the vertices that were contracted to obtain a supernode $s$ \emph{belong to} $s$.

Let $B\coloneqq \left\{ v\in V : v\text{ is not a leaf of $G$, but an endpoint of a link in }L\right\}$. Then $f(\Iscr) = |B|$.
The following claim describes how we choose the set $X$ of links that we will contract.

\begin{claim}\label{claim:construct_X}
We can efficiently find a set $X\subseteq L$ with $|X|\le f(\Iscr)$ such that the residual instance with respect to $X$ has the following properties:
\begin{enumerate}
\item\label{item:bad_vertices_in_supernode} every vertex $v\in B$ belongs to a supernode, and
\item\label{item:supernodes_nice} for every supernode $s$
\begin{itemize}
\item there is a leaf of $G$ that belongs to $s$, or
\item $s$ has no incident link in the residual instance $\Iscr^X$.
\end{itemize}
\end{enumerate}
\end{claim}
\begin{claimproof}
We construct the set $X$ by the following algorithm.
\begin{enumerate}[label=\arabic*.]
\item Initialize $X=\emptyset$.
\item As long as \ref{item:bad_vertices_in_supernode} or \ref{item:supernodes_nice} is not fulfilled, iterate the following:
\begin{itemize}
\item Choose $v$ to be either a vertex in $B$ violating~\ref{item:bad_vertices_in_supernode} or a supernode violating~\ref{item:supernodes_nice}.
\item Choose an arbitrary link $\ell\in \delta(v)$ and add $\ell$ to $X$.
\item Apply the contraction operation for $\ell$ (\cref{def:contraction}).
\end{itemize}
\end{enumerate}
Note that a link $\ell \in \delta(v)$ exists in every iteration of the algorithm.
If $v\in B$, this follows from the definition of $B$, and if $v$ is a supernode, this follows from the fact that $v$ violates~\ref{item:supernodes_nice}.
At the end of the algorithm, the instance $\Iscr^X$ fulfills \ref{item:bad_vertices_in_supernode} and \ref{item:supernodes_nice} by construction.
In order to prove $|X|\le f(\Iscr) = |B|$, we show that the number of vertices violating \ref{item:bad_vertices_in_supernode} or \ref{item:supernodes_nice} strictly decreases in every iteration of the algorithm.
This will conclude the proof because at the beginning of the algorithm \ref{item:bad_vertices_in_supernode} and \ref{item:supernodes_nice} are violated only by the vertices in $B$ (as no supernodes exist).
Now consider an iteration in which we chose a vertex $v$ violating \ref{item:bad_vertices_in_supernode} or \ref{item:supernodes_nice} and contracted $\ell=\{v,w\}$.
If $w$ is a leaf of $G$ or a supernode with a leaf of $G$ belonging to $w$, then the supernode arising from the contraction of $\ell$ does not violate \ref{item:supernodes_nice}.
Otherwise, by the definition of $B$, the vertex $w$ must be either an element of $B$ (violating~\ref{item:bad_vertices_in_supernode}) or a supernode violating~\ref{item:supernodes_nice}.
When contracting $\ell$, the two vertices $v$ and $w$ violating   \ref{item:bad_vertices_in_supernode} or \ref{item:supernodes_nice} get merged into a single supernode.
In any of these cases the number of vertices violating \ref{item:bad_vertices_in_supernode} or \ref{item:supernodes_nice} decreases strictly.
\end{claimproof}

It remains to show that we can transform the residual instance $\Iscr^X$ into a $k$-wide instance of Leaf-to-Leaf CacAP with properties \ref{item:opt_does_not_increase} and \ref{item:residual_instance_property}.
We construct the instance $\tilde \Iscr^X$ from $\Iscr^X$ as follows.
For every supernode $s$ that has at least one incident link in $\Iscr^X$,
we add a new auxiliary vertex $t_s$ and two copies of the edge $\{s,t_s\}$ to the cactus. 
Then we still have a cactus and $t_s$ is one of its leaves.
Moreover, we replace every link $\{s, v\}$ by the link $\{ t_s, v\}$.
The leaf-to-leaf instance that results from applying this transformation for all supernodes with at least one incident link is the instance $\tilde \Iscr^X = (\tilde G, \tilde L)$.
We view $\tilde L$ as a subset of $L$ by identifying each link in $\tilde L$ with the corresponding link in $L$ from which it arose in the construction.
We now show that $\tilde \Iscr^X$ has the desired properties.

\begin{claim}\label{claim:residual_property}
For every solution $F$ of $\tilde \Iscr^X$, the link set $F\cup X$ is a feasible solution for the instance $\Iscr$.
\end{claim}
\begin{claimproof}
Let $(G^X=(V^X,E^X), L^X) \coloneqq \Iscr^X$ and let $F$ be a solution of $\tilde \Iscr^X = (\tilde G = (\tilde V, \tilde E), \tilde L)$.
Then the graph $(\tilde V, \tilde E \cup F)$ is 3-edge-connected.
Because $(V^X, E^X \cup F)$ arises from this graph by contracting the pair $\{s, t_s\}$ for every supernode $s \in V^X$ for which we added a vertex $t_s$, also the graph $(V^X, E^X \cup F)$ is 3-edge connected, i.e., $F$ is a feasible solution for $\Iscr^X$.
Because $\Iscr^X$ is the residual instance of $\Iscr$ with respect to $X$, we indeed have that $F\cup X$ is a solution for $\Iscr$ (see for example \cite[Corollary 20]{cecchetto_2021_bridging}).
\end{claimproof}

\begin{claim}\label{claim:cost_bound}
The instance $\tilde \Iscr^X$ is feasible and we have $|\OPT(\tilde \Iscr^X)| \le |\OPT(\Iscr)|$.
\end{claim}
\begin{claimproof}
We construct from $\OPT(\Iscr)$ a solution $F$ for $\tilde \Iscr^X$ of at most the same cardinality.
The solution $F$ that we construct contains for every link $\ell=\{v,w\} \in \OPT(\Iscr)$
\begin{enumerate}
\item\label{item:directly_mapped} the link $\ell$ if $v$ and $w$ do not belong to the same supernode, and
\item\label{item:indirectly_mapped} an arbitrary link incident to $t_s$ if $v$ and $w$ belong to the same supernode $s$ and $s$ has an incident link in $\Iscr^X$.
(Note that in this case $t_s$ exists and has an incident link in $\tilde \Iscr^X$.)
\end{enumerate}
By construction $|F|\le |\OPT(\Iscr)|$.
Because $\OPT(\Iscr)$ is a feasible solution for $\Iscr$ and hence for $\Iscr^X$, every violated cut, i.e., every 2-cut of the cactus $\tilde G$ that does not contain a link from $F$, must be of the form $\delta(t_s)$ for some supernode $s$.

Suppose such a violated cut exists. 
The construction of $\tilde \Iscr^X$ implies that there exists a link that is incident to $t_s$.
Moreover, by \cref{claim:construct_X} this implies that there is a leaf $v$ of $G$ that belongs to the supernode $s$. 
The solution $\OPT(\Iscr)$ must contain a link $\ell$ that is incident to $v$.
If both endpoints of $\ell$ belong to the supernode $s$, then $F$ contains a link in $\delta(t_s)$ by \ref{item:indirectly_mapped}.
Otherwise, $\ell \in F \cap \delta(t_s)$, because we replaced the endpoint $s$ of $\ell$ by $t_s$ in the construction of $\Iscr^X$.
Thus, $\delta(t_s)$ is not a violated cut, a contradiction.
\end{claimproof}

\begin{claim}\label{claim:k_wide}
The instance $\tilde \Iscr^X$ is $k$-wide.
\end{claim}
\begin{claimproof}
Let $r$ be a $k$-wide center of $G$ and let $\tilde r$ be either $r$ or the supernode in $\tilde G$ to which $r$ belongs.
We show that $\tilde r$ is a $k$-wide center for $\tilde G$.
To this end, let us consider the vertex set $\tilde W$ of a connected component of $\tilde G - \tilde r$.
Then the vertices in $V$ that are contained in $\tilde W$ or belong to a supernode contained in $\tilde W$ are all part of the same connected component of $G - r$.
Hence, at most $k$ of these vertices are leaves of $G$.
To complete the proof, we will show that every leaf of $\tilde G$ in $\tilde W$ is either a leaf of $G$ or an auxiliary vertex $t_s$ for a supernode $s$ with a leaf of $G$ belonging to $s$.

Let $\tilde t \in \tilde W$ be a leaf of $\tilde G$.
If $\tilde t$  is a vertex in $V$, i.e., it is neither a supernode nor an auxiliary vertex, then its degree in $\tilde G$ is the same as in $G$, implying that $\tilde t$ is a leaf of $G$.
If $\tilde t$ is an auxiliary vertex $t_s$ for a supernode $s$, then by \cref{claim:construct_X}, there exists a leaf of $G$ that belongs to $s$.
Finally, suppose $\tilde t$ is a supernode. 
Then $\tilde t$ cannot have any incident links as otherwise we would have added the leaf $t_s$ incident to $\tilde t$, However, because $\tilde t$ is a leaf of $\tilde G$, this contradicts the feasibility of the instance $\tilde \Iscr^X$ (\cref{claim:cost_bound}).
\end{claimproof}

\cref{claim:residual_property}, \cref{claim:cost_bound}, and \cref{claim:k_wide} imply that $\tilde \Iscr^X$ indeed has the desired properties.
\end{proof}

\cref{thm:reduction} now follows directly from \cref{thm:reduction_with_f}, \cref{lem:algorithms_bs}, and \cref{lem:algorithms_as}.

\newcommand{\etalchar}[1]{$^{#1}$}

\appendix
\section{Details of the reduction to \texorpdfstring{$\mathbf{O(1)}$}{O(1)}-wide instances from
\texorpdfstring{\cite{cecchetto_2021_bridging}}{[3]}}\label{appendix:reduction}

In this Section we provide details on how \cref{thm:reduction_with_f} follows from \cite{cecchetto_2021_bridging}.
Theorem~5 in \cite{cecchetto_2021_bridging} is the same as \cref{thm:reduction_with_f} with $f\equiv 0$.
The same proof can be extended to the case of a general efficiently computable function $f$.
We will first provide a brief outline of the overall proof and then provide adapted versions of those lemmas and proofs from \cite{cecchetto_2021_bridging} that require (small) changes in order to obtain Theorem~\ref{thm:reduction_with_f}.

Given an instance $\Iscr=(G,L)$, let $P_{\mathrm{CacAP}}(\mathcal{I})$ be the convex hull of all actual solutions, namely
\begin{equation*}
P_{\mathrm{CacAP}}(\mathcal{I}) \coloneqq \text{conv}(\{\chi^F : F\subseteq L, (V,E\cup F) \text{ is $3$-edge-connected}\})\ .
\end{equation*}
The proof of Theorem~\ref{thm:reduction_with_f}  is based on the round-or-cut framework.
We will show that there exists an efficient algorithm that, given a point $x\in [0,1]^L$, returns either
\begin{enumerate}
\item\label{item:round} a CacAP solution with at most 
\begin{equation*}
\alpha \cdot (1+\epsilon)\cdot \OPT(\Iscr)\quad  +\  \max_{\substack{\mathcal{S}\!\text{ an $(\epsilon \cdot x(L),k)$-}\\ \text{splitting of }\Iscr}} \quad \sum_{\mathcal{J}\!\text{ split minor of }\mathcal{S}} f(\mathcal{J})  
\end{equation*}
many links, or
\item\label{item:seperate} a vector $w\in \mathbb{R}^L$ such that $w^T x < w^T x'$ for all $x' \in P_{\mathrm{CacAP}}(G,L)$.
\end{enumerate}
Having such an algorithm, we guess the number $|\OPT(\Iscr)|$ of links in an optimum solution and run the Ellipsoid method to find a point in the polytope $\{x \in P_{\mathrm{CacAP}}(\mathcal{I}) : x(L) = |\OPT(\Iscr)| \}$.
As a separation oracle, we can return a separating hyperplane if the given point $x\in [0,1]^L$ does not fulfill  $x(L)= |\OPT(\Iscr)|$ and otherwise apply the algorithm that returns either a CacAP solution as in \ref{item:round} or a separating hyperplane as in \ref{item:seperate}.
If we obtain a CacAP solution as in \ref{item:round}, this solution has the desired properties (as required in Theorem~\ref{thm:reduction_with_f}) because $x(L)= |\OPT(\Iscr)|$; in this case we are done and we terminate the Ellipsoid method.
Otherwise, we have obtained the separating hyperplane that we need to continue with the Ellipsoid method.
Because the Ellipsoid terminates in polynomial time, it indeed suffices to provide an algorithm that given a point $x\in [0,1]^L$ either returns~\ref{item:round} or~\ref{item:seperate}.
Polynomial-time termination of the Ellipsoid Method follows by classical results (see~\cite[Theorem~(6.4.1)]{groetschel_1993_geometric}) because 
the polytope $\{x \in P_{\mathrm{CacAP}}(\mathcal{I}) : x(L) = |\OPT(\Iscr)| \}$, over which we run the Ellipsoid Method, has facet complexity that is bounded polynomially in $|L|$, because it is a $\{0,1\}$-polytope in $[0,1]^L$.

We now give a brief outline of the algorithm used to round $x\in [0,1]^L$ or separate $x$ from $P_{\mathrm{CacAP}}(G,L)$.
The algorithm proceeds in three steps:
\begin{itemize}
\item The first step, called \emph{heavy cut covering}, computes a set $L_H \subseteq L$ of links with $|L_H|\le \frac{\epsilon}{2}x(L)$ that covers all \emph{$x$-heavy cuts}, i.e., we have $L_H \cap \delta(C) \ne \emptyset$ for all 2-cuts $C\in \mathcal{C}$ with $x(\delta_L(C)) > \tfrac{16}{\epsilon}$.
Then we consider the residual instance with respect to $L_H$.
\cite[Lemma~22]{cecchetto_2021_bridging} shows that this residual instance has no heavy cuts.
Moreover, if we take any solution $F$ for the residual instance, then $F\cup L_H$ is a feasible solution for $\Iscr$ (by \cite[Corollary~20]{cecchetto_2021_bridging}).
\item In the second step, we split the instance into $k$-wide instances using the splitting operations from \cref{def:splitting}.
\item Finally, for every split minor arising from the splitting procedure we either find a CacAP solution or we obtain a separating hyperplane.
If we find a CacAP solution for all of the split minors, we merge them to a CacAP solution of the overall instance.
\end{itemize}

Let us now recap the main statement from \cite{cecchetto_2021_bridging} that we use for merging the CacAP solutions for the split minors to a solution of the overall instance.
Recall that splitting an instance $\Iscr$ at a cut $C\in \mathcal{C}$ leads to two sub-instances $\mathcal{I}_C$ and $\mathcal{I}_{V\setminus C}$, where $\mathcal{I}_C$ is the instance obtained from $\mathcal{I}$ by contracting all vertices except for those in $C$, and $\mathcal{I}_{V\setminus C}$ is the instance obtained from $\mathcal{I}$ by contracting $C$.
If we want to merge solutions $F_{C}$, $F_{V\setminus C} \subseteq L$ to the sub-instances $\mathcal{I}_C$ and $\mathcal{I}_{V\setminus C}$, respectively, to a solution of $\Iscr$, it may be not enough to simply take the union of $F_C$ and $F_{V\setminus C}$. 
However, we can bound the number of links that need to be added to $F_C\cup F_{V\setminus C}$ to obtain a solution to $\Iscr$ as shown in~\cite{cecchetto_2021_bridging}.
\begin{proposition}[Proposition 11, \cite{cecchetto_2021_bridging}]\label{prop:combine_split_sol}
Given a feasible CacAP instance $\mathcal{I}=(G=(V,E),L)$, a $2$-cut $C\in \mathcal{C}$, and solutions $F_C, F_{V\setminus C} \subseteq L$ to $\mathcal{I}_C$ and $\mathcal{I}_{V\setminus C}$, respectively, one can efficiently compute a link set $F\subseteq L$ such that
\begin{enumerate}
\item $F_C \cup F_{V\setminus C} \cup F$ is a CacAP solution to $\mathcal{I}$, and
\item $|F|\leq |\delta_L(C)\cap F_C|-1$.
\end{enumerate}
\end{proposition}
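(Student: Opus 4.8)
The plan is to show that $F_C\cup F_{V\setminus C}$ already covers all but a structured family of $2$-cuts of $\mathcal{I}$, and that this residual family can be covered by $p-1$ further links, where $p\coloneqq|\delta_L(C)\cap F_C|$.

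\emph{Identifying the uncovered cuts.} Since $F_C$ solves $\mathcal{I}_C$, it covers every $2$-cut of $\mathcal{I}$ that has all of $V\setminus C$ on one side, and symmetrically $F_{V\setminus C}$ covers every $2$-cut having all of $C$ on one side. Hence the $2$-cuts of $\mathcal{I}$ possibly left uncovered are exactly those $D$ that \emph{cross} $C$, i.e.\ with $D\cap C$, $D\setminus C$, $C\setminus D$ and $V\setminus(D\cup C)$ all nonempty. A standard cactus uncrossing argument (cf.\ Lemma~24 of~\cite{cecchetto_2021_bridging}) shows that every such $D$ lies on the unique cycle $\mathcal{O}$ of $G$ containing the two edges of $\delta_E(C)$, and is obtained by choosing one edge of $\mathcal{O}$ strictly inside $C$ together with one edge strictly inside $V\setminus C$; so the uncovered cuts form a two-parameter, grid-like family. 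By the residual-instance machinery of~\cite{cecchetto_2021_bridging} (see \cref{def:residual_instance}, together with Lemma~19 and Corollary~20 there), the instance $\mathcal{I}^\ast$ obtained from $\mathcal{I}$ by contracting the links of $F_C\cup F_{V\setminus C}$ is again a CacAP instance whose $2$-cuts are precisely these uncovered cuts, and $F_C\cup F_{V\setminus C}\cup F$ is a solution of $\mathcal{I}$ if and only if $F$ is a solution of $\mathcal{I}^\ast$. It therefore suffices to find a solution $F\subseteq L$ of $\mathcal{I}^\ast$ with $|F|\le p-1$.

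\emph{The bound $\OPT(\mathcal{I}^\ast)\le p-1$.} Here I would use feasibility of $F_C$ more strongly: in $\mathcal{I}_C$ the vertex representing $V\setminus C$ is a leaf, whose $2$-cut $F_C$ must cover, so $p\ge 1$; moreover $F_C$ covers every ``prefix'' $2$-cut of $C$ along $\mathcal{O}$. Recording where the $p$ links of $\delta_L(C)\cap F_C$ attach to $C$, one sees that an uncovered cut of $\mathcal{I}$ can only lie ``between'' two consecutive such attachment points (and, on the other side, between two consecutive attachment points of $\delta_L(C)\cap F_{V\setminus C}$); equivalently, in the grid picture the uncovered region sits above a step function with at most $p$ steps. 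I would make this precise by induction on $p$: for $p=1$ one checks directly that no cut is left uncovered, so $F=\emptyset$ works; for the inductive step, peel off one crossing link of $F_C$ together with the portion of $C$ it reaches, obtaining a split instance in which the remaining crossing links from $F_C$ number $p-1$, apply the induction hypothesis, and add a single link of $L$ — which exists since $\mathcal{I}$ is feasible — to repair the $2$-cuts exposed by the peeling. This yields at most $(p-2)+1=p-1$ links. All steps are constructive (computing $\mathcal{I}^\ast$, enumerating its $2$-cuts, and the peeling-and-repair) and run in polynomial time, giving the ``efficiently compute'' part.

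The main obstacle is the \emph{tight} count $p-1$ rather than $p$ in the inductive step: the saved link has to come from the fact that the outermost crossing link of $F_C$ already reaches the boundary edge of $\delta_E(C)$ — it must, in order for $F_C$ to cover the leaf-cut of $\mathcal{I}_C$ — so that the first ``block'' needs no repair link. A second delicate point is that the number of crossing links of $F_{V\setminus C}$ is \emph{not} controlled by $p$; these links can only help cover cuts, never force new ones, but turning this intuition into clean bookkeeping requires a careful case analysis of which of $F_C$, $F_{V\setminus C}$ covers a given prefix cut on each side of $C$.
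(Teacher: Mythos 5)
First, note that \cref{prop:combine_split_sol} is not proved in this paper at all: it is imported verbatim from \cite{cecchetto_2021_bridging} (Proposition~11 there), so there is no in-paper argument to compare yours against; what follows is an assessment of your sketch on its own merits. Your reduction of the problem is sound and matches the standard setup: $F_C\cup F_{V\setminus C}$ covers every $2$-cut of $\mathcal{I}$ that does not cross $C$, the possibly uncovered cuts are exactly the crossing ones, they live on the unique cycle through the two edges of $\delta_E(C)$ and form a two-parameter family indexed by one cut edge on each arc, and by the residual-instance machinery it suffices to cover this family by at most $p-1$ links of $L$, where $p=|\delta_L(C)\cap F_C|$.

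The genuine gap is the inductive step. You peel off one crossing link of $F_C$ and then ``add a single link of $L$ --- which exists since $\mathcal{I}$ is feasible --- to repair the $2$-cuts exposed by the peeling.'' Feasibility of $\mathcal{I}$ only guarantees that each individual exposed cut is covered by \emph{some} link of $L$; the exposed cuts after peeling form, in general, a genuinely two-dimensional sub-family of the grid (a ``box'' between consecutive attachment points on the $C$-side and on the $(V\setminus C)$-side), and $L$ need not contain any single link covering all of them (e.g.\ $L$ may cover such a box only by several links internal to $C$ or internal to $V\setminus C$, each hitting one row or one column). To get down to one link per gap one has to exploit the feasibility of $F_C$ and $F_{V\setminus C}$ much more heavily than you do: besides the prefix cuts you use, $F_C$ must also cover all \emph{suffix} and \emph{interval} cuts of the $C$-arc of the cycle (and symmetrically for $F_{V\setminus C}$), and it is precisely these constraints that force internal links or further crossing links to cover most of each box, collapsing the truly uncovered cuts to a family that one link (or a shadow of a link, later replaced by its parent in $L$) can handle. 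Relatedly, your explanation of where the saved link comes from is not correct as stated: in $\mathcal{I}_C$ every crossing link of $F_C$ ends at the contracted leaf $v_{V\setminus C}$, which says nothing about where its true endpoint in $V\setminus C$ sits, so ``the outermost crossing link already reaches the boundary edge of $\delta_E(C)$'' does not follow. The actual reason the region before the first attachment point is harmless is the prefix-cut argument: for rows below the smallest $C$-endpoint of a crossing link of $F_C$, no crossing link covers the prefix cut in $\mathcal{I}_C$, so an internal link of $F_C$ must, and that internal link covers the entire grid row in $\mathcal{I}$ (this is also the argument that settles your base case $p=1$, which you assert without proof). Until the one-link-per-gap claim is established using these additional cut constraints, the bound $|F|\le p-1$ --- the whole content of the proposition --- is not proved.
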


We now explain in detail those parts of the proof from \cite{cecchetto_2021_bridging} that need to be adapted to obtain \cref{thm:reduction_with_f}.
First, we provide a generalization of Lemma~32 from \cite{cecchetto_2021_bridging}, which yields an algorithm for $k$-wide instances that either returns a cheaply mergeable solution or a separating hyperplane.
In this algorithm we will use the fact that the function $f$ is efficiently computable.
We need the following definition.

\begin{definition}[$\lambda$-contraction minor]
For an instance $\mathcal{I}=(G,L)$, a \emph{$\lambda$-contraction minor} is an instance $\mathcal{I}'$ that can be obtained by $\mathcal{I}$ by
\begin{itemize}
    \item deleting links, and
    \item applying contraction operations (see \cref{def:contraction}) for at most $\lambda$ many links.
\end{itemize}
\end{definition}

Now we can generalize Lemma~32 from \cite{cecchetto_2021_bridging} as follows.\footnote{Lemma~32 from \cite{cecchetto_2021_bridging} applies to so-called unsplittable instances, but these instances are $k$-wide by \cite[Lemma~12]{cecchetto_2021_bridging} and hence Lemma~\ref{lem:algo_unsplittable} is indeed a generalization of Lemma~32 from \cite{cecchetto_2021_bridging}.
We also remark that the reason why we bound $|F| + |\delta_{F}(s)|$ instead of $|F|$ in Lemma~\ref{lem:algo_unsplittable} is that this is needed to guarantee that $F$ is cheaply mergeable.
}

\begin{lemma}\label{lem:algo_unsplittable}
 Let  $\epsilon \ge 0$ and $\epsilon' = \frac{\epsilon}{4}$.
 Suppose there is an efficient algorithm $\Ascr$ that for any $k$-wide CacAP instance $\Jscr$ returns a solution of cost at most $\alpha\cdot \OPT(\Jscr) + f(\Jscr)$, where $f$ is a function that is efficiently computable.
 Then there is a polynomial-time algorithm that, given a $k$-wide CacAP instance $\mathcal{I}=(G=(V,E),L)$, a vector $x\in [0,1]^L$, and a vertex $s$ of $G$ with  $|\delta_E(s)|=2$ and $x(\delta_L(s))\le \frac{16}{\epsilon}$, either returns
 \begin{itemize}
  \item a CacAP solution $F\subseteq L$ with 
  \begin{equation*}
  \begin{aligned}
      |F| + |\delta_{F}(s)| \leq \  & (1+\epsilon')\cdot \alpha \cdot\left( x(L) + x(\delta_L(s))\right)\\ &+ \max\{f(\Iscr') \colon \Iscr' \text{ is a $\lambda$-contraction minor of $\Iscr$}\} ,
  \end{aligned}
  \end{equation*}
  where $\lambda= \frac{1+\epsilon'}{\epsilon'}\cdot \frac{16}{\epsilon}$, or
  \item a vector $w\in \mathbb{R}^L$ such that $w^T x < w^T x'$ for all $x' \in P_{\mathrm{CacAP}}(\Iscr)$.
 \end{itemize}
\end{lemma}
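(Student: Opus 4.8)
\noindent This is the generalization of Lemma~32 of \cite{cecchetto_2021_bridging} in which the exact $k$-wide subroutine used there is replaced by $\Ascr$ (contributing an additive $f(\cdot)$), and the plan is to run that proof essentially verbatim. Two points need attention: (i) threading the additive error $f(\cdot)$ through the analysis; and (ii) checking that $\Ascr$ is only ever called on instances obtained from $\Iscr$ by deleting links and at most $\lambda$ contractions, so that every $f$-term that appears is at most $\max\{f(\Iscr'):\Iscr'\text{ a }\lambda\text{-contraction minor of }\Iscr\}$ and can be pulled out of the final bound. The outer frame is round-or-cut: given $x$, either output a (cheaply mergeable) CacAP solution of $\Iscr$ or a vector $w$ separating $x$ from $P_{\mathrm{CacAP}}(\Iscr)$.

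\emph{Preliminaries.} Since $|\delta_E(s)|=2$, the singleton $\{s\}$ is a $2$-cut, so $x'(\delta_L(s))\ge 1$ for every $x'\in P_{\mathrm{CacAP}}(\Iscr)$; hence if $x(\delta_L(s))<1$ we return $w=\chi^{\delta_L(s)}$ and stop, so assume $1\le x(\delta_L(s))\le \tfrac{16}{\epsilon}$. Also, contractions preserve $k$-wideness (a contraction merges only vertices inside one connected component of $G-r$ for a $k$-wide center $r$, and cannot increase a component's number of leaves), so every residual instance produced below is again $k$-wide.

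\emph{Forcing links near $s$.} Following \cite{cecchetto_2021_bridging}, we build a set $X\subseteq L$ of links to be forced into the solution, by iteratively contracting links incident to the growing node containing $s$: we maintain $X$ and the residual instance $\Iscr_X$ of $\Iscr$ with respect to $X$ (\cref{def:residual_instance}), and write $s_X$ for the node of $\Iscr_X$ containing $s$. The potential $x(\delta_L(s_X))$ is non-increasing, because contracting a link can only turn further links into loops and never creates a link at $s_X$. While this potential exceeds a threshold chosen in terms of $\epsilon'$ and the budget is not exhausted, we pick a $2$-cut $C\supseteq s_X$ of $\Iscr_X$ whose boundary still carries enough $x$-mass, contract one link of $\delta_L(C)$, and add it to $X$. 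The key quantitative claim --- and the main obstacle --- is that this halts with $|X|\le \lambda=\tfrac{1+\epsilon'}{\epsilon'}\cdot\tfrac{16}{\epsilon}$: each contraction is charged against the drop it forces in $x(\delta_L(s_X))$, and the factor $\tfrac{1+\epsilon'}{\epsilon'}$ comes from an amortized argument allowing an $\epsilon'$-fraction of slack per step, starting from $x(\delta_L(s))\le\tfrac{16}{\epsilon}$. This accounting is exactly the one in \cite{cecchetto_2021_bridging} and is unchanged by the presence of $f$.

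\emph{Finishing.} When the procedure halts, form $\Jscr$ from $\Iscr_X$ by deleting all links incident to $s_X$; then $\Jscr$ is a $k$-wide $\lambda$-contraction minor of $\Iscr$ in which the only surviving links incident to $s$ are those in $X$. If $\Jscr$ is infeasible, some $2$-cut of $\Iscr_X$ has all of its surviving covering links incident to $s$, and the halting condition forces $x$ to put total mass below $1$ on that boundary while every point of $P_{\mathrm{CacAP}}(\Iscr)$ puts mass at least $1$ there; its indicator is the separating hyperplane. Otherwise run $\Ascr$ on $\Jscr$ to get $F'$ with $|F'|\le\alpha\cdot\OPT(\Jscr)+f(\Jscr)$. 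A standard argument, exactly as in \cite{cecchetto_2021_bridging} --- using that any solution of $\Iscr$ induces a solution of $\Jscr$ of size controlled by its number of links incident to $s$, via the residual-instance correspondence \cite[Corollary~20]{cecchetto_2021_bridging} and \cref{prop:combine_split_sol} --- shows that if $|F'|-f(\Jscr)$ is larger than $\alpha\bigl((1+\epsilon')(x(L)+x(\delta_L(s)))-|X|-|\delta_X(s)|\bigr)$, then $x\notin P_{\mathrm{CacAP}}(\Iscr)$, and we output the corresponding separating hyperplane. In the remaining case, $F\coloneqq X\cup F'$ is a feasible solution of $\Iscr$ whose only links incident to $s$ lie in $X$, so $|\delta_F(s)|=|\delta_X(s)|$ and
\[
|F|+|\delta_F(s)| \;=\; |F'|+|X|+|\delta_X(s)| \;\le\; (1+\epsilon')\,\alpha\bigl(x(L)+x(\delta_L(s))\bigr)+f(\Jscr),
\]
using $|F'|-f(\Jscr)\le\alpha\bigl((1+\epsilon')(x(L)+x(\delta_L(s)))-|X|-|\delta_X(s)|\bigr)$ and $\alpha\ge 1$; since $f(\Jscr)\le\max\{f(\Iscr'):\Iscr'\text{ a }\lambda\text{-contraction minor of }\Iscr\}$, this is the stated bound. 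The genuinely delicate steps are the amortized bound $|X|\le\lambda$ and matching the halting threshold to a valid separating hyperplane in the infeasible case; the $f$-generalization changes neither.
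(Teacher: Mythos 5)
There is a genuine gap here: your proof replaces the actual mechanism of Lemma~32 of \cite{cecchetto_2021_bridging} with an adaptive ``force links near $s$'' procedure that is neither what that proof does nor carried out in your write-up. The paper's argument does not iteratively contract links guided by drops of $x(\delta_L(s_X))$; it \emph{enumerates all} sets $S\subseteq \delta_L(s)$ with $|S|\le\lambda$ (polynomially many, as $\lambda$ is a constant for fixed $\epsilon$), runs $\Ascr$ on the residual instance $\Iscr'$ of $(G,L\setminus\delta_L(s))$ with respect to $S$, accepts if some run satisfies $|F'|+2|S|\le(1+\epsilon')\alpha\bigl(x(L)+x(\delta_L(s))\bigr)+f(\Iscr')$, and otherwise outputs the \emph{explicit} hyperplane $w=\chi^{L}+\mu\,\chi^{\delta_L(s)}$ with $\mu=1+\frac{\epsilon'(x(L)+x(\delta_L(s)))}{x(\delta_L(s))}$. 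Validity of this hyperplane is proved by contradiction: a hypothetical solution $F^*$ with $|F^*|+\mu|\delta_{F^*}(s)|\le x(L)+\mu\, x(\delta_L(s))$ forces $|\delta_{F^*}(s)|\le\frac{1+\epsilon'}{\epsilon'}x(\delta_L(s))\le\lambda$, so the set $S=\delta_{F^*}(s)$ was among those enumerated, and since $F^*\setminus S$ is feasible for the corresponding residual instance, the accepted bound would have been met. The enumeration is essential precisely because the ``right'' set $S$ depends on the unknown solution $F^*$.

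Your single adaptively chosen $X$ cannot play this role. First, the monotonicity claim underlying your charging argument is false: contracting a link incident to $s_X$ merges the other endpoint into $s_X$, so links previously incident to that endpoint \emph{become} incident to $s_X$ and $x(\delta_L(s_X))$ can increase; the amortized bound $|X|\le\lambda$, which you yourself flag as the main obstacle, is never established. Second, and more fundamentally, in your ``Finishing'' step you compare $\Ascr$'s output on the one instance $\Jscr$ (residual w.r.t.\ your $X$, links at $s_X$ deleted) against an arbitrary solution of $\Iscr$; but such a solution may cover the cuts around $s$ using links of $\delta_L(s)$ that are \emph{not} in $X$, and after deleting $\delta_L(s_X)$ its remaining links need not be feasible for $\Jscr$, so the bound $\OPT(\Jscr)\le|F^*\setminus X|$ that your inequality implicitly needs does not hold. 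Finally, asserting ``then $x\notin P_{\mathrm{CacAP}}(\Iscr)$'' does not by itself produce the required separating vector $w$; the lemma demands an explicit $w$, which in the paper is exactly the $\mu$-weighted vector above. The part of your proposal that does go through is the treatment of $f$: since $\Ascr$ is only applied to instances obtained by deleting $\delta_L(s)$ and contracting at most $\lambda$ links, each additive term is bounded by $\max\{f(\Iscr'):\Iscr'\text{ a }\lambda\text{-contraction minor of }\Iscr\}$, and $f$ being efficiently computable lets the acceptance test be evaluated --- this matches the paper, but it does not repair the missing core dichotomy.
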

\begin{proof}
The proof of this lemma follows the proof of Lemma~32 in~\cite{cecchetto_2021_bridging}.
Let $r\in V$ be a $k$-wide root.
If $x(\delta_L(s))<1$, we have $x(\delta_L(s)) < x'(\delta_L(s))$ for all $x'\in P_{\mathrm{CacAP}}(\Iscr)$ and can thus return $w=\chi^{\delta_L(s)}$.
Otherwise, we proceed as follows.
For every set $S \subseteq \delta_L(s)$ with $|S| \le \lambda$, we apply the given algorithm $\Ascr$ for $k$-wide instances to the residual instance $\Iscr'$ of $(G, L\setminus \delta_L(s))$ with respect to $S$. 
This instance $\Iscr'$ arises from $(G,L)$ by applying contraction operations for the links in $S$ after deleting all links in $\delta_L(s)$. 
Thus, $\Iscr'$ is a $\lambda$-contraction minor and it is $k$-wide by~\cite[Lemma~31]{cecchetto_2021_bridging}.

If for some such set $S$, the algorithm $\Ascr$ returns a solution $F'$ with $|F'| + 2 |S| \le  (1+\epsilon')\cdot \alpha \cdot\left( x(L) + x(\delta_L(s))\right) + f(\Iscr')$, we return $F=F' \cup S$. Here we use the fact that $f$ is an efficiently computable function; hence, we can efficiently check whether we are in this case.
Otherwise, we define $\mu:= 1 +  \frac{\epsilon' ( x(L) + x(\delta_L(s)) ) }{ x(\delta_L(s)) }$ and claim that
\begin{equation}
x'(L)+ \mu \cdot x'(\delta_L(s)) > x(L) + \mu \cdot x(\delta_L(s)) \quad \forall\; x'\in P_{\mathrm{CacAP}}(\Iscr),\label{eq:sep_hyper_when_A_fails}
\end{equation}
which leads to a vector $w\in \mathbb{R}^L$ as desired.
Suppose that~\eqref{eq:sep_hyper_when_A_fails} does not hold. 
Then there exists a solution $F^*$ of $\Iscr$ with
\begin{equation}\label{eq:solution_violating_cut}
|F^*| + \mu \cdot | \delta_{F^*}(s) | 
 \le x(L) + \mu \cdot x(\delta_L(s))  = (1+\epsilon') \cdot (x(L) + x(\delta_L(s)).
\end{equation}
For $S = \delta_{F^*}(s)$ this implies
\begin{equation*}
\frac{\epsilon' \left( x(L) + x(\delta_L(s)) \right) }{ x(\delta_L(s)) } \cdot |S|\ \le\ \mu \cdot  |S|\ \le\  (1+\epsilon') \cdot \left( x(L) + x(\delta_L(s)) \right)
\end{equation*}
and hence
\begin{equation*}
|S| \ \le\ \frac{1+\epsilon'}{\epsilon'} x(\delta_L(s))\ \le\ \lambda.
\end{equation*}
Thus, we considered the set $S$ in our algorithm described above.
Let $F'$ be the output of the $\alpha$-approximation algorithm $\Ascr$ applied to the residual instance $\Iscr'$ of $(G, L\setminus \delta_L(s))$ with respect to $S$.
Then $|F'| \le \alpha \cdot |F^* \setminus S|+f(\Iscr')$ because the set $F^*\setminus S$ is a feasible solution of this residual instance (see~\cite[Corollary~20]{cecchetto_2021_bridging}). 
Therefore, using $\alpha \ge 1$,  $S=\delta_{F^*}(s)$, and \eqref{eq:solution_violating_cut}, we get
\begin{equation*}
\begin{aligned}
|F' | + 2 |S| &\le \alpha \cdot |F^*\setminus S|+f(\Iscr') + 2 |S| \\ & \le \alpha \cdot (|F^*| + |S|)+f(\Iscr') \\ &\leq \alpha \cdot (|F^*| + \mu |\delta_{F^*}(s)|) + f(\Iscr')\\
&\le (1+\epsilon')\cdot \alpha \cdot (x(L) + x(\delta_L(s))+f(\Iscr'),
\end{aligned}
\end{equation*}
contradicting the fact that we did not return $F' \cup S$.
\end{proof}

The next lemma provides a generalization of Lemma~33 from \cite{cecchetto_2021_bridging} (except that we choose $k$ slightly differently).
This lemma extends the algorithm that, given $x\in [0,1]^L$, either rounds it to a cheap solution or provides a separating hyperplane from $k$-wide instances to general instances without $x$-heavy cuts.

\begin{lemma}\label{lem:round_and_cut_without_heavy}
Let  $0 \le \epsilon \le 1$ and $k\coloneqq \frac{64(8+3\epsilon)}{\epsilon^3}$.
Suppose there exist an efficient algorithm $\Ascr$ that for any $k$-wide CacAP instance $\Jscr$ returns a solution of cost at most $\alpha\cdot \OPT(\Jscr) + f(\Jscr)$, where $f$ is a function that is efficiently computable.
Then, for any CacAP instance $\mathcal{I}=(G=(V,E),L)$ and $x\in [0,1]^L$ such that no cut is $x$-heavy, there is a polynomial-time algorithm that computes either
\begin{itemize}
\item a CacAP solution $L'$ with
\begin{equation*}
    |L'|\le \alpha \cdot \left(1+\frac{\epsilon}{2}\right) \cdot x(L) +\  \max_{\substack{\mathcal{S}\!\text{ a $(\gamma,k)$-}\\ \text{splitting of }\Iscr}}\quad \sum_{\mathcal{J}\!\text{ split minor of }\mathcal{S}} f(\mathcal{J}) ,
\end{equation*}
where $\gamma=\frac{\epsilon}{4}\cdot |T|$ with $T$ being the set of leaves of $G$, or
\item a vector $w\in \mathbb{R}^L$ such that $w^T x < w^T x'$ for all $x' \in P_{\mathrm{CacAP}}(\mathcal{I})$.
\end{itemize}
\end{lemma}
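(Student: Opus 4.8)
The plan is to follow the proof of Lemma~33 in \cite{cecchetto_2021_bridging} almost verbatim, replacing its base case (their Lemma~32) by our \cref{lem:algo_unsplittable} and carrying the function $f$ through the recursion. A first, cheap reduction: if some leaf $t$ of $G$ satisfies $x(\delta_L(t))<1$, then $\{t\}$ is a $2$-cut whose covering constraint is violated, so we return $w\coloneqq\chi^{\delta_L(t)}$, which separates $x$ from $P_{\mathrm{CacAP}}(\mathcal{I})$ since every solution covers $\{t\}$. Hence we may assume $x(\delta_L(t))\ge 1$ for every leaf $t$, and since $\sum_{t\in T}x(\delta_L(t))\le 2x(L)$ this yields $|T|\le 2x(L)$ --- the bridge between the ``$\gamma=\frac{\epsilon}{4}|T|$'' contraction budget and the ``$\frac{\epsilon}{2}\alpha x(L)$'' link overhead. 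Moreover, since no cut is $x$-heavy, every $2$-cut $C$ satisfies $x(\delta_L(C))\le\frac{16}{\epsilon}$.

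Next, the recursive splitting, exactly as in \cite{cecchetto_2021_bridging}. As long as the current sub-instance admits a $2$-cut both of whose sides contain enough leaves --- with the threshold chosen, via \cite[Lemma~12]{cecchetto_2021_bridging}, so that \emph{unsplittable} instances are $k$-wide for $k=\frac{64(8+3\epsilon)}{\epsilon^3}$ --- we pick such a cut $C$, split into the sub-instances $\mathcal{I}_C,\mathcal{I}_{V\setminus C}$ of \cref{def:splitting}, recurse on both, and glue the returned solutions $F_C,F_{V\setminus C}$ together via \cref{prop:combine_split_sol}, which adds at most $|\delta_L(C)\cap F_C|-1\le|\delta_{F_C}(s_C)|$ further links, where $s_C$ is the apex vertex of $\mathcal{I}_C$. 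Once a sub-instance is unsplittable, hence $k$-wide, we invoke \cref{lem:algo_unsplittable} on it with $s$ its apex vertex; this is legitimate because $s$ stems from a $2$-cut of $G$, so $x(\delta_L(s))\le\frac{16}{\epsilon}$ (the top-level instance, which carries no apex, is handled by the same minor adjustments as in \cite{cecchetto_2021_bridging}). If any of these calls returns a separating hyperplane $w$ for a sub-instance $\mathcal{J}$, then, as in the round-or-cut argument of \cite{cecchetto_2021_bridging}, extending $w$ by zeros on the coordinates of $L$ absent from $\mathcal{J}$ gives a vector separating $x$ from $P_{\mathrm{CacAP}}(\mathcal{I})$, which we return. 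The recursion tree has at most $|T|$ leaves and every step (including finding a splitting cut) runs in polynomial time.

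It remains to bound $|L'|$ when no separating hyperplane is returned. Let $\mathcal{P}$ be the family of $k$-wide pieces at the leaves of the recursion tree; \cref{lem:algo_unsplittable} gives, for each $P\in\mathcal{P}$, a solution $F_P$ with $|F_P|+|\delta_{F_P}(s_P)|\le(1+\tfrac{\epsilon}{4})\alpha\bigl(x(L_P)+x(\delta_L(s_P))\bigr)+g(P)$, where $g(P)\coloneqq\max\{f(\mathcal{I}')\colon\mathcal{I}'\text{ a }\lambda\text{-contraction minor of }P\}$ and $\lambda=\tfrac{1+\epsilon/4}{\epsilon/4}\cdot\tfrac{16}{\epsilon}$. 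An induction over the recursion tree, in which the merge cost $|\delta_{F_C}(s_C)|-1$ paid at each split $C$ is absorbed by the extra ``$+\,|\delta_{F_C}(s_C)|$'' slack that \cref{lem:algo_unsplittable} supplies for the $C$-side, gives $|L'|\le\sum_{P\in\mathcal{P}}\bigl[(1+\tfrac{\epsilon}{4})\alpha(x(L_P)+x(\delta_L(s_P)))+g(P)\bigr]$. Each link of $L$ lies in $L_P$ for at most one $P$ besides the links that a split routed onto an apex, so $\sum_P x(L_P)\le x(L)+\sum_P x(\delta_L(s_P))$; and since no cut is $x$-heavy, $\sum_P x(\delta_L(s_P))\le|\mathcal{P}|\cdot\tfrac{16}{\epsilon}$. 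The splitting scheme of \cite{cecchetto_2021_bridging} keeps $|\mathcal{P}|$ small --- roughly $|\mathcal{P}|\lesssim|T|/k$ --- and $k=\frac{64(8+3\epsilon)}{\epsilon^3}$ is chosen precisely so that, using $|T|\le 2x(L)$, both $(1+\tfrac{\epsilon}{4})\alpha\bigl(x(L)+2|\mathcal{P}|\cdot\tfrac{16}{\epsilon}\bigr)\le\alpha(1+\tfrac{\epsilon}{2})x(L)$ and $\lambda\cdot|\mathcal{P}|\le\tfrac{\epsilon}{4}|T|=\gamma$ hold. Finally, concatenating the splits that produced $\mathcal{P}$ with, for each $P$, the deletion of $\delta_{L_P}(s_P)$ and the at most $\lambda$ contractions realizing $g(P)$ --- each resulting instance being $k$-wide by \cite[Lemma~31]{cecchetto_2021_bridging} --- yields a $(\gamma,k)$-splitting $\mathcal{S}$ of $\mathcal{I}$ with $\sum_{\mathcal{J}\text{ split minor of }\mathcal{S}}f(\mathcal{J})\ge\sum_{P\in\mathcal{P}}g(P)$; hence $\sum_Pg(P)\le\max_{\mathcal{S}}\sum_{\mathcal{J}}f(\mathcal{J})$, and $|L'|$ is bounded as claimed.

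The main obstacle is the inductive bookkeeping of the last paragraph: verifying that the merge overheads from \cref{prop:combine_split_sol} telescope cleanly against the $|\delta_F(s)|$-slack of \cref{lem:algo_unsplittable}, and fixing the constants --- above all $k=\frac{64(8+3\epsilon)}{\epsilon^3}$ and $\gamma=\tfrac{\epsilon}{4}|T|$ --- so that the link overhead stays below $\tfrac{\epsilon}{2}\alpha x(L)$ while the number of contractions stays below $\gamma$. This arithmetic is essentially that of \cite{cecchetto_2021_bridging}; the genuinely new point is that the $\lambda$-contraction minors returned by \cref{lem:algo_unsplittable} can be assembled, together with the splits, into a single $(\gamma,k)$-splitting, which reduces to the inequality $\lambda\cdot|\mathcal{P}|\le\gamma$.
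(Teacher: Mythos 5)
Your proposal follows the same overall route as the paper: recursive splitting, the base case \cref{lem:algo_unsplittable}, merging via \cref{prop:combine_split_sol}, and assembling the contractions produced by the base-case calls into a single $(\gamma,k)$-splitting; your global accounting ($|T|\le 2x(L)$, at most roughly $2|T|/k$ split-off pieces, each apex of $x$-degree at most $\tfrac{16}{\epsilon}$, and $\lambda\cdot|\mathcal{P}|\le\gamma$) can indeed be made to close with the stated $k$, whereas the paper organizes the same arithmetic as an induction on the vertex count, absorbing each split's overhead locally via $(2+\tfrac{3\epsilon}{4})\,x(\delta_L(C))\le\tfrac{\epsilon}{4}\,x(L_C)$, which uses $x(L_C)\ge\tfrac{k}{4}$, and passing the contraction budget $\gamma'=\tfrac{\epsilon}{4}|T_{V\setminus C}|$ to the residual side. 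Your explicit initial check $x(\delta_L(t))\ge 1$ for every leaf (else separate) is a legitimate and in fact needed step for either accounting.

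There is, however, one genuine gap in the way you set up the recursion. You split at ``a $2$-cut both of whose sides contain enough leaves'' and then \emph{recurse on both sides}, while your bookkeeping assumes that the merge cost $|\delta_{F_C}(s_C)|-1$ paid at each split is absorbed by the ``$+\,|\delta_{F_C}(s_C)|$'' slack that \cref{lem:algo_unsplittable} supplies ``for the $C$-side''. This only type-checks if the split-off side $\mathcal{I}_C$ is itself already unsplittable, i.e.\ $k$-wide, and is handled by a \emph{single} invocation of \cref{lem:algo_unsplittable} with $s_C$ as its designated vertex. If $\mathcal{I}_C$ is split further, its solution $F_C$ is assembled from several leaf pieces and from merge links, the apex $s_C$ survives into (or is merged into apexes of) deeper pieces, and $|\delta_{F_C}(s_C)|$ is the degree of $s_C$ in this \emph{assembled} solution; \cref{lem:algo_unsplittable} only controls the solution's degree at one designated vertex of one piece, so nothing bounds this quantity, and a single piece may in general contain several contracted-apex leaves whose degrees would all need to be controlled (which would also blow up the per-piece contraction budget $\lambda$). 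The paper avoids this precisely by choosing $C$ to be a \emph{minimal} $2$-cut with $|C\cap T|>\tfrac{k}{2}$ and $C\subsetneq V\setminus\{r\}$: minimality (via Lemma~12 of \cite{cecchetto_2021_bridging}) guarantees that $\mathcal{I}_C$, rooted at its apex $s$, is $k$-wide, so every merge overhead is exactly the slack term of one base-case call, and only the residual side is recursed on. With that one structural choice added (and the splittability criterion corrected accordingly), your flattened accounting reduces to the paper's induction and goes through; without it, the telescoping step as written fails.
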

\begin{proof}
The proof of this lemma follows from Lemma~33 in~\cite{cecchetto_2021_bridging} and proceeds by induction on the number of vertices.
We fix an arbitrary root $r\in V$.
If there is no $2$-cut $C\subsetneq V\setminus \{r\}$ such that $|C\cap T|> \tfrac{k}{2}$, then, following the terminology of~\cite{cecchetto_2021_bridging}, the instances is called \emph{unsplittable}. (More precisely, for an instance to be unsplittable no such cut $C$ must exist that is not $x$-heavy; however, as no $x$-heavy cuts exist in our instance by assumption, this condition can be dropped here.)
By \cite[Lemma 12]{cecchetto_2021_bridging}, every unsplittable instance is $k$-wide, and we can thus apply algorithm $\Ascr$ to $\Iscr$ to obtain a feasible solution $L'$.
If $|L'| \le \alpha \cdot x(L) + f(\Iscr)$, then 
\begin{equation*}
    |L'| \le \alpha \cdot x(L) + f(\Iscr) \le \alpha \cdot \left(1+\frac{\epsilon}{2}\right) \cdot x(L) +\  \max_{\substack{\mathcal{S}\!\text{ a $(\gamma,k)$-}\\ \text{splitting of }\Iscr}}\quad \sum_{\mathcal{J}\!\text{ split minor of }\mathcal{S}} f(\mathcal{J}),
\end{equation*}
because $\Iscr$ itself is a $(\gamma,k)$-split-minor. In this case we return $L'$. 
Otherwise, we return $w = \chi^{L}$, which fulfills $w^T x < w^T x'$ for all $x' \in P_{\mathrm{CacAP}}(\mathcal{I})$ because $\Ascr$ returns a solution of cost at most $\alpha\cdot \OPT(\Jscr) + f(\Jscr)$.

Hence, we may assume that $ \Iscr$ is splittable at some cut $C$, i.e., there is a cut $C$ that fulfills $|T\cap C| > \tfrac{k}{2}$ and $C\subsetneq V\setminus \{r\}$. We choose $C\in \mathcal{C}$ to be a minimal $2$-cut with these properties.

We denote by $\mathcal{I}_C=(G_C, L_C)$ the instance arising from $\mathcal{I}$ by contracting $V\setminus C$ and we choose as new root the vertex $s$ arising from the contraction of $V\setminus C$. (Note that $\{r\} \subsetneq V\setminus C$.)
Because $C$ was chosen minimally, the instance $\mathcal{I}_C$ with root $s$ is $k$-wide (this is again a consequence of \cite[Lemma 12]{cecchetto_2021_bridging}).
We apply Lemma~\ref{lem:algo_unsplittable} to $\mathcal{I}_C$ to either obtain a CacAP solution $F_C$ for $\mathcal{I}_C$ with
\begin{equation}\label{eq:F_C}
\begin{aligned}
|F_C| + |\delta_{F_C}(s)| \leq &\left(1+\tfrac{\varepsilon}{4}\right)\cdot \alpha\cdot (x(L_C) + x(\delta_L(s))) \\ &+ \max\{f(\Iscr') \colon \Iscr' \text{ is a $\lambda$-contraction minor for $\Iscr_C$}\},
\end{aligned}
\end{equation}
where $\lambda= \frac{1+\epsilon'}{\epsilon'}\cdot \frac{16}{\epsilon}$,
or a vector $w_C\in \mathbb{R}^{L_C}$ satisfying $w_{C}^T x_{C} < w_{C}^T x'$ for all $x' \in P_{\mathrm{CacAP}}(\mathcal{I}_C)$, where $x_{C}$ is the restriction of the vector $x$ to the links in $L_C$.

Moreover, we denote by $\mathcal{I}_{V\setminus C}=(G_{V\setminus C}, L_{V\setminus C})$ the CacAP instance arising from $\mathcal{I}$ by contracting $C$. By induction, we can obtain a CacAP solution $F_{V\setminus C}$ for $\Iscr_{V\setminus C}$ such that
\begin{equation}\label{eq:F_V-C}
|F_{V\setminus C}|\leq \alpha \cdot\left(1+\tfrac{\varepsilon}{2}\right) \cdot x(L_{V\setminus C}) + \max_{\substack{\mathcal{S}\!\text{ a $(\gamma',k)$-}\\ \text{splitting of }\Iscr_{V\setminus C}}}\quad \sum_{\mathcal{J}\!\text{ split minor of }\mathcal{S}} f(\mathcal{J}),
\end{equation}
where $\gamma'= \tfrac{\epsilon}{4} \cdot |T_{V\setminus C}|$ with $T_{V\setminus C}$ being the set of leaves of $G_{V\setminus C}$, or a vector $w_{\bar{C}}\in \mathbb{R}^{V\setminus C}$ such that $w_{\bar{C}}^T x_{\bar{C}} < w_{\bar{C}}^T x'$ for all $x' \in P_{\mathrm{CacAP}}(\mathcal{I}_{V\setminus C})$, where $x_{\bar{C}}$ is the vector $x$ restricted to the entries corresponding to links in $L_{V\setminus C}$.

If we obtained a vector $w_C\in \mathbb{R}^{L_C}$ such that $w_C^T x_C < w_C^T x'$ for all $x' \in P_{\mathrm{CacAP}}(\mathcal{I}_C)$, we can extend it to a vector $w\in \mathbb{R}^L$ such that $w^T x < w^T x'$ for all $x' \in P_{\mathrm{CacAP}}(\mathcal{I})$ by setting $w_\ell =0$ for all $\ell \in L \setminus L_C$.
Here we use that restricting a CacAP solution $F$ of $\mathcal{I}$ to $F\cap L_C$ yields a CacAP solution for $\mathcal{I}_C$.
We can proceed analogously if we have a vector $w_{\bar{C}}\in \mathbb{R}^{L_{V\setminus C}}$ such that $w_{\bar{C}}^T x < w_{\bar{C}}^T x'$ for all $x' \in P_{\mathrm{CacAP}}(\Iscr_{V\setminus C})$.

Otherwise, we obtained solutions $F_C$ and $F_{V\setminus C}$ as discussed above and apply Proposition~\ref{prop:combine_split_sol}.
This yields a set $F\subseteq L$ such that $F \cup F_C \cup F_{V\setminus C}$ is a CacAP solution for $\Iscr$ and $|F| \le |F_C \cap \delta_L(C)| = |\delta_{F_C}(s)|$.
Thus, combining~\eqref{eq:F_C} and~\eqref{eq:F_V-C}, we obtain

\begingroup
\allowdisplaybreaks
\begin{align}\label{eq:computation_minor}
    |F \cup F_C \cup F_{V\setminus C}|\ \le&\ |F_{V\setminus C}| + |F_C| + |\delta_{F_C}(s)| \notag\\
    \le&\ \alpha \cdot\left(1+\tfrac{\varepsilon}{2}\right) \cdot x(L_{V\setminus C}) + \max_{\substack{\mathcal{S}\!\text{ a $(\gamma',k)$-}\\ \text{splitting of }\Iscr_{V\setminus C}}}\quad \sum_{\mathcal{J}\!\text{ split minor of }\mathcal{S}} f(\mathcal{J}) \notag\\
    &\ + \left(1+\tfrac{\varepsilon}{4}\right)\cdot \alpha\cdot (x(L_C) + x(\delta_L(s))) \notag\\ &\ + \max\{f(\Iscr') \colon \Iscr' \text{ is a $\lambda$-contraction minor for $\Iscr_C$}\} \notag\\[4mm]
    =&\  \alpha \left(1+\tfrac{\epsilon}{2}\right) x(L_{V\setminus C}) +\alpha \left(1+\tfrac{\epsilon}{2}\right) x(L_C) - \alpha \tfrac{\epsilon}{4} \cdot x(L_C) \notag\\
    &\ + \alpha \left(1+\tfrac{\epsilon}{4}\right) \cdot x(\delta_L(C)) \notag\\
    &\ + \max_{\substack{\mathcal{S}\!\text{ a $(\gamma',k)$-}\notag\\ \text{splitting of }\Iscr_{V\setminus C}}}\quad \sum_{\mathcal{J}\!\text{ split minor of }\mathcal{S}} f(\mathcal{J}) \notag\\ &\ + \max\{f(\Iscr') \colon \Iscr' \text{ is a $\lambda$-contraction minor for $\Iscr_C$}\} \notag\\[4mm]
    =&\ \alpha \left(1+\tfrac{\epsilon}{2}\right) \cdot x(L) + \alpha \left(2+\tfrac{3\epsilon}{4}\right) \cdot x(\delta_L(C)) - \alpha \tfrac{\epsilon}{4} \cdot x(L_C) \\
    &\ + \max_{\substack{\mathcal{S}\!\text{ a $(\gamma',k)$-}\notag\\ \text{splitting of }\Iscr_{V\setminus C}}}\quad \sum_{\mathcal{J}\!\text{ split minor of }\mathcal{S}} f(\mathcal{J}) \notag\\ &\ + \max\{f(\Iscr') \colon \Iscr' \text{ is a $\lambda$-contraction minor for $\Iscr_C$}\},\notag
\end{align}
\endgroup
where we used $L_C \cup L_{V\setminus C} = L$ and $L_C \cap L_{V\setminus C} = \delta_L(C)$.
Because $|C\cap T| > \tfrac{k}{2}$, we have $x(L_C) \ge \frac{k}{4} = 16\cdot\frac{8+3\epsilon}{\epsilon^3}$.
Moreover, $x(\delta_L(C))\le \frac{16}{\epsilon}$ because $\delta_L(C)$ is not $x$-heavy.
This implies
\begin{equation*}
 \left(2+\tfrac{3\epsilon}{4}\right) \cdot x(\delta_L(C))\ \le\  \left(2+\tfrac{3\epsilon}{4}\right) \cdot  \tfrac{16}{\epsilon}\  \le\ \tfrac{\epsilon}{4} \cdot 16\cdot\tfrac{8+3\epsilon}{\epsilon^3}\ \le\ \tfrac{\epsilon}{4}\cdot x(L_C).
\end{equation*}
Together with \eqref{eq:computation_minor}, we obtain
\begin{equation}\label{eq:contraction_minor}
\begin{aligned}
    |F \cup F_C \cup F_{V\setminus C}|\ \le&\ \alpha \cdot \left(1+ \tfrac{\epsilon}{2}\right) \cdot x(L)\\[2mm]
    &\ + \max\{f(\Iscr') \colon \Iscr' \text{ is a $\lambda$-contraction minor for $\Iscr_C$}\} \\
    &\ + \max_{\substack{\mathcal{S}\!\text{ a $(\gamma',k)$-}\\ \text{splitting of }\Iscr_{V\setminus C}}}\quad \sum_{\mathcal{J}\!\text{ split minor of }\mathcal{S}} f(\mathcal{J}).
\end{aligned}
\end{equation}
Let $\Iscr'_C \in \text{argmax}\{f(\Iscr') \colon \Iscr' \text{ is a $\lambda$-contraction minor for $\Iscr_C$}\}$. 
It remains to prove the following claim.

\begin{claim}\label{claim:bounded_contraction}
\begin{equation*}
f(\Iscr'_C) + \max_{\substack{\mathcal{S}\!\text{ a $(\gamma',k)$-}\\ \text{splitting of }\Iscr_{V\setminus C}}}\quad \sum_{\mathcal{J}\!\text{ split minor of }\mathcal{S}} f(\mathcal{J}) \leq \max_{\substack{\mathcal{S}\!\text{ a $(\gamma,k)$-}\\ \text{splitting of }\Iscr}}\quad \sum_{\mathcal{J}\!\text{ split minor of }\mathcal{S}} f(\mathcal{J}) .
\end{equation*}
\end{claim}
\begin{claimproof}
Let $\Jscr_1,\dots,\Jscr_N$ be the split minors of a $(\gamma,k)$-splitting of $\Iscr_{V\setminus C}$ that maximize 
$$\max_{\substack{\mathcal{S}\!\text{ a $(\gamma',k)$-}\\ \text{splitting of }\Iscr_{V\setminus C}}}\quad \sum_{\mathcal{J}\!\text{ split minor of }\mathcal{S}} f(\mathcal{J}) .$$
We finish the proof by showing that $\Iscr'_C,\Jscr_1,\dots,\Jscr_N$ are the split minors of a  $(\gamma,k)$-splitting of $\Iscr$. By construction, all $\Iscr'_C,\Jscr_1,\dots,\Jscr_N$ are obtained after applying a sequence of splitting and contraction operations and deletions of links. Moreover, they are all $k$-wide. We only need to prove that the number of contractions is at most $\gamma = \tfrac{\epsilon}{4}\cdot |T|$.

Recall we choose the cut $C$ to split at such that $|C\cap T| \ge \tfrac{k}{2}$, implying $|T_{V\setminus C}| \le |T| - \tfrac{k}{2} + 1$.
Therefore, because $\Jscr_1,\dots,\Jscr_N$ are the split minors of a $(\gamma',k)$-splitting of $\Iscr_{V\setminus C}$, the number of contractions applied to obtain $\Jscr_1,\dots,\Jscr_N$ from $\Iscr_{V\setminus C}$ (by splitting, contraction, and deletion of links) is at most $\gamma' = \tfrac{\epsilon}{4} \cdot |T_{V\setminus C}| \le \tfrac{\epsilon}{4} \cdot (|T| - \tfrac{k}{2} + 1)$.
The total number of contractions to obtain $\Iscr'_C,\Jscr_1,\dots,\Jscr_N$ is thus at most 
\[
\lambda + \tfrac{\epsilon}{4} \cdot (|T| - \tfrac{k}{2} + 1) \le \tfrac{\epsilon}{4} \cdot |T| = \gamma
\]
because
\[
\lambda +\frac{\epsilon}{4}=  \frac{1+\epsilon'}{\epsilon'}\cdot \frac{16}{\epsilon}+\frac{\epsilon}{4} =\frac{1+\sfrac{\epsilon}{4}}{\sfrac{\epsilon}{4}}\cdot \frac{16}{\epsilon}+\frac{\epsilon}{4} \leq \frac{\epsilon}{8}\cdot \frac{64(8+3\epsilon)}{\epsilon^3}= \frac{\epsilon}{4}\cdot\frac{k}{2} .
\]
\end{claimproof}
\end{proof}

In order to obtain an algorithm that, given a point $x\in [0,1]^L$, returns either
 a CacAP solution as in \ref{item:round} or a vector $w\in \mathbb{R}^L$ as in \ref{item:seperate}, we proceed as follows.
First, we observe that if $x(L) \le \sfrac{|T|}{2}$ we can return a vector $w$ as in \ref{item:seperate} because $x'(L) \ge \sfrac{|T|}{2}$ for every vector $x'\in P_{\mathrm{CacAP}}(\mathcal{I})  $.
Hence, we will assume in the following that this is not the case.

Given $x\in[0,1]^L$ and $\Wscr = \{C\in \mathcal{C} : C\text{ is $x$-heavy}\}$, we apply \cite[Theorem~23]{cecchetto_2021_bridging} to obtain a cheap heavy cut covering, i.e., a set $L_H\subseteq L$ of links covering all heavy cuts with $|L_H|\leq \frac{\epsilon}{2}\cdot x(L)$.
Then we consider the residual instance $\Iscr_H$ of $\Iscr$ with respect to $L_H$ (see \cref{def:residual_instance}) and apply \cref{lem:round_and_cut_without_heavy}.
Because $\Iscr_H$ arises from $\Iscr$ by the contraction of $|L_H|$ many links, every $(\gamma, k)$-splitting of $\Iscr_H$ is a $(\gamma+|L_H|, k)$-splitting of $\Iscr$.
Hence, the application of \cref{lem:round_and_cut_without_heavy} yields either a separating hyperplane or a CacAP solution $F$ for $\Iscr_H$ with 

\begin{equation*}
    |F|\le \alpha \cdot \left(1+\frac{\epsilon}{2}\right) \cdot x(L) +\  \max_{\substack{\mathcal{S}\!\text{ an $(\epsilon \cdot x(L) ,k)$-}\\ \text{splitting of }\Iscr}}\quad \sum_{\mathcal{J}\!\text{ split minor of }\mathcal{S}} f(\mathcal{J}) ,
\end{equation*}
where we used $\gamma + |L_H| \le \tfrac{\epsilon}{4} \cdot |T| + \frac{\epsilon}{2}\cdot x(L) \le \epsilon \cdot x(L)$. 
Then \cite[Corollary~20]{cecchetto_2021_bridging} implies that $F\cup L_H$ is  a solution for the instance $\Iscr$ with the desired guarantee~\ref{item:round}.

\section{Combining our matching-based approach with the stack analysis (Proof of \texorpdfstring{\cref{thm:main}}{Theorem 1})}\label{sec:stack_analysis}

To obtain approximation factors below $\tfrac{4}{3}$, we use the stack analysis approach from \cite{cecchetto_2021_bridging}, which strengthens the procedure guaranteed by \cref{lem:fpt_in_nb_leaves}. 
In this section we explain how we can adapt this stack analysis approach from \cite{cecchetto_2021_bridging} for our purposes and we prove that it leads to the claimed approximation ratio of $\apxfac$.

Let us first recall the definition of \emph{shadows} and  \emph{minimality} of links from \cite{cecchetto_2021_bridging}.

\begin{definition}
Let $(G,L)$ be a CacAP instance and let $\{u,v\}$ be a link.
Then $\{\bar u, \bar v\}\in \begin{psmallmatrix} V\\ 2 \end{psmallmatrix}$ is a \emph{shadow} of the link $\{u,v\}$ if $\bar u$ and $\bar v$ are vertices that lie on every $u$-$v$ path in $G$.
A shadow $\bar \ell$ of link $\ell$ is a \emph{strict shadow} of $\ell$ if $\bar \ell \ne \ell$.
\end{definition}

Following \cite{cecchetto_2021_bridging}, we say that a link $\ell_1\in L$ is \emph{minimal with respect to $\ell_2\in L$} if for any strict shadow $\bar{\ell_1}\in \begin{psmallmatrix} V\\ 2 \end{psmallmatrix}$ of $\ell_1$, the $2$-cuts covered by $\{\bar{\ell_1},\ell_2\}$ are a strict subset of those covered by $\{\ell_1,\ell_2\}$ and the 2-cuts covered by $\{\ell_2\}$ are a strict subset of those covered by $\{\ell_1,\ell_2\}$; or formally, for any strict shadow $\bar{\ell_1}$ of $\ell_1$,
\begin{equation*}
\begin{aligned}
\{ C\in \mathcal{C} : \{\bar\ell_1,\ell_2\} \cap \delta(C) \ne \emptyset\} \subsetneq&\ \{ C  \in \mathcal{C} : \{\ell_1, \ell_2\} \cap \delta(C) \ne \emptyset\},\text{ and} \\
\{ C\in \mathcal{C} : \{\ell_2\} \cap \delta(C) \ne \emptyset\} \subsetneq&\ \{ C  \in \mathcal{C} : \{\ell_1, \ell_2\} \cap \delta(C) \ne \emptyset\}
.
\end{aligned}
\end{equation*}
We remark that the second of the two conditions above is implied by the first one whenever $\ell_1$ admits a strict shadow.
However, for the case where $\ell_1$ does not admit a strict shadow, which corresponds to both endpoints of $\ell_1$ lying in the same cycle of the cactus, the second condition is necessary.

We now introduce the notion of \emph{weakly $L_{\rm{cross}}$-minimality}.
A similar notion, called $L_{\rm{cross}}$-minimality, has been introduced in \cite{cecchetto_2021_bridging}.
The reason why we work with the notion of weak $L_{\rm{cross}}$-minimality is that we will need this later to combine the stack analysis approach with our matching-based approach from \cref{sec:matching_approach}.

\begin{definition}
A set $F\subseteq L$ is called weakly 
$L_{\mathrm{cross}}$-minimal if for every two distinct links $\ell,\ell'\in F_{\mathrm{cross}}$ the link $\ell$ is minimal with respect to $\ell'$.
\end{definition}

The property of \emph{$L_{\rm{cross}}$-minimality} introduced in \cite{cecchetto_2021_bridging} requires in addition that every cross-link $\ell_1\in F_{\mathrm{cross}}$ is minimal with respect to every in-link $\ell_2 \in F_{\mathrm{in}}$.
However, for leaf-to-leaf instances, the notion of weak $L_{\rm{cross}}$-minimality will be sufficient for the stack analysis approach.

The stack analysis approach first solves a linear program and then rounds the LP solution to a CacAP solution.
We will next introduce the polytope over which this linear program optimizes.

For a $k$-wide CacAP instance $(G,L)$ with root $r$ let $G_1,\dots, G_p$ denote the principal subcacti.
Moreover, for $i\in\{1,\dots,p\}$, we define $L_i \subseteq L$ to be the set of links that have at least one endpoint in $G_i$ that is distinct from the root $r$.
We define
\[
P^{\min}(G_i,L_i) \coloneqq \mathrm{conv}\left(\{ \chi^F : F \subseteq L_i\text{ is a weakly  $L_{\mathrm{cross}}$-minimal feasible solution for }G_i\}\right)
\]
to be the convex hull of incidence vectors of weakly $L_{\mathrm{cross}}$-minimal feasible CacAP solutions for $G_i$, where a feasible  CacAP solutions for $G_i$ is a set of links that covers every $C\in \mathcal{C}$ that is a subset of the vertices of $G_i$. 
The same proof as for \cite[Lemma~42]{cecchetto_2021_bridging} shows that we can optimize over $P^{\min}(G_i)$ in polynomial time when $k$ is constant.
Indeed, we can first observe that any weakly $L_{\mathrm{cross}}$-minimal solution for $(G_i,L_i)$ contains at most $k$ cross-links and we can ``guess'' those in polynomial time by enumerating all possible choices.
Then we can complete this set of cross-links in a cheapest possible way, using that instances of CacAP with a constant number of terminals are efficiently solvable \cite{basavaraju_2014_parameterized}.

This implies that we can also optimize efficiently over the polytope
\begin{equation*}
P^{\mathrm{min}}_{\mathrm{bundle}}(G,L) \coloneqq \left\{ x\in [0,1]^L : x|_{L_i} \in P^{\min}(G_i,L_i) \text{ for all }i\in \{1,\dots,p\} \right\}
\end{equation*}
because the fact that we can efficiently optimize over $P^{\min}(G_i,L_i)$ implies that we can separate over $P^{\min}(G_i,L_i)$ and thus over $P^{\mathrm{min}}_{\mathrm{bundle}}(G,L) $.
For a more detailed description of how we can optimize over $P^{\mathrm{min}}_{\mathrm{bundle}}(G,L)$ we refer to \cite{cecchetto_2021_bridging} where an analogous reasoning has been used.

\cite{cecchetto_2021_bridging} implies that we can round vectors in $P^{\mathrm{min}}_{\mathrm{bundle}}(G,L)$ for $k$-wide instances of Leaf-to-Leaf+ CacAP with a certain guarantee on the cost of the CacAP.
Formally, \cite{cecchetto_2021_bridging} considered general $k$-wide instances of CacAP and defined $P^{\mathrm{min}}_{\mathrm{bundle}}(G,L)$ in a slightly different way, requiring that the vectors $x|_{L_i}$ for $i\in \{1,\dots,p\}$ are convex combinations of incidence vectors of $L_{\mathrm{cross}}$-minimal solutions (instead of weakly $L_{\mathrm{cross}}$-minimal solutions).
However, this stronger property is used only in a single place of the proof, namely the proof of Lemma~52 in \cite{cecchetto_2021_bridging}, which is a trivial statement for leaf-to-leaf+ instances.\footnote{Lemma~52 from \cite{cecchetto_2021_bridging} provides a lower bound on $x(L_{\mathrm{up}})$ where $L_{\mathrm{up}} \subseteq L$. 
This lower bound is $0$ for every leaf-to-leaf+ instance and thus the statement of Lemma~52 from \cite{cecchetto_2021_bridging} trivially holds in our setting.}
Therefore, the guarantees from \cite{cecchetto_2021_bridging} still apply in our setting despite the slightly different definition of $P^{\mathrm{min}}_{\mathrm{bundle}}(G,L)$.

The precise guarantee on the solution that we obtain by rounding a point $x\in P^{\mathrm{min}}_{\mathrm{bundle}}(G,L)$ is given by the optimal value of some optimization problem, which we will state later in this section (in the proof of \cref{thm:main}).

Note that in general, $P^{\mathrm{min}}_{\mathrm{bundle}}(G,L)$ is not a relaxation of the CacAP problem and it might even be the case that the polyope $P^{\mathrm{min}}_{\mathrm{bundle}}(G,L)$ is empty even though the instance $(G,L)$ is feasible.
The reason is that in general, not every instance $(G,L)$ has a weakly $L_{\rm{cross}}$-minimal solution.
However, we will show that every \emph{root-shadow complete} instance of Leaf-to-Leaf+ CacAP has a weakly $L_{\rm{cross}}$-minimal optimum solution (see \cref{lem:L-cross-minimality} below).

\begin{definition}
An instance $(G,L)$ of Leaf-to-Leaf+ CacAP is \emph{root-shadow complete} if for every cross link $\{u,v\}\in L_{\rm{cross}}$, both $\{u,r\}$ and $\{r,v\}$ are contained in $L$.
Then $\{u,r\}$ and $\{r,v\}$ are the \emph{root shadows} of $\{u,v\}$.
\end{definition}

We may always assume that the leaf-to-leaf+ instance that we are given is root-shadow complete because, given an arbitrary leaf-to-leaf+ instance $(G,L)$, we can consider its \emph{root-shadow completion}
\[
 \bigl(G,L \cup \bigl\{\{u,r\}, \{v,r\} : \{u,v\}\in L_{\rm{cross}}\bigr\}\bigr)
\]
that we obtain by adding all root-shadows of cross-links.
Given any solution to the root-shadow completion we can always turn it into a solution of the original instance with the same number of link by replacing every root shadow by the original link.
We remark that the root-shadow completion of a leaf-to-leaf+ instance is again a leaf-to-leaf+ instance.
However, the root-shadow completion of a pure leaf-to-leaf instance is not a leaf-to-leaf instance and this is the reason why we work with leaf-to-leaf+ instances in this paper.

\begin{lemma}\label{lem:L-cross-minimality}
Every root-shadow complete instance of Leaf-to-Leaf+ CacAP has a weakly $L_{\rm{cross}}$-minimal optimum solution.
\end{lemma}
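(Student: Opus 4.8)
The plan is to start from an arbitrary optimum solution $F^*$ of the root-shadow complete Leaf-to-Leaf+ CacAP instance $(G,L)$ and transform it, without increasing its cardinality, into an optimum solution that is weakly $L_{\mathrm{cross}}$-minimal. Recall that weak $L_{\mathrm{cross}}$-minimality only constrains pairs of \emph{cross}-links: we need that for every two distinct $\ell,\ell'\in F_{\mathrm{cross}}$, the link $\ell$ is minimal with respect to $\ell'$. So in-links and the interaction between cross-links and in-links are irrelevant, and we may focus entirely on the cross-links of $F^*$.

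The key structural observation I would exploit is that in a leaf-to-leaf+ instance, a cross-link $\ell=\{u,v\}$ has both of $u,v$ in \emph{different} components of $G-r$ (or one of them is $r$ itself). Consequently the unique $u$-$v$ path in the cactus passes through $r$, so $r$ lies on every $u$-$v$ path and hence $\{u,r\}$ and $\{r,v\}$ are shadows of $\ell$ — these are exactly the root shadows, which are present in $L$ by root-shadow completeness. The crucial point is that a cross-link $\ell$ covers a $2$-cut $C\in\mathcal C$ (with $r\notin C$) if and only if exactly one endpoint of $\ell$ lies in $C$, and since $r\notin C$ this means the set of $2$-cuts covered by $\ell=\{u,v\}$ is the \emph{union} of the cuts covered by $\{u,r\}$ and the cuts covered by $\{r,v\}$. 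Moreover any strict shadow of a cross-link $\ell$ covers a strict subset of the cuts covered by $\ell$ (this is standard: shrinking a link toward the root-path can only lose cuts), and when the endpoints of $\ell$ are in distinct components the only ``maximal'' shadows of $\ell$ containing a fixed endpoint $u$ are obtained by moving the other endpoint — there is essentially a unique minimal representative. I would make this precise: the cross-link $\{u,v\}$ is minimal with respect to \emph{any} other cross-link $\ell'$ automatically, because every strict shadow $\bar\ell$ of $\{u,v\}$ either (i) has one endpoint not on the $u$-$v$ path and is therefore not a shadow — ruled out — or (ii) strictly shrinks the set of covered cuts on at least one ``side'', and one checks that this strict loss is not compensated by $\ell'$. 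Concretely, since $\ell'$ is itself a cross-link, it too splits as the union of two root-shadow cuts; a short case analysis on which endpoints of $\bar\ell$ and $\ell'$ coincide shows the strict-subset conditions in the definition of minimality hold. The only slightly delicate sub-case is the one flagged in the paper just before the lemma: a cross-link with both endpoints in the \emph{same} cycle as $r$ admits no strict shadow at all, in which case minimality reduces to the second condition ``$\{C:\{\ell'\}\cap\delta(C)\ne\emptyset\}\subsetneq\{C:\{\ell,\ell'\}\cap\delta(C)\ne\emptyset\}$''; here I would use that $\ell$ covers at least one $2$-cut not covered by $\ell'$ (otherwise $\ell$ could be dropped, contradicting optimality — or replaced, giving a still-optimal solution where the redundancy is removed).

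So the argument I would run is: take an optimum $F^*$; as long as some pair $\ell,\ell'\in F^*_{\mathrm{cross}}$ witnesses non-minimality, i.e.\ there is a strict shadow $\bar\ell$ of $\ell$ with $\{\bar\ell,\ell'\}$ covering (not strictly less than, hence) all cuts that $\{\ell,\ell'\}$ covers, replace $\ell$ by $\bar\ell$ — or, in the no-strict-shadow corner case, delete a link made redundant by the others — and argue that (a) the result is still a feasible solution, because replacing a link by a shadow keeps every cut covered that was covered via that link or is covered by something else, and we only do the replacement when no coverage is lost; (b) the cardinality does not increase; (c) the process terminates, using a potential such as $\sum_{\ell\in F_{\mathrm{cross}}}|\text{cuts covered only by }\ell|$, or equivalently the multiset of ``sizes'' of the endpoint-to-root shadow cuts, which strictly decreases under each replacement; upon termination, no non-minimal pair of cross-links remains, i.e.\ $F^*$ is weakly $L_{\mathrm{cross}}$-minimal. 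Since root-shadows are always available in $L$, every replacement step is realizable within $L$.

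The main obstacle I anticipate is making the shadow-replacement step both \emph{local} and \emph{terminating}: replacing $\ell$ by a shadow to fix the pair $(\ell,\ell')$ might reintroduce a violation with a third cross-link $\ell''$, so one must choose the potential function carefully (so that it decreases globally, not just for the one pair) and verify that a replacement justified by minimality with respect to $\ell'$ cannot increase the ``badness'' against $\ell''$ — this follows because the replacement shadow covers only cuts already covered by $\ell$, so it can only \emph{remove} the covered-cut redundancy, never create new coverage that some other link was relying on. I would also need to be careful in the corner case of links with no strict shadows, where the repair is a deletion rather than a replacement; here feasibility is what needs checking, and it follows from the fact that we only delete $\ell$ when every $2$-cut covered by $\ell$ is also covered by some link in $F^*\setminus\{\ell\}$ — but if that held for the original optimum with $\ell$ present, then $F^*\setminus\{\ell\}$ would be a smaller feasible solution, contradicting optimality; so in fact this corner case contributes no genuine violations and the second minimality condition is automatically satisfied by every cross-link of an optimum solution.
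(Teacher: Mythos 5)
Your overall strategy---an exchange argument on an optimum solution that swaps cross-links for shadows made available by root-shadow completeness---is the same idea as the paper's proof, but the pivotal deduction is missing, and one of your intermediate claims is false. You assert that a cross-link is \emph{automatically} minimal with respect to any other cross-link because the strict loss incurred by a strict shadow ``is not compensated by $\ell'$''. It is compensated exactly when the two cross-links share a leaf endpoint: for $\ell=\{s,t\}$ and $\ell'=\{t,u\}$, the root shadow $\{s,r\}$ of $\ell$ together with $\ell'$ covers the same $2$-cuts as $\{\ell,\ell'\}$ (the coverage of a cross-link is the union of the coverages of its two root shadows, and no cut in $\mathcal{C}$ contains $r$), so $\ell$ is \emph{not} minimal with respect to $\ell'$. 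This shared-endpoint configuration is precisely what the lemma must eliminate, and identifying it is the heart of the paper's short proof: since link endpoints are leaves (or $r$), the singleton cuts $\{s\}$ and $\{t\}$ are $2$-cuts, a strict shadow of $\ell$ covers at most one of them, so any violation of minimality forces $\ell'$ to contain an endpoint of $\ell$; the root shadow at the other endpoint can then replace $\ell$ without losing any coverage.

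Because you never make this deduction, your repair loop also has a realizability gap: you ``replace $\ell$ by $\bar\ell$'', where $\bar\ell$ is the strict shadow witnessing the violation, but in a leaf-to-leaf+ instance such a shadow typically has a non-leaf, non-root endpoint and is therefore \emph{not} in $L$; root-shadow completeness only supplies the two root shadows, and justifying that a root shadow can always serve as the replacement requires exactly the shared-endpoint observation above (your remark that ``root-shadows are always available in $L$'' does not by itself show they suffice). Once that observation is in place, the iteration and the potential function become unnecessary: the paper takes an optimum solution with the fewest cross-links, performs the single root-shadow swap, which preserves feasibility and cardinality while reducing the number of cross-links, and obtains an immediate contradiction. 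Your treatment of the no-strict-shadow corner case is fine, and termination of your loop could be salvaged with a potential such as the total number of link--cut incidences, but as written the proposal does not establish the lemma.
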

\begin{proof}
Let $\OPT$ be a an optimum solution of a Leaf-to-Leaf+ CacAP instance with a minimum number of cross-links among all optimum solutions.
We claim that $\OPT$ is weakly $L_{\rm{cross}}$-minimal.
Suppose this is not the case.
Then there exist cross-links $\ell_1,\ell_2 \in \OPT_{\mathrm{cross}}$ such that $\ell_1$ is not minimal with respect to $\ell_2$.
Let $\ell_1=\{s,t\}$ and note that any strict shadow of $\ell_1$, i.e., any shadow $\bar{\ell_1}\neq \ell_1$ of $\ell_1$, covers at most one of the 2-cuts $\{s\},\{t\} \in \Cscr$.
Thus, one of the endpoints of $\ell_1$, say $t$, must be also an endpoint of $\ell_2$, because otherwise for every strict shadow $\bar \ell_1$ of $\ell_1$ at least one of the 2-cuts $\{s\},\{t\} \in \Cscr$ would be uncovered by $\{\bar \ell_1,\ell_2\}$, even though it was covered by $\{\ell_1,\ell_2\}$.
Because $\ell_1=\{s,t\}$ and $\ell_2$ are both cross-links and have the common endpoint $t$, the set of 2-cuts in $\Cscr$ covered by $\{s,r\}$ and $\ell_2$ is the same as the set of 2-cuts covered by $\ell_1$ and $\ell_2$.
Hence, we can replace the link $\ell_1$ in $\OPT$ by its root-shadow $\{s,r\}$ and maintain an optimum solution.
However, this replacement decreased the number of cross-links, contradicting our choice of $\OPT$.
\end{proof}

\cref{lem:L-cross-minimality} implies that for any root-shadow complete instance $(G,L)$ of CacAP, we have $\min\{ x(L) : x\in P^{\mathrm{min}}_{\mathrm{bundle}}(G,L) \} \le |\OPT|$ because $\chi^{\OPT}\in P^{\mathrm{min}}_{\mathrm{bundle}}(G,L)$ for any weakly $L_{\rm{cross}}$-minimal optimum solution $\OPT$.

We now show how we combine our matching-based approach from \cref{sec:matching_approach} with the stack analysis approach from \cite{cecchetto_2021_bridging} to prove \cref{thm:main}.

\mainthm*
\begin{proof}
By \cref{thm:reduction}, it suffices to show that there is a $\rho$-approximation algorithm for $O(1)$-wide instances of Leaf-to-Leaf CacAP for some $\rho < \apxfac$.

Let us now describe such a $\rho$-approximation algorithm for $O(1)$-wide instances of Leaf-to-Leaf CacAP.
For a $k$-wide instance of Leaf-to-Leaf CacAP, we consider its root-shadow completion $(G,L)$, which is a Leaf-to-Leaf+ CacAP instance.
Let $\OPT$ be a weakly $L_{\rm{cross}}$-minimal optimum solution of the instance $(G,L)$, which exists by \cref{lem:L-cross-minimality}.

We compute a matching $M\subseteq L$ on the leaves of $G$ without bad links  that minimizes
$|M| + \frac{1}{2}|M_{\mathrm{in}}| + (|T|-2|M|)$.
By \cref{lem:matching_bound}, we have
\begin{equation*}
|M| + \frac{1}{2}|M_{\mathrm{in}}| + (|T|-2|M|)  \leq |\OPT|+\frac{1}{2}|\OPT_{\rm{in}}|.
\end{equation*}
Because $\OPT$ is weakly $L_{\rm{cross}}$-minimal, this implies that the incidence vector $\chi^{\OPT}$ of $\OPT$ is a feasible solution to the following linear program:
\begin{equation} \label{lp:combiningprocedures}
\renewcommand\arraystretch{1.5}
\begin{array}{r>{\displaystyle}rc>{\displaystyle}ll}
\min  & x(L)  & & & \\
\text{s.t.}& |M| + \frac{1}{2}|M_{\mathrm{in}}| + (|T|-2|M|)  & \leq & x(L)+\frac{1}{2}x(L_{\mathrm{in}}) & \\
& x & \in & P_{\mathrm{bundle}}^{\mathrm{min}}(G,L) .
\end{array}
\end{equation}
We compute an optimum solution $x$ to the LP~\eqref{lp:combiningprocedures}.
Because $\chi^{\OPT}$ is a feasible solution to~\eqref{lp:combiningprocedures}, we have $x(L) \le |\OPT|$.

By \cref{lem:matching_bound}, we can compute a feasible CacAP solution $F_1$ with 
\begin{equation}\label{eq:guarantee_matching_rounding}
|F_1| \ \le\ |M| + \tfrac{1}{2} |M_{\rm{in}}| + (|T| -2|M|)\ \leq\ x(L)+\frac{1}{2}x(L_{\mathrm{in}}).
\end{equation}
Finally, we apply the rounding procedure from the stack analysis approach from \cite{cecchetto_2021_bridging} to obtain a solution $F_2$ and return the cheaper of the two solutions $F_1$ and $F_2$.

In order to state the guarantee on $|F_2|$ that we obtain from \cite{cecchetto_2021_bridging}, we need the following definitions.
We define the function $g : [0,1] \to \mathbb{R}_{\ge 0}$ as 
\begin{equation*}
    g(\lambda) \coloneqq \lambda \cdot \big(1- e^{-\lambda}\big)
\end{equation*}
and the function $\mathrm{gain}: [0,1]^2 \to \mathbb{R}$ by
\begin{equation*}
 \mathrm{gain}(\lambda, \eta) \coloneqq \begin{cases}
   \lambda \left( e^{- \eta} -1 + \eta \right)\cdot e^{- \lambda + \eta} &\text{ if }\eta > \frac{1}{2}\lambda \\ 
   \lambda \left( e^{- \eta} -1 + \eta \right)\cdot\left(1- \lambda + \eta\right) & \text{ otherwise} .
 \end{cases}
\end{equation*}
Moreover, for a leaf $t\in T$, we define $\lambda_t^0 \coloneqq x(\delta(t) \cap L_{\rm{cross}})$.
By \cite{cecchetto_2021_bridging},\footnote{More precisely, the statement we use here follows essentially from Lemma~58 in~\cite{cecchetto_2021_bridging}. Note that Condition~\eqref{eq:b_condition_checked_numerically} is for leaf-to-leaf+ instances equivalent to equation~(27) in \cite{cecchetto_2021_bridging}.
The condition  $b\in [\frac{5}{12},\frac{1}{2}]$ is equivalent to the condition $z_1\le z_2 \le 0 $ below equation~(18) in \cite{cecchetto_2021_bridging}.
We remark that in a leaf-to-leaf+ instance, the values $\lambda_t^1$, $\mu^0_t$, and $\mu^1_t$ appearing in \cite{cecchetto_2021_bridging} are all equal to zero by definition of these values and thus we eliminated their occurrences here.
In particular, our definitions of the functions $g$ and $\mathrm{gain}$ then coincide with those from \cite{cecchetto_2021_bridging}.
Finally, we again highlight that we work with a slightly different definition of $P_{\mathrm{bundle}}^{\mathrm{min}}(G,L)$ than \cite{cecchetto_2021_bridging} but the only place where the stronger formulation in \cite{cecchetto_2021_bridging} is needed is in the proof of Lemma~52 in \cite{cecchetto_2021_bridging}, which is trivial for leaf-to-leaf+ instances.} 
we can round any vector $x\in P_{\mathrm{bundle}}^{\mathrm{min}}(G,L)$ to a CacAP solution $F_2$ such that 
\begin{equation}\label{eq:stack_algo}
    |F_2| \leq x(L_{\mathrm{in}}) + 2 \cdot x(L_{\mathrm{cross}}) - b \cdot \sum_{t\in T} g(\lambda_t^0)
\end{equation}
if $b\in [\frac{5}{12},\frac{1}{2}]$ is such that the following expression is non-negative for all $v,w \in T$
\begin{equation}\label{eq:b_condition_checked_numerically}
\begin{aligned}
&\ \frac{b}{\lambda_w^0 - \eta_w^{c_v}} 
         \cdot  \mathrm{gain}(\lambda_w^0, \eta^{c_v}_w)  \\
&\ - s_{vw} \cdot  \left(b-\tfrac{1}{3}\right) - (\eta_w^{c_v} - s_{vw}) \cdot \left(2\left(b-\tfrac{2}{5}\right)-\tfrac{1}{30}\right) \\
&+ \max\{0,\ x(S_v) - \eta_w^{c_v}\} \cdot \left( \tfrac{1}{2}-b \right) + \max\{0,\ 1 - x(S_v) -\eta_w^{c_v} +s_{vw}\} \cdot \left(1-b\right),
\end{aligned}
\end{equation}
for all $s_{vw}$ ($v,w\in T$) and $\eta^{c_v}_w$ ($v,w\in T$) that fulfill
\begin{equation*}
    \begin{aligned}
        & 0 \leq s_{vw} \leq \eta_w^{c_v} \leq \lambda_w^0 \leq 1 \\
        & 0 \leq s_{vw} \leq \lambda_v^0 \leq 1  .
    \end{aligned}
\end{equation*}
Using a computer program, we can verify that Condition~\eqref{eq:b_condition_checked_numerically} is fulfilled for $b\coloneqq 0.452$.

Let $\alpha \coloneqq \frac{x(L{\mathrm{cross}})}{x(L)}$.
Then, by~\eqref{eq:stack_algo}, the CacAP solution $F_2$ obtained from the stack analysis approach fulfills
\begin{equation}\label{eq:result_stack_algo}
\begin{aligned}
|F_2|\ \le&\ x(L_{\mathrm{in}}) + 2 \cdot x(L_{\mathrm{cross}}) - b \cdot \sum_{v\in T} g(\lambda^0_v) \\
=&\ \left(1 + \alpha - \frac{b}{x(L)} \sum_{v\in T} g(\lambda^0_v)  \right) \cdot x(L) \\
=&\ \left(1 + \alpha - \frac{b \cdot 2\alpha}{\sum_{v\in T} \lambda^0_v } \sum_{v\in T} g(\lambda^0_v)  \right) \cdot x(L) . 
\end{aligned}
\end{equation}

The solution we return has $\min\{ |F_1|, |F_2|\}$ many links and thus by combining \eqref{eq:guarantee_matching_rounding}, $x(L_{\mathrm{in}}) = (1 - \alpha) \cdot x(L)$, and \eqref{eq:result_stack_algo}, we obtain an approximation ratio of
\begin{equation*}
\max \left\{\ \min\left\{\ \frac{3}{2}-\frac{1}{2}\alpha, 1 + \alpha - \frac{b \cdot 2 \alpha}{\sum_{v\in T} \lambda^0_v } \sum_{v\in T} g(\lambda^0_v)\ \right\} :
\begin{array}{l}
     \alpha \in [0,1], \ \lambda^0_v \in [0,1] \ \forall \ v\in T, \\
      \alpha\cdot |T|\leq \sum_{v\in T} \lambda^0_v \leq |T| 
\end{array}\right\} .
\end{equation*}
Because the function $g$ is convex in $\lambda$, replacing $\lambda^0_v$ by its average value over all of the leaves $v\in T$,
does not increase the value of $\sum_{v\in T} g(\lambda^0_v)$.
Moreover, this replacement does not change $\sum_{v\in T} \lambda^0_v$.
Thus, we can simplify the optimization problem and conclude that we obtain an approximation ratio of 
\begin{equation*}
\rho \coloneqq \max \Big\{\ \min\Big\{\ \frac{3}{2}-\frac{1}{2}\alpha, 1 + \alpha - \frac{b \cdot 2 \alpha}{\lambda^0}\cdot  g(\lambda^0)\ \Big\}\  : \ 0\leq \alpha \leq \lambda^0 \leq 1 \Big\}.
\end{equation*}
As $g(\lambda) = \lambda \big( 1-e^{-\lambda}\big)$ and $b=0.452$, this yields
\begin{equation}\label{eq:final_optimization_problem}
\rho = \max \Big\{\ \min\Big\{\ \frac{3}{2}-\frac{1}{2}\alpha, 1 + \alpha - 0.904\alpha \cdot \big( 1-e^{-\lambda^0}\big) \ \Big\}\  : \ 0\leq \alpha \leq \lambda^0 \leq 1 \Big\}.
\end{equation}
The optimum value of this optimization problem is attained for $\alpha=\lambda^0$ being the unique solution to $ 6\alpha + 9\alpha e^{-\alpha} =5$.
This yields $\alpha = \lambda^0= 0.4202$ and $\rho < \apxfac$.
\end{proof}
 
\end{document}